   \DeclareSymbolFont{symbolsC}{U}{txsyc}{m}{n}
   \DeclareMathSymbol{\strictif}{\mathrel}{symbolsC}{74}
   \DeclareMathSymbol{\boxright}{\mathrel}{symbolsC}{128}
\newtheorem{df}{Definition}[section]  
\newtheorem{teo}[df]{Theorem}
\newtheorem{lem}[df]{Lemma}
\newtheorem{prop}[df]{Proposition}
\newtheorem{example}[df]{Example}
\newcommand{\dep}[2]{=\hspace{-3pt}({#1};{#2})}
\newcommand{\dfn}{\stackrel{\text{def}}{=}}
\newcommand{\cf}{\boxright}
\newcommand{\PCD}{\mathcal{PCO}}
\newcommand{\PO}{\mathcal{PO}}
\newcommand{\CO}{\mathcal{CO}}
\newcommand{\PCO}{\mathcal{PCO}}
\newcommand{\A}{\mathbb{A}}
\newcommand{\B}{\mathbb{B}}
\newcommand{\F}{\mathcal{F}}
\newcommand{\G}{\mathcal{G}}
\newcommand{\K}{\mathcal{K}}
\newcommand{\La}{\mathcal{L}}
\newcommand{\SET}[1]{\mathbf{#1}}
\newcommand{\pindep}{\rotatebox[origin=c]{90}{$\models$}}
\newcommand{\commf}[1]{}
\newcommand{\corrf}[1]{#1}
\journalname{Journal of Philosophical Logic} 
\author{Fausto Barbero and Gabriel Sandu}
\titlerunning{Multiteam semantics for interventionist counterfactuals}
\title{Multiteam semantics for interventionist counterfactuals: probabilities and causation}
\begin{document}

\setcounter{page}{1} 

\maketitle


\begin{abstract}
In \cite{BarSan2020}, we introduced an extension of team semantics (causal teams) which assigns an interpretation to interventionist counterfactuals and causal notions based on them (as e.g. in Pearl's and Woodward's manipulationist approaches to causation). We now present a further extension of this framework (causal multiteams) which allows us to talk about probabilistic causal statements. We analyze the  expressivity resources of two causal-probabilistic languages, one finitary and one infinitary. 
We show that many causal-probabilistic notions from the field of causal inference can be expressed already in the finitary language, and we prove a normal form theorem that throws new light on Pearl's ``ladder of causation''. On the other hand, we provide an exact semantic characterization of the infinitary language, which shows that this language captures precisely those causal-probabilistic statements that do not commit us to any specific interpretation of probability; and we prove that no usual, countable language is apt for this task.\\

\keywords{Team semantics \and interventionist counterfactuals \and dependence logic \and causation \and probability \and structural equation models \and infinitary logic}
\end{abstract}

\section{Introduction}



In their new book, \emph{The Book of Why: The New Science of Cause and Effect} (\cite{PeaMac2018}), Pearl and MacKenzie popularized the idea of structuring the tasks of causal inference into three levels, under the name of ``ladder of causation''.\footnote{In the technical literature, this idea was introduced (as \emph{causal hierarchy}) already in \cite{ShpPea2008}.} 

The first level, which is the most elementary, concerns associations among phenomena or, from an operational perspective, dependencies in the data.\footnote{In popularization books, the word \emph{correlation} is often used for what we call here association. Since the word correlation is also used in the statistical sciences with a more specific meaning, we refrain from using it in this paper.} For technical convenience, phenomena are described in terms of variables (e.g. pressure, temperature...) taking values. Also \emph{events}, such as ``suffering of a certain disease'' or ``being treated'', can be described in this way by using Boolean variables, say a variable  $E$ that takes value $1$ if the event occurs, and value $0$ if the event does not occur. A typical query arising at this stage is e.g. ``Given that a patient suffers from a particular
disease ($Sick = 1$), what is the probability that he dies ($Dies = 1$), if he is not receiving treatment ($Treated = 0$)?'' Such queries are usually answered just by collecting data and analyzing it by means of the rules of conditional probability. 

The second level (interventions) involves 
 an estimation of the consequences of performing a given action. A query at this second level could be: “What is the probability that a patient recovers, if we start treating him?” Thus the query forces us to calculate the impact on the patient of the action of treating him. An answer to such a question (e.g., that such a probability is $\epsilon$) is stated in the literature as a \emph{do expression}:
\[
\Pr(Dies =0 \mid do(Treated=1)) = \epsilon.
\]
Pearl also includes in the second level what may be called \emph{conditional do} expressions, which have a more general form: $\Pr(X=x \mid do(Y=y), Z=z)$. The intended referent of this expression is the probability of the event $X=x$ after an action that forces $Y$ to take value $y$, conditional on the event ``$Z$ takes value $z$ after the action''. The conditional \emph{do} expressions have been mainly used as a mathematical tool for computing the probabilities of \emph{un}conditional \emph{do} expressions in the context of Pearl's \emph{do} calculus. Nonetheless, there are interesting queries that may be formulated about them; for example, ``If a patient were treated, what's the probability that he will abandon treatment if he develops a certain side effect?'' An answer is expressed in symbols as
\[
\Pr(Abandons = 1 \mid do(Treated=1), SideE = 1) = \epsilon.
\]
 
  The second level is obviously more complex, for the impact an action might have depends on the causal connections among  
   the variables in the system. Interventions are also essential for framing the next level.


The third level (imagining) involves reasoning about hypothetical situations, things that could happen were certain actions to be performed (in the sense of the second level), and which may possibly be in direct contradiction with our knowledge base of actual facts. Pearl calls ``counterfactuals'' those statements that answer queries of this form, and we will refer to them as \emph{Pearl counterfactuals}. A typical example (from \cite{Pea2000}) of a Pearl counterfactual is
\begin{center}
(PC): The probability that a subject who died after treatment would have recovered
had he or she not been treated is $\epsilon$.
\end{center}
 The typical formal notation for (PC) is:
\[
\Pr(Dies_{Treated=0}=0 \mid Dies=1) = \epsilon.
\]

Our ambition, which underlies this paper, is to build a framework that expresses the main notions involved in the three levels of Pearl's ladder of causation and the formal relationships between them. We started this work in  \cite{BarSan2020}, where we constructed a logical framework (\emph{causal team semantics}) which provides a model-theoretic treatment of observations, interventions, and interventionist counterfactuals,  
and we showed how to define various notions of determinitic cause in the Woodward-Hitchcock
style (\cite{Hit2001},\cite{Woo2003}). There our main interest was to distinguish between accidental associations and causal, robust associations which support counterfactual reasoning, possibly
conditional on observations. To this effect we combined two existing approaches,
\emph{team semantics} and \emph{structural equations models}, and formulated the notion
of a \emph{causal team}. Essentially, a causal team is a team (a set
of assignments over a given collection of variables) expanded with a collection
of structural equations associated with some of its variables. Due to its team
component, a causal team is able to account for associations 
and for evidential, observational reasoning; and due to its structural
equations component, a causal team is able to account for interventions
and support counterfactual claims. Thus, with the notion of a causal team we
have all the ingredients needed for the analysis of the kind of mixed statements that are 
involved in Pearl's ladder of causation, except probabilities.

The main purpose of the present paper is to extend the range of
applications of causal team semantics, and its associated notion of interventionist
counterfactual, to languages that allow also the discussion of probabilities.
From a technical point of view, this requires shifting attention from teams (sets
of assignments) to multiteams (multisets of assignments). For the purpose of illustrating the characteristics of our approach, we choose first to focus on a relatively poor language called $\PCO$, 
 which has among its primitive expressions some basic probabilistic atoms, but no arithmetical operations. The analysis of this language already brings to light many important facts. For example, the multiteam framework quickly makes clear that two distinct species of logical connectives 
 are involved when discussing events or, instead, when discussing the events' probabilities.\footnote{The roles of negation, disjunction and implication will be taken by connectives $\neg, \lor,\supset$ at the level of events, and by distinct connectives $^C, \sqcup,\rightarrow$ at the level of probabilities. On the other hand, the connectives $\land$ and $\cf$ (for conjunction and counterfactual) are adequate at both levels.} 
 Secondly, the interaction of probabilistic atoms, selective implication ($\supset$) and interventionist
counterfactuals ($\cf$) allows us to express mixed causal-probabilistic
statements such as (PC) and other types of statements coming from the literature
on causal modeling or going beyond its usual scope. 
 In particular, we show that our language expresses in a perspicuous way the distinction between probabilities which condition on observations of a post-intervention system and probabilities which condition on observations of the pre-intervention system. This is crucial for distinguishing the meaning of $do$ expressions from that of Pearl counterfactuals, but it also paves the way for the study of more complex combinations of causal and probabilistic reasoning. Thirdly, our framework will allow us to take a stand on a recent debate between Pearl and Tim Maudlin. In his review (\cite{Mau2019}) of Pearl's and MacKenzie's book, Maudlin criticized Pearl's ``ladder of causation'' on the ground that the distinction between levels two and three does not seem to be justified. In his answer to Maudlin’s criticism, Pearl pointed out that although the distinction between level two and level three may not be obvious in the book, these levels differ strongly in a specific technical sense. We will review the debate (section \ref{sec: debate}) and show how our result support some of Maudlin's criticisms.

We will also have a look at what lies beyond $\PCO$; in particular, we inquire how a language could be obtained that allows to express statements about probabilistic interventionist counterfactuals in full generality. We show that such a language must be uncountable, and that the goal can be achieved simply by adding to $\PCO$ a (countably) infinitary disjunction operator. 
 More precisely, this extended language $\PCO^\omega$ characterizes the set of those statements that do not require us to take a stance about our choice of interpretation of the concept of probability; on the other hand, we show that an unwise choice of atoms and logical operators can instead force upon us a frequentist interpretation.


\textbf{Structure of the paper.} In section \ref{sec: preliminaries} we decribe the models of causal multiteam semantics, which we then use in section \ref{sec: level of events} to formalize a language $\CO$ for the (qualitative) discussion of events. In section \ref{sec: level of probabilities} we then use $\CO$ as a basis over which we build a probabilistic language $\PCO$, and we analyze in some detail what additional operators are definable in this language. In section \ref{sec: comparison with causal models} we briefly explain the connections between causal multiteam semantics and the more conventional  causal models. In section \ref{sec: definable concepts} we show how the two conditionals $\supset$ and $\cf$ can be used to build formulas that express many different kinds of probabilistic-causal concepts; on one hand, we use them to clarify the notations used in the literature on causal inference, and on the other hand we illustrate how they can be used to express even more complex causal concepts. Running against this proliferation, a normal form theorem in section \ref{subs: normal form} shows that this variety of notions can be reduced to (Boolean combinations of) three simple kinds of formulae, corresponding to probabilistic conditioning, \emph{un}conditional $do$ expressions and Pearl counterfactuals. In section
\ref{sec: infinitary language} we observe that a simple infinitary extension of $\PCO$, called $\PCO^\omega$, is expressively complete if we assume that a multiteam merely encodes a probability distribution; while extending $\PCO$ with other, seemingly innocent operators enforces a frequentist interpretation of the multiteams as collections of data. In section \ref{sec: debate} we use the apparatus developed so far and take a stand on the exchange between
Pearl and Maudlin concerning the 3 levels in the “ladder of causation”. The Appendix 
 contains the details of the proof of the semantic characterization theorem for $\PCO^\omega$.

\section{Semantic preliminaries} \label{sec: preliminaries}

\subsection{Teams and multiteams}

The protagonists of our paper will be \emph{variables}, which must not be thought of, as
they most commonly are in the logical literature, as mere, perfectly interchangeable
placeholders. Each variable (denoted by a capital letter, such as $X, Y, Z, U, V,\dots$, or by a capitalized word)
should instead be thought of, on the model of empirical sciences, as standing
for a specific magnitude (e.g. ``temperature'', ``volume'', ``position'') which can
\emph{take values} (denoted in small letters, e.g. the values of the variable $X$ will be
denoted by $x, x', x''\dots$) As a special case, we may also consider Boolean variables
which take values $1$ or $0$ to describe the occurrence, resp. non-occurrence,
of a specific event. We make an exception to these notational conventions and
reserve the capital letters $S$ and $T$ to denote causal multiteams, to be defined below. Boldface letters such as $\SET X$ or $\SET x$ will be used to denote either tuples or sets of variables (resp. values).

By a \textbf{signature} we will mean a pair $(Dom,Ran)$, where $Dom$ is a finite set of variables and $Ran$ is a function that associates to each variable $X\in Dom$ a finite set $Ran(X)$ of values (the \emph{range} of $X$). 
An \textbf{assignment} of signature $\sigma$ will be a mapping $s:Dom\rightarrow\bigcup_{X\in Dom}Ran(X)$ such that $s(X)\in Ran(X)$
for each $X\in Dom$. We will denote as $\B_\sigma$ the set of assignments of signature $\sigma$. A \textbf{team} $T$ of signature $\sigma$ will be any set of such assignments, i.e. any subset of $\B_\sigma$. Here is an example of a team $T = \{s_1, s_2, s_3\}$ with 
$Dom(T) = \{X, Y,Z,W\}$ and range $\{0, 1\}$ for each variable: 
\begin{center}
\begin{tabular}{|c|c|c|c|}
\hline
 \multicolumn{4}{|c|}{\hspace{-4pt}  X \   \ Y  \ \   Z \ \     W} \\
\hline
 0 & 0 & 0 & 1\\
\hline
 0 & 0 & 1 & 0 \\
\hline
 0 & 1 & 1 & 0 \\ 
\hline
\end{tabular}
\end{center}
Each row in this picture represents an assigment; for example, the first row encodes the assignment $s_1(X)=0$, $s_1(Y)=0$, $s_1(Z)=0$, $s_1(W)=1$. \commf{If we want to shorten the paper, I think we can skip the motivational example based on nonlocality.} In this paper we will be interested in the probability of events (as formulated in a language) in a given team. Such a probability is assessed by counting. Thus, in the team above, the fact that two assignments out of three assign $0$ to the variable $Y$ encodes the fact that the probability of the event $Y = 0$ in the team $T$ is $\frac{2}{3}$. But it is soon evident that this kind of modeling is inadequate. If we choose to ignore the data concerning some of the variables (say, $Z$ and $W$) we obtain the smaller team:
\begin{center}
\begin{tabular}{|c|c|}
\hline
 \multicolumn{2}{|c|}{\hspace{-4pt}  X \   \ Y  } \\
\hline
 0 & 0 \\
\hline
 0 & 1 \\
\hline
\end{tabular}
\end{center}
and if we proceed by counting, as before, we obtain that the probability of the event $Y=0$ is $\frac{1}{2}$. The deletion of a part of the team that should be irrelevant for the evaluation of $Y=0$ has changed the probability of the event; the problem   here is that the different choice of variables has collapsed the three distinct assignments into two. One way to remedy to this problem is to replace the teams with multisets of assignments, known in the literature as \textbf{multiteams}. There at least two different approaches to the formalization of multiteams in the literature (\cite{Vaa2017},\cite{DurHanKonMeiVir2016}). However, the complex notational apparatus
used in these two papers is mainly needed in order to handle quantification. We
will not use quantifiers in this paper; therefore we can follow a third, simpler
approach, which consists of simulating multiteams by means of teams, and thus
preserving the idea of a multiteam still being a set. This can be accomplished
by assuming that each team has an extra variable $Key$ (which is not part of the
signature, and is never mentioned in the object languages) which takes distinct
values on distinct assignments of the same team; 
\corrf{we will always assume (as an exception) that
$Key$ takes values in $\mathbb N$}. 
 In this way, we can have two assignments that agree on all
the significant variables and just differ on $Key$. If $T$ is treated as a multiteam,
then, we shall abuse notation and write $Dom(T)$ for the set of all variables of
$T$ except $Key$. The set of all extended assignment of signature $\sigma$ (whose domain is actually $Dom\cup \{Key\}$) will be denoted as $\A_\sigma$. \corrf{A \emph{finite} subset of $\A_\sigma$ will be called a \textbf{multiteam} of signature $\sigma$.}

For an illustration, here is how we render the team $T$ above as a multiteam in the new
setting:
\begin{center}
$S$: \begin{tabular}{|c|c|c|c|c|}
\hline
 \multicolumn{5}{|c|}{\hspace{-4pt} Key \  X \   \ Y  \ \   Z \ \     W} \\
\hline
 \phantom{a}0\phantom{a} & 0 & 0 & 0 & 1\\
\hline
 1 & 0 & 0 & 1 & 0 \\
\hline
 2 & 0 & 1 & 1 & 0 \\ 
\hline
\end{tabular}
\end{center}

After deleting the variables Z and W and their corresponding columns, we
obtain the multiteam:
\begin{center}
$S'$: \begin{tabular}{|c|c|c|}
\hline
 \multicolumn{3}{|c|}{\hspace{-4pt} Key \ \ X \   \ Y} \\
\hline
 \phantom{a}0\phantom{a} & 0 & 0 \\
\hline
 1 & 0 & 0 \\
\hline
 2 & 0 & 1 \\ 
\hline
\end{tabular}
\end{center}
As we see, the probability of the event $Y = 0$ in $S'$ is the same as in $S$, namely $\frac{2}{3}$. In general, in the multiteam setting the probability of an event will be a rational number in the interval $[0,1]$.\footnote{More general treatments of probabilities in team semantics have been studied, which allow for irrational values and for infinite models (\cite{HytPaoVaa2017}, \cite{DurHanKonMeiVir2018}). For the purpose of our discussion, rational-valued distributions over finite sets will suffice.}

A multiteam encodes the level of probabilities and associations, i.e. what Pearl calls the rung one of the ladder of causation. The second and third rung will be accessible only by using richer models, to which now we turn.

\subsection{Causal multiteams} \label{subs: causal multiteams}

We introduce here the notion of \emph{causal multiteam}, which can be thought of as a multiteam enriched with the kind of causal structure that might be encoded in a system of structural equations.  There will be (at least) two natural interpretations for such an object, corresponding to a subjectivist, resp. a frequentist point of view. By the former, a causal multiteam will encode probabilistic uncertainty among many possible configurations of the variables, under a fixed hypothesis about the correct causal laws. By the latter interpretation, a causal multiteam will be seen as a collection of empirical data, paired with an assumption about the process that generated such data.\footnote{It may be unrealistic, in many circumstances, that the inquirer has trust in a unique, full description of causes/processes. An approach for treating partial descriptions of the causal laws was suggested in \cite{BarSan2018} (\emph{partially defined causal teams}), and a more general and much more detailed proposal is developed in \cite{BarYan2022} (\emph{generalized causal teams}).  In this first detailed analysis of the probabilistic semantics, we choose the simplicity of the present framework.}

We will consider throughout the paper a fixed ordering of $Dom$, and write $\SET W$ for the tuple of all variables of $Dom$ listed in such order. Furthermore, we write $\SET W_X$ for the variables of $Dom\setminus\{X\}$ listed according to the appropriate restriction of the fixed order. 
Given a tuple $\SET X = (X_1,\dots, X_n)$ of variables, we denote as $Ran(\SET X)$ the cartesian product $Ran(X_1)\times\dots\times Ran(X_n)$. Given an assignment $s$ that has the variables of $\SET X$ in its domain, $s(\SET X)$ will denote the tuple $(s(X_1), \dots,s(X_n))$.

\begin{df}
A \textbf{causal multiteam} $T$ of signature $(Dom(T), Ran(T))$ 
with endogenous variables $\mathbf V\subseteq Dom(T)$ is a pair $T = (T^-,\F)$, where:
\begin{enumerate}
\item $T^-$ is a multiteam of domain $Dom(T)$ (the \emph{multiteam component} of $T$)
\item $\F$ is a function $\{(Y,\F_Y) \ | \ Y\in\SET V\}$ (the \emph{function component} of $T$) that assigns to each endogenous variable $Y$ a non-constant $|\SET W_Y|$-ary function $\F_Y: Ran(\SET W_Y)\rightarrow 
Ran(Y)$ 
\end{enumerate}
which satisfies the further \textbf{compatibility constraint}:

\begin{itemize}
\item For all $s\in T^-$, and all $Y\in \SET V$, $s(Y)= \F_Y(s(\SET W_Y))$.
\end{itemize}
\end{df}


The function component $\mathcal F$ induces an associated system of structural equations, say one equation
\[
Y := \mathcal F_Y(\SET W_Y)
\]
for each variable $Y\in \SET V$. We point out that that some of the variables in $\SET W_Y$ may not really be needed for evaluating $Y$. For example, if $Y$ is given by the structural equation $Y:= X+1$, all variables in $\SET W_Y\setminus \{X\}$ are irrelevant; we say that they are \textbf{dummy arguments} of the function $\F_Y$. The set of arguments of $\F_Y$ that are not dummy is denoted as $PA_Y$ (the set of \textbf{parents} of $Y$). 

We can associate to each causal multiteam $T$ a \textbf{causal graph} $G$, whose vertices are the variables in $Dom$ and where an arrow is drawn from each variable in $PA_Y$ to $Y$, whenever $Y$ is endogenous.  
The variables in $Dom(T)\setminus \SET V$ will be called, as usual, \textbf{exogenous}, and they are not the endpoint of any arrow in the graph.\footnote{Our presentation of causal (multi)teams here diverges in a number of ways from what was done in previous papers, such as \cite{BarSan2020}. There the causal graph is part of the definition, and its presence allows for the introduction of functions $\F_V$ with (almost) arbitrary sets of arguments $PA_V$. Such a definition allows the representation of causal laws that only differ for the presence or absence of some dummy arguments. This detail turns out to be a rather tedious technicality (see \cite{BarYan2022} for a detailed analysis) and we choose to avoid it in this paper.}
In this paper, we will typically allow only acyclic graphs. A causal multiteam with acyclic graph will be said to be \textbf{recursive}.\footnote{By analogy with the terminology used in structural equation modeling. The intuition is that, if the graph is (well-founded and) acyclic, the value  of any variable can be reconstructed by recursively applying the causal functions, once the values of the exogenous variables are known. Since we consider only finite graphs, these are automatically well-founded.} 


Let us consider an example of a (recursive) causal multiteam. Consider the table:

\begin{center}
$T^-$: \begin{tabular}{|c|c|c|}
\hline
\multicolumn{3}{|l|}{Key \hspace{3pt} $X$ \ \ $Y$} \\
\hline
 \phantom{a}0\phantom{a} & 0 & 1 \\
\hline
 1 & 1 & 2 \\
\hline
 2 & 1 & 2 \\
\hline
 3 & 2 & 3 \\ 
\hline
 4 & 2 & 3 \\ 
\hline
 5 & 2 & 3 \\ 
\hline
\end{tabular}
\end{center}
If we pair this table with a function component $\F=\{(Y,\F_Y)\}$ that associates to variable $Y$ the function $\F_Y(X):= X+1$, we obtain a causal multiteam $T=(T^-,\F)$. The reason why this is a causal multiteam is that each assignment $s$ in it is compatible with the equation, in the sense that $s(Y)=s(X)+1$. The causal graph associated with this causal multiteam is $X \rightarrow Y$. 
\section{The level of events} \label{sec: level of events}

We briefly review the (non-probabilistic) language $\CO$, which was considered in \cite{BarSan2020}, and its causal multiteam semantics, which can be adapted from teams to multiteams word-per-word. This language actually resembles the non-probabilistic languages  used in the literature on causation.  Later, a more general language will be used to discuss the probabilities of $\CO$ formulas. The language $\CO$ is, in the technical sense by which the word is used in the literature on team semantics, a \emph{flat} language. This means that the truth value of a $\CO$ formula over a multiteam can in principle always be reduced to its truth values over single assignments (enriched with causal structure). Therefore, we could in principle give Tarskian semantical clauses for this language. We prefer, however, to stick to team semantics for compatibility with the rest of the paper, and also because the team perspective makes it clearer what is the intended meaning of some of the operators (in particular, the ``selective implication''). 

\begin{df} \label{def: CO language}
For a fixed signature $\sigma = (Dom,Ran)$, we call  $\CO_{\sigma}$ (i.e. \emph{causal-observational} language) the language formed according to the following rules:
\[
\alpha ::= Y=y \mid Y\neq y \mid  \alpha\land\alpha  \mid  \alpha\lor\alpha  \mid  \alpha\supset\alpha  \mid  \SET X = \SET x \cf \alpha
\]
for $\SET{X}\cup\{Y\}\subseteq Dom$, $y\in Ran(Y),\SET x\in Ran(\SET X)$. The notation $\SET X = \SET x$ abbreviates a formula of the form $X_1 = x_1 \land\dots\land X_n = x_n$.
We will often simply write $\CO$, omitting reference to the specific signature. $\CO$ formulas of the forms $Y = y$ and $Y\neq y$ will be called \emph{literals}.
\end{df}
\noindent We have not included a negation symbol in the syntax of $\CO$ languages, since negation is definable (see the remarks after theorem \ref{thm: empty multiteam}).

We briefly review the semantics of this language. Some semantic clauses are straightforward:

\begin{itemize}
\item $T\models Y=y$ if, for all $s\in T^-$, $s(Y)=y$.
\item $T\models Y\neq y$ if, for all $s\in T^-$, $s(Y)\neq y$.
\item $T\models \alpha\land \beta$ if $T\models \alpha$ and $T\models \beta$.
\end{itemize}

The semantic clause for $\lor$ (\emph{tensor disjunction}) can more easily be defined in terms of the following notion of submodel.
\begin{df}
Given a causal multiteam $T=(T^-,\F_T)$, a \textbf{causal sub-multiteam} $S=(S^-,\F_S)$ of $T$ is a causal multiteam with the same domain and the same set of endogenous variables, which satisfies the following conditions: 1) $S^-\subseteq T^-$, 2) $\mathcal{F}_S = \mathcal{F}_T$.
\end{df} 
\noindent The definition immediately entails that, in order to identify a causal sub-multiteam $S$ of $T$, it is sufficient to specify its multiteam component. We can now define the semantics of $\lor$.

\begin{itemize}
\item $T\models \alpha\lor \beta$ if there are two causal sub-multiteams $T_1,T_2$ of $T$ such that $T_1^-\cup T_2^- = T^-$, $T_1\models \alpha$ and $T_2\models \beta$.
\end{itemize}

\noindent The seemingly unnatural clause for $\lor$ is needed in order  to extend ``conservatively'' the meaning of disjunction from Tarskian to team semantics\footnote{\label{footnote: conservative extension}For notational simplicity, let us think for a moment of (multi)teams without causal structure. What we mean by this conservativity statement is that, for any singleton team $\{s\}$, $\{s\}\models \alpha \lor \beta$ if and only if $s\models \alpha \lor \beta$ in the Tarskian sense.} while preserving flatness. 

The \emph{selective implication} $\alpha\supset\beta$ is used to assert the statement $\beta$  conditional on the observation $\alpha$. Each observation (antecedent) $\alpha$ corresponds to an operator on teams:

\begin{df}
Given a causal multiteam $T = (T^-,\F)$, and a $\CO$ formula $\alpha$, we define the causal sub-multiteam $T^\alpha$ by the condition:
\[
(T^\alpha)^- \dfn \{s\in T^-  \ | \  (\{s\},\F)\models \alpha\}.
\]
\end{df}

\noindent Then the semantic clause for $\supset$ is:

\begin{itemize}
\item $T\models \alpha \supset \beta$ iff $T^\alpha \models \beta$.
\end{itemize}
\corrf{Notice that the definition of $\models$ and the semantic clause for $\supset$ are interdependent, but since there are only finitely many occurrences of $\supset$ in a formula, these definitions are not circular.}

The interventionist counterfactual implication $\SET X = \SET x \cf \alpha$ is more complex to define. Intuitively 
 the idea is that we can associate to each causal multiteam $T = (T^-,\F)$ and each \emph{consistent}\footnote{In our context, a conjunction $\SET X = \SET x$ is said to be inconsistent if it contains two conjuncts of the form $X=x$, $X=x'$ for some variable $X\in \SET X$ and two distinct values $x\neq x'$. Otherwise, it is consistent.} antecedent $\SET X = \SET x$ an ``intervened causal multiteam'' $T_{\SET X = \SET x}$ which, we say, has ``undergone the intervention $do(\SET X = \SET x)$''. In the recursive case, which is the only one we consider here, this intervention induces the following changes on the multiteam component of $T$: 1) the $\SET X$-columns assume constant value $\SET x$, and 2) all the columns corresponding to descendants of $\SET X$ (i.e. variables that can be reached from $\SET X$ by following a directed path on the graph) are updated by applying the functions from $\F$ in an appropriate order. By this we simply mean that each function must be applied only after all its (non-dummy) arguments have been recomputed, if they need to be (see \cite{BarSan2020} for more details). The resulting multiteam is denoted as $T_{\SET X = \SET x}^-$. The intervention $do(\SET X = \SET x)$ also induces the following changes on other components of $T$: 3) all variables in $\SET X$ become exogenous; thus, if any variable $X$ of $\SET X$ was endogenous, now its corresponding function $\F_X$ is removed from  $\mathcal F$. This restriction of the function component $\F$ is denoted as $\F_{\SET X = \SET x}$. 
 
 We remark that the procedure described above also modifies the causal graph $G$ of $T$: all arrows entering any of the variables of $\SET X$ are removed from the graph $G$, obtaining a new graph $G_{\SET X=\SET x}$.

More formally:

\begin{df}
Let $T=(T^-,\F)$ be a recursive causal multiteam of signature $\sigma$ and endogenous variables $\SET V$. Let $\SET X = \SET x$ be a consistent conjunction. The intervention $do(\SET X = \SET x)$ produces another causal multiteam $T_{\SET X = \SET x}=(T_{\SET X = \SET x}^-,\F_{\SET X = \SET x})$ of signature $\sigma$ and endogenous variables $\SET V \setminus \SET X$ as follows:  
\begin{itemize}
\item $\F_{\SET X = \SET x} \dfn \F_{\upharpoonright(\SET V \setminus \SET X)}$ \hspace{5pt} \corrf{(the restriction of $\F$ to $\SET V \setminus \SET X$)}
\item $T_{\SET X=\SET x}^- \dfn \{s^\F_{\SET X=\SET x}\mid s\in T^-\}$, where each $s^\F_{\SET X=\SET x}$ is the unique assignment compatible with $\mathcal F_{{\SET X = \SET x}}$ defined (recursively) as 
\[s^\F_{\SET X=\SET x}(V)=\begin{cases}
x_i&\text{ if }V=X_i\in \SET X\\
s(V)&\text{ if }V\in Exo(T)\setminus \SET X\\
\F_V(s^\F_{\SET X=\SET x}(\SET W_V
))&\text{ if }V\in End(T)\setminus \SET X
\end{cases}\]
\end{itemize}
\end{df}
\noindent It is worth noting that an intervention never changes the cardinality of a causal multiteam (contrarily to what may happen with causal teams).

The interventionist counterfactual is then assigned its meaning by the following clause:

\begin{itemize}
\item $T\models \SET X=\SET x \cf \psi \iff T_{\SET X=\SET x} \models \psi$ or $\SET X=\SET x$ is inconsistent.
\end{itemize}

The $\CO$ language allows us to express counterfactual statements together with the context in which they are typically embedded, e.g. the following non-probabilistic variant of our earlier example (PC): 
\begin{quote}
A subject who died after treatment ($Dies=1,Treated=1$) would have recovered ($Dies=0$) had he or she not been treated ($Treated=0$).
\end{quote}
We render this statement in the $\CO$ language as:
\[
(Dies=1 \land Treated=1)\supset (Treated =0 \cf Dies=0).
\]
Later we will introduce a language  that can express probabilistic variants of the above statement.

Notice also that the $\CO$ language allows counterfactuals to occur as antecedents of the selective implication, as in the formula $(\SET X = \SET x \cf \alpha)\supset \chi$. The intuition here is that we can assert that $\chi$ is true once we know a certain outcome of the ``experiment'' of fixing $\SET X$ to $\SET x$.\footnote{See \cite{BarSchVelXie2021} for a discussion of this perspective, and of the distinction between the logic of ideal and real experiments.}

The key property of language $\CO$ is \emph{flatness}, which in a sense characterizes it as a classical language.

\begin{teo}[Flatness]
Let $\alpha$ be a $\CO_{\sigma}$ formula, and $T = (T^-,\F)$ a causal multiteam of signature $\sigma$. Then $T\models\alpha$ if and only if, for all $s\in T^-$, $(\{s\},\F)\models\alpha$.
\end{teo}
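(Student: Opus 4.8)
The plan is to prove both directions simultaneously by structural induction on $\alpha$, with the inductive hypothesis stated uniformly for \emph{all} causal multiteams of signature $\sigma$ (quantifying over both the multiteam component and the function component). This uniform formulation is essential, because in the $\supset$ and $\cf$ cases the inductive hypothesis gets applied to multiteams whose function component differs from $\F$ — in particular to the intervened multiteam $T_{\SET X=\SET x}$, whose function component is the restriction $\F_{\SET X=\SET x}$. Before starting, I would record an auxiliary \emph{empty-team fact}: the causal multiteam $(\emptyset,\F)$ satisfies every $\CO$ formula. This follows by a trivial parallel induction (each semantic clause becomes vacuous or trivially satisfiable when the multiteam component is empty), and it is invoked repeatedly below. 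The base cases $Y=y$ and $Y\neq y$ are immediate, since their clauses already have the shape ``for all $s\in T^-$\dots''; the conjunction case is immediate from the two inductive hypotheses, as a universally quantified conjunction distributes over its conjuncts.

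The genuinely team-semantic case is $\alpha\lor\beta$. The key preliminary observation is that on a \emph{singleton} the split clause collapses to classical disjunction: $(\{s\},\F)\models\alpha\lor\beta$ iff $(\{s\},\F)\models\alpha$ or $(\{s\},\F)\models\beta$. Indeed, any admissible split of $\{s\}$ puts the single assignment into (at least) one piece and the empty component into the other, so one direction uses the empty-team fact and the other is immediate. Granting this, for the forward direction I would take a witnessing split $T_1,T_2$ with $T_1^-\cup T_2^-=T^-$, $T_1\models\alpha$, $T_2\models\beta$, apply the inductive hypothesis to $T_1$ and $T_2$ (legitimate, as these are causal multiteams of signature $\sigma$), and conclude that every $s\in T^-$ lies in $T_1^-$ or $T_2^-$ and hence satisfies $\alpha$ or $\beta$ as a singleton. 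For the converse I would set $T_1^-\dfn\{s\in T^-\mid(\{s\},\F)\models\alpha\}$ and $T_2^-\dfn\{s\in T^-\mid(\{s\},\F)\models\beta\}$; by the singleton observation these cover $T^-$, and by the inductive hypothesis $T_1\models\alpha$ and $T_2\models\beta$, exhibiting the required split.

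For $\alpha\supset\beta$ I would unfold $T\models\alpha\supset\beta\iff T^\alpha\models\beta$, apply the inductive hypothesis to $\beta$ over $T^\alpha$ (which shares $\F$), and use $(T^\alpha)^-=\{s\in T^-\mid(\{s\},\F)\models\alpha\}$ to rewrite the condition as: for every $s\in T^-$, $(\{s\},\F)\models\alpha$ implies $(\{s\},\F)\models\beta$; a short singleton computation (with the empty-team fact covering the case where the antecedent fails) identifies this with ``for all $s\in T^-$, $(\{s\},\F)\models\alpha\supset\beta$''. For the counterfactual $\SET X=\SET x\cf\psi$, the crucial point is that the intervention \emph{commutes with restriction to singletons}: by definition $(T_{\SET X=\SET x})^-=\{s^\F_{\SET X=\SET x}\mid s\in T^-\}$ with common function component $\F_{\SET X=\SET x}$, and the intervention on $(\{s\},\F)$ yields exactly $(\{s^\F_{\SET X=\SET x}\},\F_{\SET X=\SET x})$. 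Since $Key$ is exogenous and untouched, the map $s\mapsto s^\F_{\SET X=\SET x}$ is injective, so these singletons are precisely the singletons of $T_{\SET X=\SET x}$ (this is the cardinality-preservation remark). Applying the inductive hypothesis to $\psi$ over $T_{\SET X=\SET x}$ then yields the chain $T\models\SET X=\SET x\cf\psi\iff T_{\SET X=\SET x}\models\psi\iff$ ``for all $s\in T^-$, $(\{s^\F_{\SET X=\SET x}\},\F_{\SET X=\SET x})\models\psi$'' $\iff$ ``for all $s\in T^-$, $(\{s\},\F)\models\SET X=\SET x\cf\psi$'', with the inconsistent-antecedent subcase trivial on both sides.

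I expect the disjunction case to be the main obstacle, since it is where the non-Tarskian split semantics is actually exercised and where one is forced to isolate and deploy the empty-team fact. The $\cf$ case is conceptually routine once the commutation of intervention with singleton-restriction is stated precisely, but that observation — together with the role of $Key$ in guaranteeing injectivity — is the other place where care is genuinely needed.
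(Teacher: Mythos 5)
Your proof is correct and follows essentially the route the paper itself takes: the paper omits the details and defers to the analogous structural induction for causal teams (\cite{BarSan2020}, Theorem 2.10), which is exactly the argument you give. Your two points of extra care --- proving the empty-multiteam fact up front so that the singleton collapse of $\lor$ and $\supset$ goes through, and stating the inductive hypothesis uniformly over all causal multiteams so it can be applied to $T_{\SET X=\SET x}$ with its restricted function component --- are precisely the ingredients that make the standard induction work.
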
 
\noindent The proof is similar to that of the analogous result for causal teams (\cite{BarSan2020}, theorem 2.10). 
It immediately entails a second property. We say that a causal multiteam $T = (T^-,\F)$ is \textbf{empty} (resp. \textbf{nonempty}) if such is the multiteam $T^-$ (which we have identified with a set of assignments over $Dom\cup{Key}$).

\begin{teo}[Empty multiteam property]\label{thm: empty multiteam}
Let $T$ be an empty causal multiteam of signature $\sigma$ and $\alpha$ a $\CO_\sigma$ formula. Then $T\models \alpha$. 
\end{teo}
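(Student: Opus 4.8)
The plan is to derive this directly from the Flatness theorem just established, exploiting the fact that the condition characterizing satisfaction is a \emph{universal} quantification over the set of assignments $T^-$, which is now empty.

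First I would check that the empty causal multiteam $T = (T^-,\F)$ with $T^- = \emptyset$ is genuinely a causal multiteam of signature $\sigma$, so that the Flatness theorem is applicable to it. The only condition that could fail is the compatibility constraint, namely that for all $s \in T^-$ and all $Y \in \SET V$ we have $s(Y) = \F_Y(s(\SET W_Y))$; but this is a statement quantified over $s \in T^-$, and since $T^- = \emptyset$ it holds vacuously. Hence $T$ is a legitimate causal multiteam and flatness applies to it verbatim.

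Next I would invoke flatness itself: for any $\CO_\sigma$ formula $\alpha$, we have $T \models \alpha$ if and only if $(\{s\},\F) \models \alpha$ for every $s \in T^-$. Because $T$ is empty, this right-hand condition quantifies over the empty set and is therefore vacuously true, regardless of $\alpha$. Consequently $T \models \alpha$, and since $\alpha$ was an arbitrary $\CO_\sigma$ formula, this establishes the claim. This is exactly the sense in which the statement is an \emph{immediate} corollary of flatness.

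There is no real obstacle here; the only point deserving a moment's care is the first step, i.e. confirming that emptiness does not disqualify $T$ from being a causal multiteam, which the vacuous compatibility check settles. Should one wish to avoid citing flatness, the same result could instead be obtained by a routine induction on the structure of $\alpha$ — verifying that each clause (the two literal clauses, and the clauses for $\land$, $\lor$, $\supset$ and $\cf$) holds vacuously or is preserved when the multiteam component is empty, noting in particular that interventions and the operators $T^\alpha$ send an empty multiteam to an empty multiteam — but this merely retraces the argument for flatness and is less economical.
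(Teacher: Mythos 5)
Your proof is correct and follows exactly the paper's route: the paper states that the Empty Multiteam Property "immediately" follows from the Flatness theorem, which is precisely your argument that the universally quantified condition over $s\in T^-$ is vacuously satisfied when $T^-=\emptyset$. The additional care you take in checking that the empty multiteam is a legitimate causal multiteam (vacuous compatibility constraint) is a reasonable, if implicit, detail the paper does not spell out.
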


We conclude this section with some remarks on negation. We did not add negation explicitly to the definition of $\CO$ because it is definable in it. The kind of negation in question is the so-called dual negation:
\begin{itemize}
\item $T\models \neg\alpha \iff $ for all $s\in T^-$, $\{s\}\not \models \alpha$.
\end{itemize}
which is a ``conservative extension'' of the classical negation of Tarskian semantics (see footnote \ref{footnote: conservative extension}).
Let us define $\bot$ as $X=x\cf X\neq x$. Then, it is easy to see (using flatness) that $\neg\alpha$ is equivalent to $\alpha\supset\bot$.\footnote{\label{footnote: dual}Alternatively, \cite{BarSan2020} describes an inductive \emph{dualization} procedure that associates to each $\CO$ formula $\alpha$ a $\CO$ formula $\alpha^d$ that is equivalent to $\neg\alpha$. This alternative definition can be useful for specific applications.} On the other hand, also $\supset$ is definable in terms of $\neg$, since $\alpha\supset \beta$ is equivalent to $\neg\alpha\lor\beta$.\footnote{In more general languages, a selective implication $\alpha\supset\chi$ can be defined as $\neg\alpha \lor (\alpha\land \chi)$.  The simpler form shown in the main text is available due to the flatness of $\CO$.}\footnote{This approach also works with dualization (see footnote \ref{footnote: dual}), i.e. by defining $\alpha\supset \beta$ as $\alpha^d\lor\beta$, since $\alpha^d$ does not contain any occurrence of $\supset$.}


\section{The level of probabilities}\label{sec: level of probabilities}

\subsection{Probabilities over (causal) multiteams}   \label{subs: probabilities}

We want to express, in a logical formalism, probabilistic variants of the statement introduced above. Recall for example the statement (PC):
\begin{quote}
The probability that a subject who died after treatment would have recovered had he or she not been treated is $\epsilon$.
\end{quote} 
We take it to be expressed by a formula such as 
\[
(Dies=1 \land Treated=1)\supset (Treated=0 \cf \Pr(Dies=0)=\epsilon);
\]
thus, we need a language which is capable of expressing statements of the form $\Pr(Y=0)=\epsilon$.

There are multiple ways of introducing probabilities over (multi)teams. In this paper we will follow the most straightforward way, using the counting measure over multiteams (as is done e.g. in \cite{DurHanKonMeiVir2016} and \cite{Vaa2017}). 

For any $\CO$ formula $\alpha$ and any \emph{nonempty} causal multiteam $T=(T^-,\F)$, 
 we define the \textbf{probability of $\alpha$ in $T$} as:
\[
P_T(\alpha)\dfn \frac{|(T^\alpha)^-|}{|T^-|} = \frac{|\{s\in T^-  \ | \  (\{s\},\F)\models \alpha\}|}{|T^-|}
\]
where as usual the notation $|\cdot|$ denotes the cardinality of its argument. We point out that this definition can be extended to cover \emph{any} causal sub-multiteam $S=(S^-,\F)$ of the fixed (nonempty) causal multiteam $T$, not just the definable ones:
\[
P_T(S) \dfn \frac{card(S^-)}{card(T^-)}
\]
and that this latter definition obviously induces a probability space over $T^-$. We will now show that also the space of $\CO$-definable submultiteams induces a probability space. 
These two facts will guarantee the possibility of using the usual rules of probability in the arguments of this paper. 

Let $\mathcal{E}_T$ be the set of all subsets $S$ of $T^-$ which are definable by a $\CO$ formula $\alpha$ (that is, $S= (T^\alpha)^-=\{s\in T^-  \ | \  (\{s\},\F)\models \alpha\}$). 
This will be our set of \emph{events}. By the definitions above, 
 $P_T(S)= P_T(\alpha)$. 



\begin{teo}
Let $\sigma = (Dom,Ran)$ be a signature, with $Dom\neq\emptyset$ and $Ran(X)\neq\emptyset$ for some $X\in Dom$. Let $T$ be a finite causal multiteam of signature $\sigma$. Then $(T^-,\mathcal{E}_T,
 P_T)$ is a probability space, that is:
\begin{enumerate}
\item $T^-\in \mathcal{E}_T$
\item $\mathcal{E}_T$ is closed under (countable) unions
\item $\mathcal{E}_T$ is closed under complementation
\item $P_T(T^-)=1$
\item $P_T$ is finitely additive over pairwise disjoint events.
\end{enumerate}
\end{teo}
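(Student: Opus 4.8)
The plan is to verify the five probability-space axioms in turn, with the genuine work concentrated in showing that $\mathcal{E}_T$ is closed under unions and complementation; the additivity and normalization conditions will then follow almost immediately from the counting definition of $P_T$.

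First I would dispose of the easy items. For condition (1), the whole space $T^-$ is definable: take $\alpha$ to be a tautological $\CO$ formula, for instance $X=x \lor X\neq x$ for some $X\in Dom$ and $x\in Ran(X)$ (this is where the signature hypotheses $Dom\neq\emptyset$ and $Ran(X)\neq\emptyset$ are used), so that $(\{s\},\F)\models\alpha$ for every $s$ and hence $(T^\alpha)^- = T^-\in\mathcal{E}_T$. Condition (4) is then immediate from the definition of $P_T$, since $P_T(T^-) = |T^-|/|T^-| = 1$. For condition (5), if $S_1=(T^{\alpha_1})^-$ and $S_2=(T^{\alpha_2})^-$ are disjoint, then $|S_1\cup S_2| = |S_1|+|S_2|$ as a fact about cardinalities of finite disjoint sets, and dividing by $|T^-|$ gives $P_T(S_1\cup S_2)=P_T(S_1)+P_T(S_2)$; finite additivity over arbitrary finite pairwise-disjoint families follows by an obvious induction.

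The substantive steps are (2) and (3), where I must exhibit, for given definable events, a $\CO$ formula defining the desired Boolean combination. For closure under complementation (3), given $S=(T^\alpha)^-$, I would use the definable negation $\neg\alpha$ discussed after Theorem~\ref{thm: empty multiteam}: by flatness and the semantic clause for $\neg$, the assignments $s\in T^-$ with $(\{s\},\F)\models\neg\alpha$ are exactly those with $(\{s\},\F)\not\models\alpha$, so $(T^{\neg\alpha})^- = T^-\setminus S$, the complement within our sample space $T^-$. For closure under unions (2), given $S_1=(T^{\alpha_1})^-$ and $S_2=(T^{\alpha_2})^-$, the natural candidate is the tensor disjunction $\alpha_1\lor\alpha_2$; here I would invoke flatness to argue that, at the level of single assignments, $(\{s\},\F)\models\alpha_1\lor\alpha_2$ holds precisely when $(\{s\},\F)\models\alpha_1$ or $(\{s\},\F)\models\alpha_2$, so that $(T^{\alpha_1\lor\alpha_2})^- = S_1\cup S_2$. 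The parenthetical ``countable'' in the statement is harmless: since $T^-$ is finite, $\mathcal{E}_T$ is a finite collection of sets, so any countable union reduces to a finite one, which is handled by iterating binary union.

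The main obstacle I anticipate is justifying the single-assignment behavior of $\lor$ used in step (2). The team-semantic clause for $\lor$ is the split clause (it asks for sub-multiteams $T_1,T_2$ covering $T^-$), not the pointwise ``or,'' so the equivalence $(\{s\},\F)\models\alpha_1\lor\alpha_2 \iff (\{s\},\F)\models\alpha_1 \text{ or } (\{s\},\F)\models\alpha_2$ must be argued for singletons. This is exactly the conservativity property recorded in footnote~\ref{footnote: conservative extension}, and it is underwritten by flatness: over a singleton the only splits are into $\{s\}$ and $\emptyset$ (in either order), and the empty causal multiteam satisfies every $\CO$ formula by Theorem~\ref{thm: empty multiteam}, so a split witnessing $\{s\}\models\alpha_1\lor\alpha_2$ exists iff one of the two disjuncts already holds at $\{s\}$. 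Once this pointwise reading of $\lor$ on singletons is in hand, flatness lifts it to the identity $(T^{\alpha_1\lor\alpha_2})^-=S_1\cup S_2$ for the full multiteam, completing the argument.
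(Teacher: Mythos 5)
Your proposal is correct and follows essentially the same route as the paper: the same tautology $X=x\lor X\neq x$ for items (1) and (4), $\neg\alpha$ for complementation using its pointwise behaviour on singletons, $\alpha_1\lor\alpha_2$ for unions, and the same cardinality computation for finite additivity. The only difference is that you spell out the singleton analysis of the split clause for $\lor$ (via the empty multiteam property) and the reduction of countable unions to finite ones, both of which the paper leaves implicit.
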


\begin{proof}
1) $T^-$ is definable by the formula $X=x \lor X\neq x$, for some $X\in Dom$ and $x\in Ran(X)$. Therefore $T^-\in \mathcal{E}_T$.

2) If $S_1, S_2\in \mathcal{E}_T$, then $S_1$, resp. $S_2$ is definable by some formula $\alpha_1$, resp. $\alpha_2$. Then $S_1 \cup S_2$ can be defined using $\alpha_1\lor \alpha_2$.  
Thus  $S_1 \cup S_2\in \mathcal{E}_T$.

3) 
Let $S\in\mathcal{E}_T$. Then $S$ is defined by some $\CO$ formula $\alpha$. But then, since $\neg$ behaves as ordinary negation on single assignments (i.e. $(\{s\},\F)\models \neg\alpha$ iff $(\{s\},\F)\not\models \alpha$), 
 $T^-\setminus S$ is defined by $\neg\alpha$. Therefore $T^-\setminus S \in \mathcal{E}_T$.




4) $P_T(T^-) = P_T(X=x \lor X\neq x)=1$, since all the singleton subteams of $T$ satisfy $X=x \lor X\neq x$. 

5) Let $S_1,S_2 \in \mathcal{E}_T$, with $S_1 \cap S_2 = \emptyset$. $S_1$ and $S_2$ are defined by two $\CO$ formulas $\alpha_1,\alpha_2$, respectively. Then, by the proof of 2., $S_1\cup S_2$ is defined by $\alpha_1 \lor \alpha_2$. So 
\[
P_T(S_1\cup S_2) = P_T(\alpha_1 \lor \alpha_2) = \frac{|\{s\in T^-  \ | \  \{s\}\models \alpha_1\lor \alpha_2\}|}{|T^-|} =
\]
\[
= \frac{|\{s\in T^-  \ | \  \{s\}\models \alpha_1\}| + |\{s\in T^-  \ | \  \{s\}\models \alpha_2\}|}{|T^-|} =
\]
\[
 =\frac{|\{s\in T^-  \ | \  \{s\}\models \alpha_1\}|}{|T^-|} + \frac{|\{s\in T^-  \ | \  \{s\}\models \alpha_2\}|}{|T^-|} =
\]
\[
 = P_T(\alpha_1) + P_T(\alpha_2) = P_T(S_1) + P_T(S_2);
\]
 the third equality holds due to the assumption $S_1 \cap S_2 = \emptyset$.
\end{proof}


\subsection{A probabilistic language} \label{sec: PCO language}


In order to allow our formal languages to talk about probabilities, first of all we allow for new types of atomic formulas. Let $\sigma=(Dom, Ran)$ be a signature.
\begin{df}
The set of \textbf{probabilistic atoms} of signature $\sigma$ is given by:
\[
\ \Pr(\alpha) \geq \epsilon \ | \ \Pr(\alpha) > \epsilon \ | \ \Pr(\alpha) \geq \Pr(\beta) \ | \ \Pr(\alpha) > \Pr(\beta)
\]
where $\alpha,\beta$ are formulas of $\CO$ and $\epsilon \in [0,1]\cap \mathbb{Q}$. The first two kinds will be called more  specifically \textbf{evaluation atoms}, and the other two \textbf{comparison atoms}. 
 Literals (as defined in Def. \ref{def: CO language}) and probabilistic atoms 
  will be called \textbf{atomic formulas}.

The \emph{probabilistic causal language} $\PCO_{\sigma}$ is given by the following clauses:
\[
\varphi::= \eta \mid \varphi \land \varphi \mid  \varphi \sqcup \varphi \mid \alpha\supset \varphi \mid \SET X = \SET x\cf \varphi
\]
where $\SET X\subseteq Dom$, $\SET x \in Ran(\SET X)$, $\eta$ is an atomic formula 
 and $\alpha$ a $\CO$ formula.
\end{df}

\noindent Notice that the antecedents of $\supset$ and the arguments of probability operators are $\CO$ formulas (and thus they may contain occurrences of $\lor$, but not of $\sqcup$ or $\Pr$). Outside of these contexts, $\sqcup$ is allowed while $\lor$ is not.


The semantics of many of the operators has already been introduced in section \ref{sec: level of events}, while discussing language $\CO$. We need further semantic clauses for $\sqcup$ (\emph{global disjunction}) and for the probabilistic atoms. 
 For what concerns the former:
\begin{itemize}
\item $T\models \psi\sqcup \chi$ if $T\models \psi$ or $T\models \chi$.
\end{itemize}





\noindent The semantics of the probabilistic atoms is straightforward, save for the convention that empty\footnote{Remember that a causal multiteam is said to be empty if such is its multiteam component.} causal multiteams satisfy all probabilistic atoms; this convention allows us to extend to $\PCO$ the empty team property, maintaining some continuity with the literature on team semantics.\footnote{The empty multiteam property also has the surprising consequence of guaranteeing the definability of material implication in $\PCO$, as will be discussed later in this section.} The semantic clauses for probabilistic atoms are:
\begin{itemize}
\item $T\models \Pr(\alpha)\geq \epsilon  \iff T \text{ is empty or } P_T(\alpha)\geq \epsilon$
\item $T\models \Pr(\alpha) > \epsilon  \iff T \text{ is empty or }  P_T(\alpha) > \epsilon$
\item $T\models \Pr(\alpha)\geq \Pr(\beta) \iff  T \text{ is empty or }  P_T(\alpha)\geq P_T(\beta)$
\item $T\models \Pr(\alpha) > \Pr(\beta)  \iff T \text{ is empty or }  P_T(\alpha) >  P_T(\beta)$
\end{itemize}

If $\Gamma\cup\{\varphi\}\subseteq \PCO_\sigma$, we will write $\Gamma\models_\sigma\varphi$ if all causal multiteams of signature $\sigma$ that satisfy $\Gamma$ also satisfy $\varphi$. We omit the subscript $\sigma$ if the signature is clear from the context. We write $\psi\models \varphi$ for $\{\psi\}\models\varphi$. Finally, if both $\psi\models \varphi$ and $\varphi\models \psi$, we write $\psi\equiv\varphi$.

The fact that the dual negation is definable for $\CO$ formulas grants us with the possibility of introducing many useful abbreviations in $\PCO$:

\begin{itemize}
\item $\Pr(\alpha) \leq  \epsilon$ for $\Pr(\neg\alpha) \geq 1 - \epsilon$
\item $\Pr(\alpha) < \epsilon$ for $\Pr(\neg\alpha) > 1 - \epsilon$
\item $\Pr(\alpha) = \epsilon$ for $\Pr(\alpha) \geq  \epsilon \land \Pr(\alpha) \leq  \epsilon$
\item $\Pr(\alpha) \neq  \epsilon$ for $\Pr(\alpha) >  \epsilon \sqcup  \Pr(\alpha) <  \epsilon$
\item $\bot$ for $X = x\cf X \neq x$
\item $\top$ for $X = x \cf X = x$
\end{itemize}
whose semantics is as expected (but true by default on empty causal multiteams).

Now we can go back to our initial example (PC). We will take it to be represented in $\PCO$ by the formula
\[
(Dies = 1 \land Treated = 1)\supset (Treated=0 \cf \Pr(Dies=0) = \epsilon).
\]
In section \ref{sec: mixed statements} we will give precise reasons for the correctness of this representation.

The two kinds of disjunction $\lor$ and $\sqcup$ play different roles within our formalism. Consider the example of tossing two fair coins, as depicted in the following table.
\begin{center}
\begin{tabular}{|c|c|c|}
\hline
\multicolumn{3}{|l|}{Key \ \ \ \ \! X \quad \quad \ \ \! Y}\\
\hline
\phantom{a}0\phantom{a} & tails & tails\\
\hline
1 & tails & heads \\
\hline
2 & heads  & tails \\
\hline
3 & heads & heads \\
\hline
\end{tabular}
\end{center}
\noindent The tensor disjunction $\lor$ can be used to express disjunctive events; so for example we will say that the disjunctive event ``$X=heads$ or $Y=tails$'' has probability $\frac{3}{4}$ by means of the formula $\Pr(X=heads \lor Y=tails)=\frac{3}{4}$. It can be checked that this statement is true in $T$. On the other hand, disjunctive statements about probabilities, such as ``either $X = heads$ has probability $\frac{1}{2}$  or $Y=tails$ has probaility $\frac{1}{2}$'' are expressed by means of the global disjunction: $\Pr(X=heads) =\frac{1}{2} \sqcup \Pr(Y=tails) = \frac{1}{2}$. This statement is also true in $T$.

The presence of the global disjunction $\sqcup$ entails that the language $\PCD$ is not flat. The presence of probabilistic atoms makes so that even the weaker property of \emph{downward closure} is not respected (i.e. it may happen that a formula is satisfied by a given causal multiteam, but not by some of its causal sub-multiteams). For a counterexample, in the causal  multiteam considered above (call it $T$) we have $\Pr(X = tails)\leq \frac{1}{2}$, but the causal sub-multiteam $T^{X = tails}$ constituted \emph{only} of the assignments that satisfy $X = tails$ will not satisfy $\Pr(X= tails)\leq \frac{1}{2}$.

In  \cite{BarSan2020} we emphasized the possibility of expressing dependencies as atomic formulas in causal team semantics. In particular, we considered an extension of $\CO$ that allows for functional dependence atoms $\dep{\SET X}{Y}$ that express that the values of $Y$  are functionally determined by the values of the tuple $\SET X$. Such atoms are a traditional object of study in database theory (\cite{AbiHulVia1995}) and have been introduced in the context of logic in \cite{Vaa2007}. In detail, we say that a causal (multi)team $T$ satisfies a dependence atom according to the following clause: 
\begin{itemize}
\item $T\models \dep{\SET X}{Y}$ iff for all $s,s'\in T^-$, $s(\SET X) = s'(\SET X)$ implies $s(Y) = s'(Y)$.
\end{itemize}

\noindent The reason why we do not consider such extensions here is that the functional dependence atoms are already definable in $\PCO$. It is easy to show that, for any causal multiteam $T$ of appropriate signature,
\[
T\models\dep{\SET X}{Y} \text{ if and only if } T\models \bigwedge_{\SET x\in Ran(\SET X)} \left(\SET X = \SET x \supset \bigsqcup_{y \in Ran(Y)} Y=y\right).  
\] 


\subsection{Definable connectives}

\commf{If we want to shorten the paper, we could omit the rest of this section, since contradictory negation and material implication are not used in the rest of the paper. (They are important instead for the proof theory.)}
Perhaps surprisingly, in $\PCO$ we can inductively define a \emph{weak contradictory negation}, i.e. a negation that behaves like classical negation except on empty causal multiteams. We give clauses for both the basic syntax and for the defined abbreviations:

\begin{itemize}
\item $(\Pr(\alpha)\geq \epsilon)^C$ is $\Pr(\alpha) < \epsilon$ (and vice versa)  
\item $(\Pr(\alpha) > \epsilon)^C$ is $\Pr(\alpha)\leq \epsilon$ (and vice versa)
\item $(\Pr(\alpha) = \epsilon)^C$ is $\Pr(\alpha)\neq \epsilon$ (and vice versa)
\item $(\bot)^C$ is $\top$ (and vice versa)
\item $(X=x)^C$ is $\Pr(X=x) <1$
\item $(X\neq x)^C$ is $\Pr(X\neq x) <1$
\item $(\psi \land \chi)^C$ is $\psi^C \sqcup \chi^C$
\item $(\psi \sqcup \chi)^C$ is $\psi^C \land \chi^C$
\item $(\alpha \supset \chi)^C$ is $\Pr(\alpha)>0 \land \alpha \supset \chi^C$
\item $(\SET X = \SET x \cf \chi)^C$ is $\SET X = \SET x \cf \chi^C$.
\end{itemize}

\noindent In the clause for $\supset$, the conjunct $\Pr(\alpha)>0$  (whose intuitive interpretation is ``if $T$ is nonempty, then $T^\alpha$ is nonempty'') is added to insure that $(\alpha \supset \chi)^C$ is not satisfied by $T$ in case ($T$ is nonempty and) $T^\alpha$ is empty.\footnote{We remark that $\Pr(\alpha)>0$ could also be replaced with $(\neg\alpha)^C$.}\footnote{For a comparison, notice that while $(\alpha \supset \chi)^C$ is $\Pr(\alpha)>0 \land \alpha \supset \chi^C$, we have $\neg(\alpha \supset \beta) \equiv \alpha \land \neg \beta$.}

The meaning of the weak contradictory negation is as follows.

\begin{teo}\label{thm: contradictory negation}
For any $\varphi$ in $\PCO_{\sigma}$ and any \emph{nonempty} causal multiteam $T$ of signature $\sigma$,
\[
T\models \varphi^C \iff T\not\models \varphi.
\]
\end{teo}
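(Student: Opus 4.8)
The plan is to prove the equivalence $T \models \varphi^C \iff T \not\models \varphi$ (for nonempty $T$) by induction on the structure of $\varphi$, following exactly the inductive definition of the operator $(\cdot)^C$ given just above the theorem statement. Since the defining clauses split into base cases (probabilistic atoms, literals, $\bot$) and inductive cases ($\land$, $\sqcup$, $\supset$, $\cf$), the proof will mirror that split.

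\medskip

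\emph{Base cases.} First I would handle the probabilistic atoms. Since $T$ is nonempty, the empty-multiteam escape clause in the semantics of probabilistic atoms is inactive, so e.g. $T \models \Pr(\alpha) \geq \epsilon$ holds iff $P_T(\alpha) \geq \epsilon$, whose negation is $P_T(\alpha) < \epsilon$, i.e. precisely $T \models \Pr(\alpha) < \epsilon = (\Pr(\alpha)\geq\epsilon)^C$. The case for $>$ is symmetric, and the derived atoms ($=$, $\neq$) follow from these together with the clause for $\land$/$\sqcup$ once those are established (or can be checked directly). For the literals, the clause $(X=x)^C = \Pr(X=x) < 1$ works because, by flatness, $T \models X=x$ iff every assignment satisfies $X=x$ iff $P_T(X=x) = 1$; hence $T \not\models X=x$ iff $P_T(X=x) < 1$, which is exactly the negated atom (again using nonemptiness so $P_T$ is well-defined). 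The case of $X \neq x$ is identical, and $\bot^C = \top$ holds since $\bot$ is unsatisfiable on nonempty $T$ while $\top$ is universally satisfied.

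\medskip

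\emph{Inductive cases.} For $\land$ and $\sqcup$ the argument is purely propositional: using the induction hypothesis on the immediate subformulas (which are satisfied on the same nonempty $T$), $T \models (\psi\land\chi)^C = \psi^C \sqcup \chi^C$ iff $T\models\psi^C$ or $T\models\chi^C$ iff $T\not\models\psi$ or $T\not\models\chi$ iff $T\not\models\psi\land\chi$; the $\sqcup$ case is dual via De Morgan. The counterfactual case $(\SET X = \SET x \cf \chi)^C = \SET X = \SET x \cf \chi^C$ relies on the key observation (noted in the text) that an intervention never changes the cardinality of a causal multiteam: hence if $T$ is nonempty so is $T_{\SET X = \SET x}$, so the induction hypothesis applies to $\chi$ on the nonempty intervened multiteam, giving $T_{\SET X=\SET x}\models\chi^C \iff T_{\SET X=\SET x}\not\models\chi$, and transporting this through the counterfactual clause yields the result (one should also dispatch the inconsistent-antecedent subcase, where $\cf$ is true by default and $\chi^C$ plays no role, though for a genuine negation one checks consistency is preserved by the syntactic form).

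\medskip

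\emph{The main obstacle} is the selective-implication case, whose clause $(\alpha\supset\chi)^C = \Pr(\alpha) > 0 \land (\alpha \supset \chi^C)$ is the one genuinely subtle point, precisely because $T^\alpha$ may be \emph{empty} even when $T$ is nonempty. I would argue: $T \not\models \alpha\supset\chi$ iff $T^\alpha\not\models\chi$. If $T^\alpha$ is empty, then by the empty-multiteam property $T^\alpha\models\chi$ always holds, so $T\models\alpha\supset\chi$ is automatic and $T\not\models\alpha\supset\chi$ fails; correspondingly $\Pr(\alpha)>0$ fails on nonempty $T$ (since $P_T(\alpha)=0$), so $(\alpha\supset\chi)^C$ also fails — the two sides agree. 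If $T^\alpha$ is nonempty, then $\Pr(\alpha)>0$ holds, so $(\alpha\supset\chi)^C$ reduces to $T^\alpha\models\chi^C$, and the induction hypothesis (applicable since $T^\alpha$ is now nonempty) gives $T^\alpha\models\chi^C \iff T^\alpha\not\models\chi \iff T\not\models\alpha\supset\chi$. This is exactly why the extra conjunct $\Pr(\alpha)>0$ was inserted into the definition, and making this empty/nonempty dichotomy explicit is the crux of the argument; everything else is routine.
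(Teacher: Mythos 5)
Your proof is correct and follows the same route as the paper's: induction on $\varphi$, with the selective-implication case as the only non-immediate one, resolved exactly as you do it --- by splitting on whether $T^\alpha$ is empty, invoking the empty-multiteam property, and observing that the extra conjunct $\Pr(\alpha)>0$ exists precisely to make the two sides agree when $T^\alpha$ is empty. The paper presents only that case (declaring the rest immediate), so your explicit treatment of the atoms, literals and the $\cf$ case (including the remark that interventions preserve cardinality, hence nonemptiness, so the induction hypothesis applies to the intervened multiteam) is, if anything, more detailed than the original.
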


\begin{proof}
We prove the statement simultaneously on all causal multiteams by induction on 
 $\varphi$. We show the case for $\supset$, which is the only one that is not immediate.


Now suppose $T = (T^-,\F)\models 
\Pr(\alpha)>1  \land \alpha \supset \chi^C$. Thus $T^\alpha\models \chi^C$. 
Since $T$ is nonempty and $T\models \Pr(\alpha)>0$, there is an $s\in T^-$ such that $(\{s\},\F)\models \alpha$: i.e., $T^\alpha$ is nonempty. Then we can apply the inductive hypothesis on $\chi$ and obtain $T^\alpha\not \models \chi$. Thus, $T\not\models \alpha\supset\chi$.

Viceversa, assume $T\not\models \alpha\supset\chi$. Then $T^\alpha\not \models \chi$, which in particular (by the empty team property) entails that $T^\alpha$ is not empty. Thus we can apply the i.h. to $\chi$ and obtain $T^\alpha \models \chi^C$; thus $T\models \alpha \supset \chi^C$. Furthermore, since $T^\alpha$ is not empty, 
$T\models \Pr(\alpha)>0$.
\end{proof}

\noindent It is not difficult to prove that the two kinds of negation are related by the following law: whenever $\alpha$ is a $\CO$ formula without occurrences of $\lor$, we have $\neg\alpha \models \alpha^C$. 

Since we have at our disposal a negation that behaves almost like the usual contradictory negation, we can  introduce a conditional which behaves as the material conditional:
\begin{itemize}
\item $\psi\rightarrow \chi$ stands for $\psi^C \sqcup \chi$.
\end{itemize}
We then have $T\models \psi\rightarrow \chi$ iff $T\not\models\psi$ or $T\models\chi$ (notice that if $T$ is empty, by the empty multiteam property of $\PCO$ we have $T\models\chi$ and thus $T\models\psi^C \sqcup \chi$, $T\models \psi\rightarrow \chi$).

We end this section with a brief comparison of $\rightarrow$ and $\supset$. 
\corrf{Note that $\alpha\rightarrow\beta$  
and  $\alpha\supset\beta$ are not in general equivalent even if $\alpha,\beta$ are $\CO$ formulas. Consider for example a causal multiteam $T$ with two assignments $s= \{(X,0),(Y,0)\}$ and $t= \{(X,1),(Y,1)\}$. Clearly $T\models X=0 \rightarrow Y=1$ (since $T\not\models X=0$), while $T\not\models X=0 \supset Y=1$ (since $T^{X=0}\not\models Y=1$). The opposite direction holds (when the syntax allows to formulate it):
}


\begin{prop}
Let $\alpha\in\CO,\psi\in\PCO$.
\begin{enumerate}
\item $\alpha\supset\psi \models \Pr(\alpha)=1\rightarrow \psi$.
\item If $\alpha$ has no occurrences of $\lor$, $\alpha\supset\psi \models \alpha \rightarrow \psi$.
\end{enumerate}
\end{prop}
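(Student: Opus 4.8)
The plan is to reduce both parts to the satisfaction clause for material implication recorded just above the statement: for $\PCO$-formulas $\varphi,\chi$, a causal multiteam $T$ satisfies $\varphi\rightarrow\chi$ iff $T\not\models\varphi$ or $T\models\chi$ (an empty $T$ satisfies $\chi$, hence the implication, automatically). So in each part I would fix a $T$ with $T\models\alpha\supset\psi$, i.e.\ $T^\alpha\models\psi$, note that the empty case is covered by the empty multiteam property, and for nonempty $T$ verify the disjunctive condition for the relevant implication.

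The single observation doing the work is that, for nonempty $T$, the antecedent of each implication already forces $T^\alpha=T$. For part 1, $T\models\Pr(\alpha)=1$ means $P_T(\alpha)\geq 1$; since $P_T$ is $[0,1]$-valued this gives $P_T(\alpha)=1$, so $|(T^\alpha)^-|=|T^-|$, and as $(T^\alpha)^-\subseteq T^-$ we get $(T^\alpha)^-=T^-$; since $T^\alpha$ and $T$ share the function component, $T^\alpha=T$. Thus I would split on whether $T\models\Pr(\alpha)=1$: if not, the antecedent fails and the implication holds; if so, $T=T^\alpha\models\psi$ by hypothesis. Note that $\alpha$ may here contain $\lor$, since $\Pr(\alpha)=1$ is a legitimate $\PCO$-formula (a conjunction of probabilistic atoms) whatever $\CO$-formula $\alpha$ is.

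For part 2 the role of the $\lor$-free hypothesis is precisely to make $\alpha$ itself a $\PCO$-formula, so that $\alpha\rightarrow\psi$ is well-formed and the material-implication clause above (via Theorem~\ref{thm: contradictory negation}) applies to it. I would again split on whether $T\models\alpha$. If $T\not\models\alpha$, the antecedent fails, so $T\models\alpha\rightarrow\psi$. If $T\models\alpha$, then by flatness every $s\in T^-$ satisfies $\alpha$, whence $(T^\alpha)^-=T^-$ and so $T^\alpha=T$; combined with $T^\alpha\models\psi$ this yields $T\models\psi$, and again $T\models\alpha\rightarrow\psi$.

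There is no serious obstacle; the argument is a brief case split. The only points needing care are recognizing that the $\lor$-free hypothesis in part 2 is exactly what guarantees that $\alpha^C$ (and hence $\alpha\rightarrow\psi$) is meaningful, and isolating the uniform step $T^\alpha=T$ — obtained from ``$\alpha$ has probability one'' in part 1 and from flatness in part 2 — which is what transfers $T^\alpha\models\psi$ to $T\models\psi$.
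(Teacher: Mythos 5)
Your proof is correct and follows essentially the same route as the paper's: the paper assumes $T^\alpha\models\psi$ and observes that $T\models\Pr(\alpha)=1$ forces $T=T^\alpha$, exactly your key step, declaring part 2 analogous. You merely spell out the details the paper leaves implicit (why $P_T(\alpha)=1$ gives $(T^\alpha)^-=T^-$, the empty case, and the role of the $\lor$-free hypothesis in making $\alpha\rightarrow\psi$ well-formed), all of which is sound.
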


\begin{proof}
We prove 1. (the other proof is analogous). Assume $T\models \alpha\supset\psi$; then $T^\alpha\models \psi$. If now $T\models \Pr(\alpha)=1$, we have that $T = T^\alpha\models \psi$.
\end{proof}

\section{Comparison with causal models} \label{sec: comparison with causal models}

Probabilistic notions of causation have been extensively studied in
the literature. In the context of \emph{Bayesian networks} (graphs enriched
with a joint probability distribution over the nodes of the graph) some conditional probabilities are interpreted as causal relations (see Spirtes, Glymour and Scheines \cite{SpiGlySch1993}). 
In the approach of \cite{Pea2000} and \cite{SpiGlySch1993}, by a \emph{causal model}, or \emph{semi-deterministic structural equation model}, one usually understands (the terminology is by no means used consistently) a system of structural equations paired with a probabilistic distribution over the set of exogenous variables $\SET{U}$. The \emph{recursive} semi-deterministic structural equation models can be seen as a special case of Bayesian networks: the case in which the joint probability distribution over the variables of the graph is generated by a probability distribution over the exogenous variables, together with a system of structural equations for the endogenous variables. 
 Indeed, if the causal model is recursive (i.e. the graph is acyclic), then each endogenous variable $V$ can be represented as a function of the exogenous variables, and its probability distribution can be computed from the joint probability distibution of the exogenous variables. In order to do this, first observe that the causal law for $V$, which we have been encoding by a function $\F_V$ of all the remaining variables, can also be expressed by a function $f_V$ without dummy arguments, i.e. a function whose only arguments are the variables in $PA_V$ (such function is defined by the conditions $f_V(\SET p)\dfn\F_V(\SET p,\SET d)$, where $\SET d$ is an arbitrary list of values for the dummy arguments $Dom \setminus PA_V$). Let us call the expression $V:=f_V(PA_V)$ the \emph{minimal} structural equation for $V$. Next, we can iteratively replace each of the endogenous variables occurring in the right-hand side of $V:=f_V(PA_V)$ with the right-hand side of their own minimal structural equations, until we obtain a function $f^*_V(\SET{U})$ (the procedure will terminate by recursivity). Similarly, given recursivity, the probability distribution over exogenous variables induces a joint probability distribution over \emph{all} the variables of the system given by the following equation:
\begin{equation}\label{equation: extending probabilities}
P(X_1 = x_1\land\dots\land X_n = x_n) \dfn
\end{equation}
\[
 \sum_{\{(u_1,\dots,u_k)  \ | \ \text{for all }i=1..n \text{ : } f^*_{X_i}(u_1,\dots,u_k)=x_i\}} P(U_1=u_1\land\dots\land U_k=u_k)
\]
\corrf{where $U_1, \dots, U_k$ is a list of the exogenous variables.}
From this joint probability, one can obtain by marginalization the distributions over each set of variables. 
 
In causal multiteams probabilities are defined from a multiteam by counting, as explained in section \ref{subs: probabilities}.    It can be shown 
 that the probabilities that are induced by recursive causal multiteams respect the structure of the probabilities typical of recursive causal models. That is, if we take the team-defined probability of the exogenous variables $\SET U$ in $T$ \corrf{(what we called $P_T(\SET U = \SET u)$)}, and we define from it the joint probability $P(\SET X)$ over $Dom(T)$ according to equation \eqref{equation: extending probabilities}, we obtain exactly the team-defined probability distribution of $\SET X$:
\[
P(\SET X=\SET x) = P_T(\SET X=\SET x)
\]
for all tuples $\SET x \in Ran(\SET X)$. This is a direct consequence of the fact that the assigments in a causal multiteam are compatible with the causal laws, which entails that we have $|\{s\in T^- \mid s(X_1)=x_1 \land \dots \land s(X_n)=x_n\}| = |\{s\in T^- \mid \text{for all $i=1,\dots,n$, } f^*_{X_i}(s(U_1),\dots,s(U_k)) = x_i\}|$.

Another way of interpreting this result is the following: each recursive causal multiteam induces a corresponding semi-deterministic recursive SEM. It is straightforward to see that a sort of converse is also true: each recursive, semi-deterministic SEM can be mimicked by an appropriate causal  multiteam $T$.\footnote{Actually, for each SEM there is a whole class of multiteams that are adequate for the task. See the notion of \emph{rescaling} in section \ref{sec: infinitary language}.} It is immediate to obtain the function component 
 $\F$ from the structural equations of the SEM; we are left with the task of defining a multiteam $T^-$, compatible with 
 $\F$, which induces the same probability distribution as the SEM. But this is easy; let $P$ be the probability distribution over the exogenous variables of the SEM; let $P(u_1^i,\dots,u_k^i)=\frac{a_i}{b}$ for each tuple $(u_1^i,\dots,u_k^i)$ of values for the exogenous variables $\SET U = \{U_1,\dots,U_k\}$ of the SEM; $b$ is a common denominator. Then, for each tuple $(u_1^i,\dots,u_k^i)$, we require $T^-$ to contain exactly $a_i$ copies of the assignment $((U_1,u_1^i),\dots,(U_k,u_k^i),(V_1, f'_{V_1}(s(\SET U))),\dots,$ $(V_m, f'_{V_m}(s(\SET U))))$, where $V_1,\dots,V_m$ is a list of the endogenous variables. It is immediate to verify that the probability distribution over $\SET U$ induced by the multiteam coincides with the probability given by the SEM; and, by  
the earlier considerations of this subsection, the same can be said of the probabilities assigned to any set of variables in the domain.

The approach encoded in \eqref{equation: extending probabilities} fails, however, for non-recursive causal models, 
in case their system of structural equations has multiple solutions or no solutions at all for some configuration(s) of values of the exogenous variables. In this case some endogenous variables may fail to be functions of the exogenous variables. Consider e.g. a system with variables $U,X,Y$ all having range $\{0,1\}$, and consisting of the equations $X:=Y$ and $Y:= X$, which induce the arrows $X\rightarrow Y$ and $Y\rightarrow X$; here $U$ is the only exogenous variable; given a fixed value $u$ for $U$, both triples $(u,0,0)$ and $(u,1,1)$ are solutions to the system; therefore, neither $X$ nor $Y$ can be determined as functions of $U$. 

Unlike the semi-deterministic approach of causal models, our definition of probabilities for causal teams can in principle be applied also to non-recursive systems, and is thus more general. 
 What is still lacking in the nonrecursive case, with our semantics, is a mechanism for computing the new post-intervention probabilities. When one assignment produces multiple solutions, we should redistribute the probabilistic weight of that assignment over multiple solutions, but the way this should be done is plausibly determined by the specific intended application of the formalism. These complications go beyond the scope of the present paper.

As a final remark, we notice that often, in the literature on Bayesian networks, not all combinations of graphs and probabilities are allowed; it is instead required that the graph-probability pair satisfy the so-called \emph{Markov condition}. This condition amounts to the requirement that each variable be probabilistically independent of its non-descendants (non-effects) conditional on its parent set; it is considered by many authors to be a necessary condition for giving a causal interpretation to Bayesian networks (see e.g. \cite{HauWoo1999} for a discussion). It is well known (see again \cite{HauWoo1999}) that, in the special case of \emph{recursive} SEMs, the Markov Condition is trivially satisfied by the endogenous variables, and it ultimately reduces to the following constraint on pairs of exogenous variables:
\[
U_i \pindep U_j\text{, for all exogenous variables } U_i \neq U_j
\]
where  $\pindep$ denotes probabilistic independence. It is immediate to see that our causal multiteams need not satisfy this condition over their set of exogenous variables. We might call \textbf{Markovian} the causal multiteams that satisfy this condition; we expect Markovian causal multiteams to be the most adequate tool for applications of our framework, but we will not investigate further the consequences of the Markov condition in this paper.

\section{Definable concepts}\label{sec: definable concepts}

In this subsection we will see that, although our proposed language $\PCO$ is relatively poor, it already allows one to discuss easily some concepts that are inaccessible or at least rather murky when using the typical notations of causal inference. The key to this kind of expressivity is having two kinds of conditional in the language, $\cf$ and $\supset$. We begin by considering to what extent conditional probabilities are definable in $\PCO$.

\subsection{Conditional probabilities} \label{subs: conditional probabilities of CO}


 The obvious way of applying the definition of a conditional probability to events described in a (nonempty) causal multiteam is (assuming that $\alpha,\gamma$ are $\CO$ formulas, and $\gamma$ is satisfied by at least one assignment of the team) by the definition
\[
P_T(\alpha \mid \gamma)\dfn \frac{P_T(\alpha\land\gamma)}{P_T(\gamma)}.
\]
How can we express statements on $P_T(\alpha  \mid \gamma)$ in our language $\PCD$? The idea we shall follow is that a conditional probability can be seen as a probability over a restricted sample space, and  selective implication provides this restriction. We should then expect atoms of the form $\Pr(\alpha  \mid  \gamma)\geq \epsilon$ to be equivalent with 
\[
\gamma \supset \Pr(\alpha)\geq \epsilon
\]
and, similarly, the comparison atoms $\Pr(\alpha  \mid  \gamma)\leq \Pr(\beta  \mid  \gamma)$, with both sides conditioned over a common formula $\gamma$, to be equivalent to
\[
\gamma \supset \Pr(\alpha)\geq \Pr(\beta).
\] 
We can prove that this is indeed the case, once we give a precise meaning to the atoms featuring conditional probabilities (which might be thought, say, as additional operators in an extension $\PCO^*$ of the language $\PCO$). Write $\vartriangleright$ for either $\geq$ or $>$. Taking into account that $T^- =\emptyset$ implies $(T^\gamma)^- = \emptyset$, a moment of reflection suggests the following semantic clauses for conditional probabilistic atoms:
\begin{align*}
T\models & \Pr(\alpha \mid \gamma) \vartriangleright \epsilon \text{ iff }  (T^\gamma)^- = \emptyset \text{ or } \frac{P_T(\alpha\land\gamma)}{P_T(\gamma)}\vartriangleright \epsilon.\\
T\models & \Pr(\alpha \mid \gamma) \vartriangleright \Pr(\beta \mid \delta) \text{ iff } (T^\gamma)^-= \emptyset \text{ or } (T^\delta)^- = \emptyset \text{ or } \frac{P_T(\alpha\land\gamma)}{P_T(\gamma)}\vartriangleright \frac{P_T(\beta\land\delta)}{P_T(\delta)}.
\end{align*}
We then have

\begin{teo}
For every causal multiteam $T$, all $\epsilon\in [0,1]\cap\mathbb Q$ and all $\CO$ formulas $\alpha,\beta,\gamma$,
\begin{enumerate}
\item $T\models \Pr(\alpha  \mid  \gamma)\vartriangleright \epsilon      \iff    T\models  \gamma \supset \Pr(\alpha)\vartriangleright \epsilon$
\item $T\models \Pr(\alpha  \mid  \gamma)\vartriangleright \Pr(\beta  \mid  \gamma)     \iff      T\models \gamma \supset \Pr(\alpha)\vartriangleright \Pr(\beta)$.
\end{enumerate}
 
\end{teo}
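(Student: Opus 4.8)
The plan is to unfold both sides of each biconditional directly from the semantic clauses for $\supset$ and for the probabilistic atoms, reducing the whole statement to a single counting identity about the conditioned submultiteam $T^\gamma$. For the right-hand side of (1), the clause for selective implication gives $T\models\gamma\supset\Pr(\alpha)\vartriangleright\epsilon$ iff $T^\gamma\models\Pr(\alpha)\vartriangleright\epsilon$, and the clause for evaluation atoms then unfolds this to: either $T^\gamma$ is empty (i.e. $(T^\gamma)^-=\emptyset$), or $P_{T^\gamma}(\alpha)\vartriangleright\epsilon$. The left-hand side, by the proposed clause for conditional atoms, reads: either $(T^\gamma)^-=\emptyset$, or $P_T(\alpha\land\gamma)/P_T(\gamma)\vartriangleright\epsilon$. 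So the two empty-case disjuncts coincide verbatim, and it suffices to show that on the nonempty case the two remaining inequalities are the same, i.e. that $P_{T^\gamma}(\alpha)=P_T(\alpha\land\gamma)/P_T(\gamma)$ whenever $(T^\gamma)^-\neq\emptyset$.

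The key step is therefore to establish this identity. First I would note that $T^\gamma$ and $T$ share the same function component $\F$, so for every $s\in(T^\gamma)^-$ the single-assignment evaluation $(\{s\},\F)\models\alpha$ is computed exactly as in $T$; combining this with the clause for $\land$ on singletons yields $((T^\gamma)^\alpha)^-=\{s\in T^-\mid (\{s\},\F)\models\alpha\text{ and }(\{s\},\F)\models\gamma\}=(T^{\alpha\land\gamma})^-$. Since $(T^\gamma)^-\neq\emptyset$ forces $T^-\neq\emptyset$, all the relevant probabilities are defined and $P_T(\gamma)>0$, so I can write
\[
P_{T^\gamma}(\alpha)=\frac{|((T^\gamma)^\alpha)^-|}{|(T^\gamma)^-|}=\frac{|(T^{\alpha\land\gamma})^-|}{|(T^\gamma)^-|}=\frac{|(T^{\alpha\land\gamma})^-|/|T^-|}{|(T^\gamma)^-|/|T^-|}=\frac{P_T(\alpha\land\gamma)}{P_T(\gamma)},
\]
where the factor $|T^-|$ cancels. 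This delivers the identity and completes (1).

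For (2) the argument is parallel, now using the clause for comparison atoms and applying the identity twice over the common antecedent $\gamma$. Unfolding $T\models\gamma\supset\Pr(\alpha)\vartriangleright\Pr(\beta)$ gives: either $(T^\gamma)^-=\emptyset$, or $P_{T^\gamma}(\alpha)\vartriangleright P_{T^\gamma}(\beta)$; substituting $P_{T^\gamma}(\alpha)=P_T(\alpha\land\gamma)/P_T(\gamma)$ and $P_{T^\gamma}(\beta)=P_T(\beta\land\gamma)/P_T(\gamma)$ turns the second disjunct into $P_T(\alpha\land\gamma)/P_T(\gamma)\vartriangleright P_T(\beta\land\gamma)/P_T(\gamma)$, which is exactly the nonempty clause of the conditional comparison atom once $\delta$ is taken to be $\gamma$ (so the two emptiness disjuncts $(T^\gamma)^-=\emptyset$ and $(T^\delta)^-=\emptyset$ collapse to one).

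The only delicate point, and the place I would be most careful, is the bookkeeping of the degenerate cases rather than the algebra: one has to check that the emptiness disjunct on the two sides of each biconditional really matches (for (2), that the single clause $(T^\gamma)^-=\emptyset$ lines up with the collapsed double clause of the comparison-atom semantics), and that the identity is only ever invoked when $(T^\gamma)^-\neq\emptyset$, so that no division by zero occurs and $P_T$ is everywhere defined. Everything else is the routine cancellation displayed above, justified by the flatness of $\CO$, which guarantees that the single-assignment readings of $\alpha$, $\gamma$ and $\alpha\land\gamma$ agree with the multiteam readings used in the definition of $P_T$.
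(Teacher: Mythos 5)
Your proposal is correct and follows essentially the same route as the paper's own proof: unfold the clause for $\supset$ and the probabilistic atoms, then reduce everything to the identity $P_{T^\gamma}(\alpha)=P_T(\alpha\land\gamma)/P_T(\gamma)$ via the cancellation of the factor $|T^-|$, exactly as in the paper's chain of equivalences. Your extra bookkeeping of the empty-multiteam disjuncts is a welcome refinement of a point the paper leaves implicit, but it does not change the argument.
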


\begin{proof}
We prove only 1., as the calculations for 2. are completely analogous. We have: 
\begin{align*}
T\models \gamma \supset \Pr(\alpha)\vartriangleright \epsilon & \iff   T^{\gamma}\models \Pr(\alpha)\vartriangleright \epsilon \\
 & \iff \frac{|\{s\in (T^{\gamma})^-  \ | \ (\{s\},\F)\models \alpha\}|}{|(T^{\gamma})^-|}\vartriangleright \epsilon \\
 & \iff \frac{|\{s\in (T^{\gamma})^-  \ | \  (\{s\},\F)\models \alpha\}|}{|T^-|} \cdot \frac{|T^-|}{|(T^{\gamma})^-|} \vartriangleright \epsilon \\
 & \iff \frac{P_T(\alpha \land \gamma)}{P_T(\gamma)} \vartriangleright \epsilon \\ 
 & \iff P_T(\alpha \ | \ \gamma) \vartriangleright \epsilon.
\end{align*}
\end{proof}

\noindent The case of comparison atoms of the general form $\Pr(\alpha \mid \gamma) \vartriangleright \Pr(\beta \mid \delta)$ -- with $\gamma$ possibly distinct from $\delta$ -- is more subtle, and it is not immediately evident whether such atoms may be definable or not in $\PCO$. The forthcoming \cite{BarVir2023} shows that they are not, and that there are some specific instances that are not expressible at all (more precisely, even simpler atoms of the form $\Pr(\alpha \mid \gamma) \vartriangleright \Pr(\beta)$, aka $\Pr(\alpha \mid \gamma) \vartriangleright \Pr(\beta \mid \top)$, are not in general expressible in $\PCO$). The method of proof, involving an abstract characterization of $\PCO$ and some geometry, goes beyond the scope of the present paper.

 Despite this limitation in the expressibility of conditional probabilities, we see, starting from the next subsection, that the selective implication operator has a wide range of applications within the context of $\PCO$. Its wide applicability stems from the fact that $\supset$ \corrf{distributes over} 
  all the other operators (except $\cf$) and that $\cf$ distributes over $\supset$.  

We remark that the general comparison atoms would be definable if we allowed for either existential quantifiers or infinitary disjunctions, as follows:
\[
T\models \Pr(\alpha \mid \gamma) \vartriangleright \Pr(\beta \mid \delta) \iff T\models \bigsqcup_{\epsilon\in [0,1]\cap\mathbb Q} \big((\gamma \supset \Pr(\alpha) \vartriangleright \epsilon) \land  (\delta \supset\Pr(\beta)\vartriangleleft \epsilon)\big).
\]
where $\vartriangleleft$ denotes the converse of $\vartriangleright$. 
We will consider such an infinitary language in section \ref{sec: infinitary language}.

A related issue is whether \emph{probabilistic conditional independence} is definable or at least expressible in our languages.
 An expression of the form  $\alpha\pindep\beta$ states that $\alpha$ and $\beta$ are probabilistically independent; we can take it to be an abbreviation for $\Pr(\alpha) = 0\sqcup\Pr(\beta \mid \alpha) = \Pr(\beta)$. 
  More generally, we can talk of conditional independence atoms $\alpha\pindep_{\gamma}\beta$ as abbreviations for 
\[
 \Pr(\gamma) = 0 \ \sqcup \ \Pr(\alpha \land \gamma) = 0 \   \sqcup \  \Pr(\beta \ | \ \alpha\land\gamma) = \Pr(\beta \ | \  \gamma).
\]
While conditional comparison atoms of the general form  
 $\Pr(\alpha \mid \beta)=\Pr(\gamma\mid \delta)$ are not in general expressible in $\PCO$ (\cite{BarVir2023}), 
 we leave it as an open issue whether the specific atom $\Pr(\beta \ | \ \alpha\land\gamma) = \Pr(\beta \ | \  \gamma)$ needed here is expressible, and whether the conditional independence atoms are. The results in \cite{HanVir2022} point towards inexpressibility of these atoms, but the methods that are used in that paper rely on quantifiers.

\subsection{Mixed statements} \label{sec: mixed statements}

The core of structural equation modeling is the study of  expressions which involve both classical probabilistic operations, such as conditioning, and causal operations. For example, in \cite{PeaGlyJew2016} we find expressions like:
\begin{equation}\label{equation: postintervention conditioning}
P(\SET Y=\SET y \ | \ do(\SET X=\SET x),\SET Z=\SET z)=\epsilon
\end{equation}
\begin{equation}\label{equation: preintervention conditioning}
P(\SET Y_{\SET X=\SET x}=\SET y \ | \ \SET Z=\SET z)=\epsilon,
\end{equation}
some instances of which we have considered in the introduction. 
Both notations describe probabilities that subsist in a given SEM after an intervention $do(\SET X = \SET x)$. The difference between the two cases is that, in the former, the probability is conditional upon $\SET Z$ being equal to $\SET z$ \emph{in the post-intervention system}; while the latter refers to $\SET Z$ being equal to $\SET z$ in the \emph{initial} (pre-intervention) system. We have here two distinct notations in which the condition $\SET Z=\SET z$ appears essentially in the same position, but with different meanings. Thus this notation is misleading; furthermore, it is not at all perspicuous about the order in which observation (conditioning) and intervention take place; it does not even suggest the fact the two operations do not commute with each other. Indeed, an expression of the form $P(\SET Y=\SET y \ | \ \SET Z=\SET z , do(\SET X=\SET x))=\epsilon$ is simply taken as synonym of \eqref{equation: postintervention conditioning}.

 It has been observed 
 that formulas of the form \eqref{equation: preintervention conditioning} can assign nonzero probability to expressions that may superficially appear to be inconsistent. One such example, taken from \cite{PeaGlyJew2016}, is the statement that ``the probability that the duration of my journey would have been $y$ if I had taken the highway, $X=1$, when in fact I have not, $X=0$'',  which is rendered as $P(Y_{X = 1}= y \ | \ X=0)$.\footnote{Of course, the ``inconsistency'' of $X=0$ with $X=1$ is solved by the fact that the two formulas are evaluated in two different SEMs, or two different causal teams -- say $T$ and $T_{X=1}$, respectively.} This is one of the reasons why, in the literature, sometimes only expressions of the form \eqref{equation: preintervention conditioning} are called counterfactuals, while expressions of the form \eqref{equation: postintervention conditioning} are simply labelled as $do$ expressions. 

The language $\PCD$, instead, does not adopt special notations to distinguish the two cases; the distinction is simply given by a different order in the application of the two conditionals $\supset$ and $\cf$. Indeed, \eqref{equation: postintervention conditioning} can be expressed in $\PCD$ as
\[
\SET X=\SET x \cf (\SET Z= \SET z \supset \Pr(\SET Y= \SET y) = \epsilon)
\] 
while \eqref{equation: preintervention conditioning} corresponds to the formula
\[
 \SET Z= \SET z \supset ( \SET X=\SET x \cf \Pr(\SET Y= \SET y) = \epsilon).
\] 

\noindent This notation now makes it completely clear that two different kinds of conditioning are involved, and that observation and intervention are performed in a certain order, thus avoiding the ambiguities of \eqref{equation: postintervention conditioning} and \eqref{equation: preintervention conditioning}. Simple examples can be devised to show that, indeed, these two expressions are not equivalent.\footnote{E.g., the examples at the end of section 5.1 of \cite{BarSan2020} can be adapted to the probabilistic context.} We now prove that they have the intended meaning.

\begin{teo} Let $\sigma=(Dom,Ran)$, $\alpha,\beta$ be $\CO_{\sigma}$ formulas, $T=(T^-,\F)$ a nonempty, recursive 
 causal multiteam, $\SET X \in Dom$. Then:
\begin{enumerate}
\item $T\models \SET X=\SET x \cf (\alpha \supset \Pr(\beta) = \epsilon)$ if and only if $P_{T_{\SET X = \SET x}}(\beta \mid \alpha) = \epsilon$ \corrf{or $(T_{\SET X = \SET x})^\alpha$ is empty}.
\item $T\models \alpha \supset (\SET X=\SET x \cf \Pr(\beta) = \epsilon)$ if and only if $P_T(\SET X=\SET x \cf \beta \mid \alpha) = \epsilon$ \corrf{or $T^\alpha$ is empty}.
\end{enumerate}
Similar statements hold with $\leq,<,\geq$ or $>$ in place of $=$.
\end{teo}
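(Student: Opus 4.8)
The plan is to prove both equivalences by unwinding the semantic clauses for $\cf$, $\supset$ and the atom $\Pr(\beta)=\epsilon$ in the order dictated by the syntax, and then to reduce the resulting probabilities to the claimed (conditional) probabilities by a counting argument. Throughout I would use flatness of $\CO$ to pass freely between satisfaction of a $\CO$ formula by a causal multiteam and satisfaction by its singleton sub-multiteams, and I would use the earlier remark that an intervention never changes the cardinality of a causal multiteam.

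For part 1, I would first apply the clause for $\cf$: since $\SET X=\SET x$ is consistent, $T\models \SET X=\SET x\cf(\alpha\supset\Pr(\beta)=\epsilon)$ holds iff $T_{\SET X=\SET x}\models\alpha\supset\Pr(\beta)=\epsilon$. Writing $U\dfn T_{\SET X=\SET x}$, the clause for $\supset$ turns this into $U^\alpha\models\Pr(\beta)=\epsilon$, and the semantics of the atom turns it into: $(U^\alpha)^-$ is empty, or $P_{U^\alpha}(\beta)=\epsilon$. It then remains to observe that on a nonempty $U^\alpha$ one has $P_{U^\alpha}(\beta)=P_U(\beta\mid\alpha)$: indeed $\{s\in(U^\alpha)^-\mid(\{s\},\F)\models\beta\}=\{s\in U^-\mid(\{s\},\F)\models\alpha\land\beta\}$ by the meaning of $\land$ on singletons, so dividing by $|(U^\alpha)^-|=|U^-|\cdot P_U(\alpha)$ yields exactly $P_U(\alpha\land\beta)/P_U(\alpha)$. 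This is essentially the earlier conditional-probability theorem applied to the team $U=T_{\SET X=\SET x}$, and the ``or empty'' disjunct matches the empty-team convention in the semantics of the atom.

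For part 2 the order of the clauses is reversed. The clause for $\supset$ gives $T\models\alpha\supset(\SET X=\SET x\cf\Pr(\beta)=\epsilon)$ iff $T^\alpha\models\SET X=\SET x\cf\Pr(\beta)=\epsilon$, and the clause for $\cf$ turns this into $(T^\alpha)_{\SET X=\SET x}\models\Pr(\beta)=\epsilon$, i.e. into: $(T^\alpha)_{\SET X=\SET x}$ is empty, or $P_{(T^\alpha)_{\SET X=\SET x}}(\beta)=\epsilon$. The heart of the argument is to identify this last probability with $P_T(\SET X=\SET x\cf\beta\mid\alpha)$. The key structural fact is that for a single assignment $s$ one has $(\{s\},\F)_{\SET X=\SET x}=(\{s^\F_{\SET X=\SET x}\},\F_{\SET X=\SET x})$, so that, by flatness, $(\{s\},\F)\models\SET X=\SET x\cf\beta$ iff the intervened singleton satisfies $\beta$. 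Since the intervention acts assignment-by-assignment and preserves $Key$, the map $s\mapsto s^\F_{\SET X=\SET x}$ is injective and restricts to a bijection from $(T^\alpha)^-$ onto $(T^\alpha)_{\SET X=\SET x}^-$; hence both the numerator $|\{t\in(T^\alpha)_{\SET X=\SET x}^-\mid(\{t\},\F_{\SET X=\SET x})\models\beta\}|$ and the denominator $|(T^\alpha)_{\SET X=\SET x}^-|$ can be rewritten as counts of assignments $s\in T^-$ satisfying, respectively, $\alpha\land(\SET X=\SET x\cf\beta)$ and $\alpha$. Dividing, and recognizing numerator and denominator as $|T^-|\cdot P_T((\SET X=\SET x\cf\beta)\land\alpha)$ and $|T^-|\cdot P_T(\alpha)$, gives exactly $P_T(\SET X=\SET x\cf\beta\mid\alpha)$; moreover $(T^\alpha)_{\SET X=\SET x}$ is empty iff $T^\alpha$ is (again by cardinality preservation), so the ``or empty'' disjuncts agree.

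The main obstacle is the bookkeeping in part 2: one must keep straight that the restriction to $\alpha$ happens on the pre-intervention team while $\beta$ is evaluated on the post-intervention team, and justify that counting the intervened assignments reproduces the counterfactual $\CO$ formula $\SET X=\SET x\cf\beta$ evaluated on the original assignments. This is precisely where flatness (to descend to singletons) and cardinality-preservation (to keep the counts in sync via $Key$) do the work. Finally, the ``similar statements'' for $\leq,<,\geq,>$ require no new idea: the computations above determine the exact value of the relevant probability, and the semantics of the corresponding atoms (directly for $\geq,>$, and through their defining abbreviations for $\leq,<,=$) all have the form ``empty, or the probability stands in the given relation to $\epsilon$'', so the same equalities of probabilities yield the claim with any of these relations in place of $=$.
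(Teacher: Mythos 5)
Your proposal is correct and follows essentially the same route as the paper's proof: unwind the semantic clauses in the order dictated by the syntax, then convert the resulting probability into the claimed conditional probability by a counting argument, with the cardinality-preservation of interventions (via $Key$) doing the essential work in part 2. Your explicit bijection $s\mapsto s^\F_{\SET X=\SET x}$ just makes precise what the paper's ``fourth equivalence'' asserts.
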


\noindent The theorem makes it clear that the former is a conditional probability in the intervened multiteam, while the latter is a conditional probability in the initial causal multiteam.

\begin{proof}
1) We prove the statement only for $=$, as the other cases are analogous. By the semantical clauses, $T\models \SET X=\SET x \cf (\alpha \supset \Pr(\beta) = \epsilon)$ if and only  $((T_{\SET X=\SET x})^\alpha)^-$ is empty or $(T_{\SET X=\SET x})^\alpha\models \Pr(\beta) = \epsilon$. Under the assumption of nonemptiness of $((T_{\SET X=\SET x})^\alpha)^-$, the following equivalences hold:
\[
(T_{\SET X=\SET x})^\alpha\models \Pr(\beta) = \epsilon  \iff
\]
\[
\iff \frac{|\{s\in((T_{\SET X=\SET x})^\alpha)^- \ | \ \{s\}\models \beta\}|}{((T_{\SET X=\SET x})^\alpha)^-}  = \epsilon
\]
\[
\iff \frac{|T_{\SET X = \SET x}^-|}{|((T_{\SET X=\SET x})^\alpha)^-|}  \cdot  \frac{|\{s\in((T_{\SET X=\SET x})^\alpha)^- \ | \ \{s\}\models \beta\}|}{|T_{\SET X = \SET x}^-|}  = \epsilon
\]
\[
\iff \frac{|T_{\SET X = \SET x}^-|}{|((T_{\SET X=\SET x})^\alpha)^-|}  \cdot  \frac{|\{s\in T_{\SET X = \SET x}^- \ | \ \{s\}\models \beta \land \alpha\}|}{|T_{\SET X = \SET x}^-|}  = \epsilon
\]
\[
\iff \frac{1}{P_{T_{\SET X = \SET x}}(\alpha)} \cdot P_{T_{\SET X = \SET x}}(\beta\land \alpha) = \epsilon
\]
\[
\iff P_{T_{\SET X = \SET x}}(\beta \ | \ \alpha) = \epsilon.
\]

2) Notice that $(T^{\alpha})_{\SET X=\SET x}^-$ is nonempty if and only if  $(T^\alpha)^-$ is nonempty, since interventions preserve the size of multiteams. Then, assuming 
 $(T^\alpha)^-$ is nonempty, the equivalence can be proved as follows:
\[
T\models \alpha \supset (\SET X=\SET x \cf \Pr(\beta) = \epsilon) \iff
\]
\[
\iff (T^{\alpha})_{\SET X=\SET x} \models \Pr(\beta) = \epsilon
\]
\[
\iff \frac{|\{s\in (T^{\alpha})_{\SET X=\SET x}^- \ | \ (\{s\},\F_{\SET X=\SET x})\models \beta \}|}{|(T^{\alpha})_{\SET X=\SET x}^-|} = \epsilon
\]
\[
\iff \frac{|\{t\in (T^{\alpha})^- \ | \ (\{t\}^\F_{\SET X=\SET x},\F_{\SET X=\SET x})\models \beta\}|}{|(T^{\alpha})^-|} = \epsilon
\]
\[
\iff \frac{|\{t\in (T^{\alpha})^- \ | \ (\{t\}^\F_{\SET X=\SET x},\F_{\SET X=\SET x})\models \beta\}|}{|T^-|}\cdot\frac{|T^-|}{|(T^{\alpha})^-|} = \epsilon
\]
\[
\iff \frac{P_T(\alpha \land (\SET X=\SET x \cf \beta))}{P_T(\alpha)} = \epsilon
\]
\[
\iff P_T(\SET X=\SET x \cf \beta \ | \ \alpha) = \epsilon
\]
where the fourth equivalence makes again essential use of the fact that interventions preserve the size of multiteams.  
\end{proof}


It is also worth mentioning that, in some papers on deterministic causal models (e.g. \cite{GalPea1998}, \cite{Hal2000}) the notation $\SET Y_{\SET x}(\SET u)=\SET y$ is used to express that, under the \corrf{assumption} 
 that the (whole set of) exogenous variables $\SET U$ has values $\SET u$, fixing some endogenous variables $\SET X$ to $\SET x$ forces \corrf{the exogenous variables} $\SET Y$ to take value $\SET y$.\footnote{This notation, in turn, is adapted from the literature on the Neyman-Rubin ``potential outcome'' approach to causation (\cite{Ney1923},\cite{Rub1974}).} We could try to  extend the use of this notation to causal (multi)teams. One immediately wonders, at this point, whether the \corrf{assumption}  
  $\SET U= \SET u$ refers to the pre- or the post-intervention system.  Since exogenous variables are not affected by interventions over endogenous variables, in case $\SET X \cap \SET U = \emptyset$ the answer is that the statements $\SET U=\SET u \supset (\SET X=\SET x \cf \SET Y=\SET y)$ and $\SET X=\SET x \cf (\SET U=\SET u \supset \SET Y=\SET y)$ are equivalent; in this case, pre- and post-conditioning coincide. But if it happens that  $\SET X \cap \SET U \neq \emptyset$ -- as is allowed e.g. in our formalism -- then the two statements may not be equivalent, and the notation $\SET Y_{\SET x}(\SET u)=\SET y$ is not adequate anymore. Some of the authors who use this kind of notations tend indeed to ignore the possibility of interventions over exogenous variables, perhaps because of some form of awareness of the notational ambiguity that may arise.

\subsection{Normal form}\label{subs: normal form}

The language $\PCD$ allows for complex combinations of the conditionals $\supset$ and $\cf$, in ways that may seem at first impossible to capture by the notational schemata \eqref{equation: postintervention conditioning} and \eqref{equation: preintervention conditioning}. One example is provided by the case in which we try to assert that a probability is conditional over both a pre-intervention condition $\alpha$ and a post-intervention condition $\beta$: 
\begin{equation}\label{equation: prepostintervention conditioning}
 \alpha \supset (\SET X=\SET x \cf (\beta \supset \Pr(\psi) = \epsilon)).
\end{equation}
The discussion in \cite{PeaGlyJew2016} (section 4.1) shows that the need to evaluate these kinds of probabilities may arise in practical applications, and also shows (in a particular case) how to bend the notation template \eqref{equation: preintervention conditioning} to cover particular instances of this scheme; for example, in line with this suggestion, our $\PCO$ statement
\begin{equation} \label{equation: prepostintervention example}
\alpha \supset (\SET X=\SET x \cf (\SET Z=\SET z \supset \Pr(\SET Y = \SET y) = \epsilon))
\end{equation}
 might be rendered as:
\begin{equation}\label{equation: bent potential outcome notation}
P(\SET Y_{\SET X= \SET x} = \SET y  \ | \  \SET Z_{\SET X= \SET x} = \SET z, \alpha) = \epsilon 
\end{equation}
where the subscript of $\SET Z$ indicates that the condition $\SET Z = \SET z$ is intended as post-intervention. Indeed, similar expressions occur e.g. in \cite{PeaGlyJew2016}. 
 We point out that in $\PCD$ we can in general establish the equivalence of 
   $(\SET X = \SET x \cf \alpha) \supset (\SET X = \SET x \cf\psi)$ and $\SET X = \SET x \cf (\alpha \supset \psi)$, i.e. the distributivity of $\cf$ over $\supset$; therefore, \eqref{equation: prepostintervention example} might also be written as 
\[
\alpha\supset \big[ (\SET X=\SET x \cf \SET Z=\SET z) \supset (\SET X=\SET x \cf \Pr(\SET Y = \SET y) = \epsilon) \big]
\]
which provides some sort of justification for the ``equivalence'' between \eqref{equation: prepostintervention example} and \eqref{equation: bent potential outcome notation}. 


One could stretch the counterfactual notation \eqref{equation: bent potential outcome notation} in order to represent more complex forms of conditioning; the notations, however, become quickly quite complicated. As a still tolerably simple example of this kind, we may consider the case of conditioning over some condition $\alpha$ that holds in between two interventions:  
\begin{equation}\label{equation: infraintervention conditioning}
\SET X=\SET x \cf (\alpha \supset (\SET W = \SET w \cf \Pr(\psi) = \epsilon)).
\end{equation}
For example, one might want to express that $\epsilon$ is the probability that a patient that reacts badly ($\alpha$) to a given therapy ($\SET X=\SET x$) would recover ($\psi$) after receiving a second, distinct kind of treatment ($\SET W = \SET w$). How could we express such a probability in the informal spirit that animates the notation of \cite{PeaGlyJew2016}? We need to talk of a condition $\alpha$ that may occur after the intervention $do(\SET X = \SET x)$; and a condition $\psi$ that may occur after the \emph{sequence} of interventions $do(\SET X = \SET x), do(\SET W = \SET w)$. The trick that makes the notation \eqref{equation: bent potential outcome notation} work once more is the following:  any sequence of interventions has the same effect as an appropriate single intervention, the idea being that, if there is some variable which is affected by both interventions, then the effect of the second intervention prevails.\footnote{This issue is discussed in \cite{BarSan2018} (section 7) and \cite{BarSan2020} (section 4.3), in reference to what is called there the ``overwriting rule''. 
 Similar ideas underlie an inference rule described in \cite{Bri2012}.} 
In the case at hand, let $\SET X'$ stand for the set of variables of $\SET X$ that are not in $\SET W$, $\SET X' = \SET X \setminus \SET W$, and $\SET x'$ stand for the corresponding values from $\SET x$. Then the sequence of interventions $do(\SET X = \SET x), do(\SET W = \SET w)$ has the same effect as the single intervention $do(\SET X' = \SET x' \land \SET W = \SET w)$. Combining this observation with the distributivity of $\cf$ over $\supset$, 
 we can transform \eqref{equation: infraintervention conditioning} into 
\[
(\SET X=\SET x \cf \alpha) \supset \big[(\SET X' = \SET x' \land \SET W = \SET w) \cf \Pr(\psi) = \epsilon\big].
\]
 Writing $\alpha_{\SET X=\SET x}$ for the ``formula'' obtained from $\alpha$ by adding the subscript $_{\SET X=\SET x}$ to each variable occurring in it, we can then convert this last expression into a notation similar to those which are used in \cite{PeaGlyJew2016}:
\[
P(\SET Y_{\SET X'= \SET x' \land \SET W = \SET w} = \SET y  \ | \  \alpha_{\SET X= \SET x}) = \epsilon 
\]
 This example illustrates the increasing difficulty, within the formalism of \cite{PeaGlyJew2016}, in producing correct expressions for concepts that are very naturally expressed in $\PCO$.
Interestingly, however, this kind of notational translation can in principle always be achieved, as illustrated by the following normal form result.

\begin{teo}[Pearl-style normal form]\label{thm: Pearl normal form}
Every $\PCO_{\sigma}$ formula $\varphi$ is equivalent to a $\PCO_{\sigma}$ formula $\varphi'$ such that:
\begin{enumerate}[A.]
\item all consequents of $\cf$ are probabilistic atoms
\item all consequents of $\supset$ are counterfactuals or probabilistic atoms.
\end{enumerate}
\end{teo}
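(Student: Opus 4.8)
The plan is to prove the statement by structural induction on $\varphi$, pushing the two conditionals $\cf$ and $\supset$ inward until their consequents are forced into the permitted shapes. The engine of the argument is a small stock of equivalences, all of which either appear in the preceding discussion or follow immediately from the semantic clauses: the distribution of each conditional over the binary connectives, namely $\SET X = \SET x \cf (\psi \land \chi) \equiv (\SET X = \SET x \cf \psi) \land (\SET X = \SET x \cf \chi)$ together with the analogues for $\sqcup$ and for $\supset$ in place of $\cf$; the merging of nested selective implications $\alpha \supset (\beta \supset \psi) \equiv (\beta \land \alpha) \supset \psi$, which is immediate from $(T^\alpha)^\beta = T^{\alpha \land \beta}$; the overwriting rule $\SET X = \SET x \cf (\SET Y = \SET y \cf \psi) \equiv (\SET X' = \SET x' \land \SET Y = \SET y) \cf \psi$ with $\SET X' = \SET X \setminus \SET Y$; the distributivity of $\cf$ over $\supset$, $\SET X = \SET x \cf (\alpha \supset \psi) \equiv (\SET X = \SET x \cf \alpha) \supset (\SET X = \SET x \cf \psi)$; and finally the elementary fact that a literal is interchangeable with a probabilistic atom, $Y = y \equiv \Pr(Y=y) \geq 1$ (and likewise for $Y \neq y$), which lets me turn any literal consequent into a probabilistic atom.

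The heart of the proof is two auxiliary lemmas, each proved by induction on the size of its second argument: (A) for every intervention $\SET X = \SET x$ and every formula $\psi$ already in normal form, $\SET X = \SET x \cf \psi$ is equivalent to a normal form formula; and (B) for every $\CO$ formula $\alpha$ and every normal $\psi$, $\alpha \supset \psi$ is equivalent to a normal form formula. In each lemma I case on the outermost connective of the normal consequent $\psi$. The crucial observation is that, because $\psi$ is already normal, the dangerous nested cases collapse at once. For (A): if $\psi = \beta \supset \theta$, then clause B forces $\theta$ to be a counterfactual or a probabilistic atom, so after distributing $\cf$ over $\supset$ the inner $\SET X = \SET x \cf \theta$ normalizes (by overwriting, or trivially when $\theta$ is a probabilistic atom) to a single counterfactual $\kappa$, whence $(\SET X = \SET x \cf \beta) \supset \kappa$ already meets clause B, its antecedent being a $\CO$ formula; and if $\psi = \SET Y = \SET y \cf \theta$ then $\theta$ is a probabilistic atom and overwriting yields a single counterfactual with a probabilistic-atom consequent. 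The only genuinely recursive cases are $\psi = \theta_1 \land \theta_2$ and $\psi = \theta_1 \sqcup \theta_2$, where distribution pushes the conditional onto the strictly smaller normal subformulas $\theta_1, \theta_2$, so the induction on size is well-founded. Lemma (B) is analogous, the case $\psi = \beta \supset \theta$ now being handled by the merge rule, which again terminates in one step because $\theta$ is a counterfactual or a probabilistic atom.

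With the two lemmas in hand the main induction is routine. Atomic $\varphi$ contains no occurrence of $\cf$ or $\supset$ and is already normal. For $\varphi = \psi \land \chi$ and $\varphi = \psi \sqcup \chi$ the induction hypothesis supplies normal $\psi', \chi'$, and $\psi' \land \chi'$ (resp.\ $\psi' \sqcup \chi'$) is again normal, since the top connective creates no new conditional consequent. For $\varphi = \alpha \supset \psi$ I first normalize $\psi$ by the induction hypothesis and then invoke Lemma (B); for $\varphi = \SET X = \SET x \cf \psi$ I normalize $\psi$ and invoke Lemma (A).

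I expect the main obstacle to be precisely the termination bookkeeping: naively, pushing $\cf$ inward manufactures fresh occurrences of $\supset$ (through distributivity) and pushing $\supset$ inward can re-expose conditionals, so one must rule out an infinite regress. The device that defeats this is to run the auxiliary inductions on consequents that are \emph{already} in normal form, so that the only surviving recursive descents are through $\land$ and $\sqcup$, which strictly decrease size, while every nested-$\cf$ or nested-$\supset$ case terminates in a single step because normality bounds the shape of the inner consequent. A secondary point requiring care is the empty and inconsistent-antecedent boundary behaviour, but this is harmless: on empty causal multiteams all probabilistic atoms hold, and an inconsistent $\SET X = \SET x$ makes every counterfactual $\SET X = \SET x \cf \psi$ true, so each of the equivalences above holds vacuously in those cases as well.
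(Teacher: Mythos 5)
Your proposal follows essentially the same route as the paper: the same stock of rewrite rules (distribution of $\cf$ and $\supset$ over $\land$ and $\sqcup$, distributivity of $\cf$ over $\supset$, merging of nested $\supset$, overwriting of nested $\cf$, and replacement of literals by $\Pr(\cdot)\geq 1$), with the paper running the $\cf$-phase and then the $\supset$-phase as two sweeps of rewriting where you package the same steps as two auxiliary lemmas inside a structural induction. Your termination bookkeeping is if anything more careful than the paper's.

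One concrete point is wrong as stated, though easily repaired. You claim that for an inconsistent antecedent $\SET X = \SET x$ ``each of the equivalences above holds vacuously.'' This is true for the distribution rules but false for the overwriting rule $\SET X = \SET x \cf (\SET Y = \SET y \cf \psi) \equiv (\SET X' = \SET x' \land \SET Y = \SET y) \cf \psi$: if the inconsistency of $\SET X = \SET x$ lives on a variable belonging to $\SET Y$, then forming $\SET X' = \SET X \setminus \SET Y$ deletes exactly the offending conjuncts, so the right-hand antecedent can be consistent and the right-hand side false while the left-hand side is vacuously true (e.g.\ $(X{=}0 \land X{=}1) \cf (X{=}0 \cf \psi)$ versus $X{=}0 \cf \psi$). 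The paper avoids this by first rewriting any counterfactual with inconsistent antecedent via $(\SET X = \SET x \land Y = y) \cf \varphi \equiv (\SET X = \SET x \land Y = y) \cf Y = y$ before overwriting is ever applied; you should insert an analogous preliminary step (replacing inconsistent-antecedent counterfactuals by a trivially true normal-form formula) before invoking overwriting in Lemma (A).
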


\noindent In other words, every $\PCO$ formula can be written as a Boolean combination (using $\land,\sqcup$) of three simple types of formulas (writing $t$ for either $\epsilon\in[0,1]\cap \mathbb Q$ or $\Pr(\beta)$):
\begin{enumerate}
\item conditional probability statements: $\gamma \supset \Pr(\alpha) \vartriangleright t$
\item $do$ expressions 
 without conditioning: $\SET X = \SET x \cf \Pr(\alpha) \vartriangleright t$
\item Pearl counterfactuals: 
$\gamma \supset (\SET X = \SET x \cf \Pr(\alpha) \vartriangleright t)$,
\end{enumerate}
which, by the results in sections \ref{subs: conditional probabilities of CO}-\ref{sec: mixed statements}, \corrf{translate} 
 Pearl's notations
\begin{enumerate}[1*.]
\item $\Pr(\alpha\mid\gamma) \vartriangleright t$
\item $\Pr(\alpha\mid do(\SET X = \SET x)) \vartriangleright t$
\item $\Pr(\alpha_{\SET X = \SET x}\mid\gamma) \vartriangleright t$.
\end{enumerate}

\begin{proof}
In order to obtain A., we first push $\cf$ inwards by using the following equivalences:
\begin{align*}
(\SET X = \SET x \cf (\psi\land\chi)) & \equiv  ((\SET X = \SET x \cf \psi) \land (\SET X = \SET x \cf \chi)) \\
(\SET X = \SET x \cf (\psi\sqcup\chi)) & \equiv  ((\SET X = \SET x \cf \psi) \sqcup (\SET X = \SET x \cf \chi)) \\
(\SET X = \SET x \cf (\alpha\supset\chi)) & \equiv  ((\SET X = \SET x \cf \alpha) \supset (\SET X = \SET x \cf \chi)) \\
(\SET X = \SET x \land Y=y) \cf \varphi & \equiv  (\SET X = \SET x \land Y=y) \cf Y=y  \tag{when $\SET X = \SET x \land Y=y$ is inconsistent} \\
\SET X = \SET x \cf (\SET Y = \SET y \cf \chi) & \equiv  (\SET X' = \SET x' \land \SET Y = \SET y) \cf \chi 
\end{align*}
where $\SET X' = \SET X \setminus \SET Y$ and $\SET x' = \SET x \setminus \SET y$; and provided $\SET X = \SET x$ and 
 $\SET Y = \SET y$ are consistent.\footnote{A detailed proof of this latter equivalence can be found in the Arxiv draft \cite{BarSan2017}. The proof just shows that the causal (multi)team produced by applying sequentially the interventions $do(\SET X = \SET x)$ and $do(\SET Y = \SET y)$ is the same that is obtained by the single intervention $do(\SET X' = \SET x' \land \SET Y = \SET y)$; the proof is thus essentially independent of the language in which $\chi$ is formulated.}
Secondly, we replace all consequents of the form $X=x$ (resp. $X\neq x$) with $\Pr(X=x)\geq 1$ (resp. $\Pr(X\neq x)\geq 1$).

After achieving A., we apply the following equivalences
\begin{align*}
\alpha \supset (\psi\land\chi) & \equiv (\alpha \supset \psi)\land (\alpha \supset \chi) \\
\alpha \supset (\psi\sqcup\chi) & \equiv (\alpha \supset \psi)\sqcup (\alpha \supset \chi) \\
\alpha \supset (\beta\supset\chi) & \equiv (\alpha \land \beta)\supset \chi \\
\end{align*}
 in order to push $\supset$ inwards, until it is in front of a counterfactual or an atom (and if the latter is a non-probabilistic  atom $\eta$, again replace it with $\Pr(\eta)\geq 1$).
\end{proof}

\noindent We find it to be somewhat surprising that the conditional $do$ expressions (such as $Pr(y \mid do(x),z) = \epsilon$), which have been intensely studied and are at the center of Pearl's ``\emph{do} calculus'', are in a sense, by this normal form result, omissible. The reason is clarified when translating this kind of expression into $\PCO$ as $X=x \cf (Z=z \supset \Pr(Y=y)=\epsilon)$. While the operators $\cf$ and $\supset$ clearly do not commute, the operator $\cf$ \emph{distributes} over $\supset$ 
 producing an expression $(X=x \cf Z=z) \supset (X=x \cf \Pr(Y=y)=\epsilon)$, which is a counterfactual in Pearl's sense -- although the conditioning is made on a condition that describes the outcome of an intervention.

\section{A semantically complete language for probabilistic conditionals}\label{sec: infinitary language}

As we mentioned in section 
 \ref{subs: conditional probabilities of CO}, for the purpose of applications to causal inference one may need more resources than those available to our language $\PCO$. One may then consider extensions of this language in which the needed additional notions, such as probabilistic independence atoms, general conditional comparison atoms or arithmetical operators, are available as primitives. It is then natural to ask whether this process can stop at some point. We show in this section that there is indeed an extension of $\PCO$ in which all concepts relevant to reasoning with probabilistic causal counterfactuals are expressible.\footnote{But not necessarily schematically definable. For a discussion of the split between expressibility and definability in team logics, see for example \cite{CiaBar2019}.} The price for this kind of expressive completeness will be the introduction of an infinitary connective (which immediately makes the language itself uncountable). We can prove, indeed, that no countable language suffices for the task. Fixing a finite signature, and ignoring the complications given by the causal apparatus, any causal multiteam encodes a rational-valued probability distribution over a fixed finite set (the largest team $\B_\sigma$ allowed by the signature). There are countably many such distributions. A formula characterizes a set of causal multiteams, and thus, if we assume that our language is fully general, its formulas must be capable of describing any subset of the countable set of probability distributions over $\B_\sigma$; the language must then have at least the size of the continuum.

We now allow for possibly infinite global disjunctions:
\[ 
\bigsqcup_{i\in I} \psi_i
\]
where it suffices to assume that $I$ is an at most countable index set, and that the $\psi_i$ are $\PCO$ formulas.\footnote{We thank Jonni Virtema for this observation.}\footnote{We remark that the language itself is \emph{not} countable.} We call $\PCO^\omega$ the language allowing for such disjunctions. What we obtain, more precisely, is that such a language can express all statements (about causal counterfactuals) that are purely probabilistic, and agnostic about the possible interpretation of the concept of probability. In other words, such a language will not allow us to tell apart two data populations that encode identical probability distributions (and, thus, to make claims about a frequentist origin of the probabilities). This idea is made precise by describing an appropriate closure condition for families of causal multiteams.
Given an assignment $t\in\B_\sigma$ and a team $T = (T^-,\F)$, we write $\#(t,T)$ for the number of copies of $t$ in $T^-$.
 We can then talk of the probability of $t$ in $T$, 
 \[
 \epsilon_t^T \dfn \frac{\#(t,T)}{|T^-|}.
 \]
  We say that two causal teams $S =   (S^-,\F)$, $T=(T^-,\G)$ are \textbf{rescalings} of each other ($S\sim T$) if $\F=\G$ and either $S^- = T^- = \emptyset$ or $\epsilon_t^T =\epsilon_t^S$ for each assignment $t$. 
\begin{itemize} 
\item A class $\K$ of causal multiteams of signature $\sigma$ is \textbf{closed under rescaling} if, whenever $S\in \K$,  $S \sim T$, then $T\in \K$.
\end{itemize}

\noindent \corrf{Given a set of formulas $\Gamma$ of signature $\sigma$, we write $\K_\Gamma^\sigma$ (or, more simply, $\K_\Gamma$) for the set of all causal multiteams of signature $\sigma$ that satisfy all the formulas in $\Gamma$. We say that a set $\K$ of causal multiteams of signature $\sigma$ is \textbf{definable} in a language $\La$ if there is $\Gamma \subseteq \La$ such that $\K=\K_\Gamma$.} We can now formulate our semantic completeness result for $\PCO$.

\begin{teo}\label{theorem: expressivity of PCOinf}
A \emph{nonempty} class $\K$ of multiteams of signature $\sigma$ is definable in 
 $\PCO^{\omega}_{\sigma}$ 
  iff $\K$ \corrf{contains all the empty causal multiteams of signature $\sigma$} and is closed under rescaling.
\end{teo}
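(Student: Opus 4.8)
The plan is to prove both implications, with the ``only if'' direction resting on a rescaling-invariance lemma and the ``if'' direction on an explicit construction of characteristic formulas. I first record the invariance lemma: for every $\PCO^{\omega}_{\sigma}$ formula $\varphi$ and all causal multiteams $S,T$ with $S\sim T$, we have $S\models\varphi\iff T\models\varphi$. I would prove this by induction on $\varphi$, simultaneously over all multiteams. For the base case, note that for a $\CO$ formula $\alpha$ one has $P_T(\alpha)=\sum_{(\{t\},\F)\models\alpha}\epsilon^T_t$; since $\F_S=\F_T$ and $\epsilon^S_t=\epsilon^T_t$ for all $t$, the set of singletons satisfying $\alpha$ coincides and hence $P_S(\alpha)=P_T(\alpha)$, so all probabilistic atoms and literals are preserved. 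The cases $\land$ and $\sqcup$ (finite or countable) are immediate. For $\supset$ I would check that $S\sim T$ implies $S^\alpha\sim T^\alpha$: the function components agree, and when $P_S(\alpha)>0$ we have $\epsilon^{S^\alpha}_t=\epsilon^S_t/P_S(\alpha)$ for $t\models\alpha$ (and $0$ otherwise), while both $S^\alpha,T^\alpha$ are empty when $P_S(\alpha)=0$. For $\cf$ I would check that $S\sim T$ implies $S_{\SET X=\SET x}\sim T_{\SET X=\SET x}$: the intervention is the same map on assignments (it depends only on $\F=\G$), it preserves cardinality, and $\epsilon^{T_{\SET X=\SET x}}_{t'}=\sum_{t^\F_{\SET X=\SET x}=t'}\epsilon^T_t$, so the post-intervention probabilities agree. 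This establishes the lemma.

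The ``only if'' direction now follows quickly. Let $\K=\K_\Gamma$. Every empty causal multiteam satisfies every $\PCO^{\omega}$ formula, by an easy induction extending the empty multiteam property (atoms hold by convention, and the relevant submultiteam $T^\alpha$ and intervened team $T_{\SET X=\SET x}$ of an empty $T$ remain empty); hence all empty multiteams lie in $\K$. Moreover, if $S\in\K$ and $S\sim T$, then $T\models\Gamma$ by the invariance lemma, so $T\in\K$; thus $\K$ is closed under rescaling.

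For the ``if'' direction, suppose $\K$ is nonempty, contains all empty multiteams, and is closed under rescaling. The key structural fact is that there are only countably many $\sim$-classes of nonempty multiteams of signature $\sigma$: finitely many admissible function components (all ranges are finite), and for each, countably many rational distributions over the finite set $\B_\sigma$. For a fixed nonempty $T$ I would build a single $\PCO$ formula $\delta_T$ defining, on nonempty multiteams, exactly the rescalings of $T$. Setting
\[
\delta_T^{\mathrm{dist}}\dfn\bigwedge_{t\in\B_\sigma}\big(\Pr(\SET W=t(\SET W))=\epsilon^T_t\big)
\]
pins down the distribution on any nonempty $S$. To recover the function component I probe each causal law by a \emph{total} intervention: for each $Y\in Dom$ and $\SET w\in Ran(\SET W_Y)$, the intervention $do(\SET W_Y=\SET w)$ sets every variable but $Y$ directly to $\SET w$, so afterwards $Y$ is constantly $\F_Y(\SET w)$ if $Y$ is endogenous, and retains its original marginal if $Y$ is exogenous. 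Writing $r^{Y,\SET w}_y\dfn P_{T_{\SET W_Y=\SET w}}(Y=y)$, put
\[
\delta_T^{\F}\dfn\bigwedge_{Y\in Dom}\ \bigwedge_{\SET w\in Ran(\SET W_Y)}\Big(\SET W_Y=\SET w\cf\bigwedge_{y\in Ran(Y)}\Pr(Y=y)=r^{Y,\SET w}_y\Big),\qquad \delta_T\dfn\delta_T^{\mathrm{dist}}\land\delta_T^{\F}.
\]
For nonempty $S$ I would then verify $S\models\delta_T\iff S\sim T$. The implication $\Leftarrow$ is immediate from the invariance lemma, since $T\models\delta_T$ by construction. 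For $\Rightarrow$, $\delta_T^{\mathrm{dist}}$ forces the matching distribution, and $\delta_T^{\F}$ recovers $\F$: since each $\F^T_Y$ (for endogenous $Y$) is \emph{non-constant}, its point-mass profile $(r^{Y,\SET w})_{\SET w}$ cannot be reproduced by an exogenous variable in $S$ (whose post-intervention marginal is the fixed original marginal), so $Y$ must be endogenous in $S$ with $\G_Y(\SET w)=\F^T_Y(\SET w)$ for all $\SET w$; dually, for $Y$ exogenous in $T$ the profile is the constant original marginal, which an endogenous (hence non-constant) $Y$ in $S$ cannot match. Thus $\F_S=\F_T$ and $S\sim T$. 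Finally, picking a set $R$ of representatives of the $\sim$-classes of the nonempty members of $\K$, I take
\[
\Gamma\dfn\Big\{\ \bot\sqcup\bigsqcup_{T\in R}\delta_T\ \Big\}
\]
(with $\bot$ kept as an always-present disjunct so the index set is nonempty even if $R=\emptyset$, i.e.\ when $\K$ consists only of empty multiteams). On empty multiteams this holds by the empty multiteam property; on a nonempty $S$ it holds iff $S\sim T$ for some $T\in R$, i.e.\ by closure under rescaling iff $S\in\K$. Hence $\K=\K_\Gamma$.

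The main obstacle is the faithful recovery of the function component in $\delta_T^{\F}$: one must realize every argument tuple of each causal law by an intervention (which $do(\SET W_Y=\SET w)$ does, since it fixes all arguments of $\F_Y$ simultaneously) and separate exogenous from endogenous variables, for which the standing non-constancy requirement on the functions $\F_Y$ is exactly what makes the two post-intervention marginal profiles incompatible. I would also stress that each $\delta_T$ is \emph{finitary}; the sole genuinely infinitary ingredient is the outer countable disjunction $\bigsqcup_{T\in R}$, in agreement with the cardinality argument preceding the theorem.
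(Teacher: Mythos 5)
Your proof is correct, and its overall architecture is the same as the paper's: a rescaling-invariance lemma proved by induction on $\PCO^\omega$ formulas (with the same two auxiliary facts, that $S\sim T$ implies $S^\alpha\sim T^\alpha$ and $S_{\SET X=\SET x}\sim T_{\SET X=\SET x}$), combined with a per-class characteristic formula of the shape (distribution formula) $\land$ (function-component formula), glued together by a countable $\bigsqcup$. The one genuinely different ingredient is how the function component is captured. The paper's $\Phi^\F$ uses two different clause shapes, $\bigwedge_{\SET w}(\SET W_V=\SET w\cf V=\F_V(\SET w))$ for endogenous $V$ and $\bigwedge_{\SET w,v}\,V=v\supset(\SET W_V=\SET w\cf V=v)$ for exogenous $V$, and delegates their correctness to a result of the earlier non-probabilistic work, transferred to multiteams via the support lemma. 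You instead use a single uniform clause recording the full post-intervention marginal profile $(r^{Y,\SET w}_y)$ of each variable under every total intervention, and give a self-contained argument that the standing non-constancy requirement on the $\F_Y$ makes the endogenous profile (a point mass that varies with $\SET w$) and the exogenous profile (constant in $\SET w$, equal to the original marginal) incompatible; this buys self-containment at the price of a probabilistic rather than purely qualitative formula, and both hinge on non-constancy in the same way. The remaining differences are cosmetic: you index the disjunction over representatives of the $\sim$-classes and add a $\bot$ disjunct to handle the empty multiteams, where the paper indexes over all of $\K$ and uses $\Theta_\emptyset=\bot$ together with $\top$ in place of $\Phi^\F$ for the empty members.
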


This closure result also tells us that the language $\PCO^\omega$ could in principle be enriched with other useful operators, such as infinitary conjunction, a liberal use of the tensor disjunction or various kinds of (in)dependence atoms, without changing its expressive power; the only limitation is that the additional operators must preserve closure under rescaling and the empty multiteam property. 
We postpone the somewhat long proof of the theorem to the Appendix. 
 The proof extends some of the methods that were developed in \cite{BarYan2022} for the analysis on non-probabilistic causal-observationl languages. The key idea is that any class $\K$ of causal multiteams of the kind mentioned in the statement of the theorem is described by a specific formula in a normal form:
\[
\bigsqcup_{(T^-,\F)\in \K }(\Theta_{T^-} \land \Phi^\F)
\]
where $\Theta_{T^-}$, resp. $\Phi^\F$, are $\PCO$ formulas that express that the multiteam component is $T^-$ (up to rescaling), resp. that the function component is $\F$. Such a formula explicitly describes the structure of a family of causal multiteams. An infinitary disjunction is needed because, even working with a finite signature as we do in this paper, a $\PCO$ formula may happen to be satisfied by a family of causal multiteams that encode an infinite family of distinct probability distributions. (For example, a validity is satisfied -- relative to a certain signature -- by causal multiteams that encode all possible rational-valued probability distributions over a given finite set.)

We remark that the $\PCO^\omega$ formulas can also be written in a Pearl-style normal form, as in theorem \ref{thm: Pearl normal form}, allowing of course for infinitary Boolean combinations. It turns out, then, that conditional probability statements, unconditional $do$ expressions and Pearl counterfactuals can be taken as the building blocks of causal reasoning also in this most general language; this classification is not an artifact of our focus on the language $\PCO$.

We end this section by pointing out that the conditions of closure under rescalings and the empty team property are not trivial constraints given by the semantics, but they depend on a careful choice of the logical operators and atoms. There are some seemingly reasonable choices that immediately spoil these properties. In many publications (e.g. \cite{RasOgnMar2004}) a more liberal definition of probabilistic atoms is given, by which any factual formula $\alpha$ is stipulated to have probability $1$ in empty models. This is intuitively justified by the trivial reading of quantifiers, since we have that \emph{all} assignments in an empty model satisfy $\alpha$.  We can express this interpretation of probabilistic atoms as follows:
\begin{itemize}
\item $T\models \Pr^*(\alpha)\geq \epsilon$ iff $P_T(\alpha)\geq \epsilon$ or $T$ is empty 
\item $T\models \Pr^*(\alpha)\leq \epsilon$ iff $P_T(\alpha)\leq \epsilon$ or $T$ is empty and $\epsilon=1$
\item $T\models \Pr^*(\alpha)> \epsilon$ iff $P_T(\alpha) >\epsilon$ or $T$ is empty and $\epsilon<1$
\item $T\models \Pr^*(\alpha)< \epsilon$ iff $P_T(\alpha) < \epsilon$. 
\end{itemize}
An immediate consequence of these definitions is that a language allowing such probabilistic atoms can express a \emph{nonemptyness atom} NE, as follows:
\begin{itemize}
\item $T\models \mathrm{NE}$ if and only if $T\models \Pr^*(X=x) < 1 \sqcup \Pr^*(X\neq x) < 1$.
\end{itemize}
Indeed, if a causal multiteam is empty, then it does not satisfy probabilistic atoms that use the $<$ operator; vice versa, if it is not empty, then at least one of the formulas $X=x$, $X\neq x$ is false on some assignment. Now, then, the formula NE violates the empty multiteam property.  

We can obtain a violation of closure under rescalings if we also assume that the language admits free usage of the \emph{strict tensor disjunction}, i.e. the operator $\curlyvee$ given by:
\begin{itemize}
\item $T\models \psi \curlyvee \chi$ iff  if there are two causal sub-multiteams $T_1,T_2$ of $T$ such that $T_1^-\cap T_2^- = \emptyset$, $T_1^-\cup T_2^- = T^-$, $T_1\models \psi$ and $T_2\models \chi$.
\end{itemize}
We remark that $\curlyvee$ is not a particularly exotic kind of connective; it is the form of disjunction that is most commonly used in the the literature on multiteam semantics.  
We can now find a formula $\theta_k$ that expresses the property of a causal multiteam (of a signature $\sigma$) of having at least $k$ assignments. 
 Such formula is simply:
\[
\theta_k: \bigcurlyvee_{k \ times} \mathrm{NE}
\]
 Now take a causal multiteam $S$ (of signature $\sigma$) of cardinality $k-1$, and let $T$ be a rescaling of $S$ of cardinality $> k$ (as larger rescalings are always easy to produce, see the Appendix). Then $S\not\models \theta_k$, while $T\models \theta_k$, i.e. the family of causal multiteams  defined by $\theta_k$ is not closed under rescaling.
 


\section{The debate on the ladder of causation}\label{sec: debate}

Pearl's classification of the tasks of causal inference according to the ``ladder of causation'' has not gone without critics. We want to examine one such point, raised by Tim Maudlin in \cite{Mau2019}.
 Maudlin's critique to Pearl's ``ladder of causation'' is that the distinction between level two (interventions) and level three (counterfactual reasoning) in the hierarchy does not seem to be justified. Counterfactual reasoning is already involved at the level of interventions in the very definition of what an intervention is; and thus the distinction between the two rungs of the ladder is blurred. Pearl answered publicly\footnote{In a page of his blog ``Causal Analysis in Theory and Practice'', http:// causality.cs.ucla.edu/blog/index.php/2019/09/09/on-tim-maudlins-review-of-the-book-of-why/.} to this criticism by pointing out that (although this may not be evident from his book) levels two and three differ strongly in one technical sense. While typically a counterfactual statement can be evaluated only if we have full knowledge of the causal laws (structural equations) which affect its truth value (or degree of probability), it turned out, somewhat surprisingly, that the probabilities of the simpler counterfactuals involved at level two of the ladder of causation can always be computed from observational data by mere knowledge of the causal graph. In other words, one does not need to know the specific functional form of the causal laws, but only which variables affect which; this is essentially the main take of the renowned completeness theorem for Pearl's \emph{do} calculus, \cite{ShpPea2006,HuaVal2006}. We find this fact to be of great theoretical interest, but not a solid enough defense of the structure of the ``ladder''. To the best our knowledge, it has not been proved that this result is maximal, i.e. that it might not be extended, in the future, to a larger class of counterfactuals than the class of those involved at rung two of the ladder.\footnote{An extension of the range of purely graphical methods has already been accomplished in \cite{ShpPea2007}, but it concerns estimation of the probabilities of counterfactuals from \emph{experiments} -- while the $do$ calculus concerns estimation from \emph{observational} data. As far as we know, the estimation of counterfactuals from observational data has not yet been systematically investigated.} Thus, the boundary between the counterfactuals that can be estimated from observational data and causal graph only, and those that cannot, is still undetermined and cannot at the present state of knowledge serve as a justification for the distinction between rung two and three. A negative answer to the issue of maximality might instead suggest a shift of the boundary between rung two and three, i.e. classifying more counterfactuals as pertaining to rung two. The developments of the present paper support, in a different way, the idea that the boundary between rung two and three of the ladder of causation is questionable. Our arguments come from the perspective of logic. We have seen that 
  statements concerning the most typical  probabilistic quantities used in causal inference can be ultimately decomposed in terms of (ordinary connectives plus) two conditional operators $\supset$ (selective implication) and $\cf$ (interventionist counterfactual), the former describing the consequences of an increase of information (for example through observation) and the latter the consequences of an action. It turns out that the statements in the first rung of the ladder (the level of associations) are clearly distinguished from those in the upper levels, in that, after having been translated into the $\PCO$ language, they involve only the conditional $\supset$. Both kinds of conditionals are involved at level two as well as at level three:
  at level two we have the conditional $do$ expressions that translate in $\PCO$ to formulas of the form $\SET X=\SET x \cf (\alpha\supset\psi)$, while at level three we have the Pearl counterfactuals that translate into formulas of the form $ \alpha\supset(\SET X=\SET x \cf\psi)$. Thus, from a logical point of view, no qualitative difference is discernible between  layers two and three, except for the order in which the two conditionals are applied. 
  
  A different perspective is suggested by the normal form theorem (\ref{thm: Pearl normal form}) that we have proved for $\PCO$. This result, on one hand, seems to confirm that the scope of Pearl's classification is appropriate: while (as exemplified in section \ref{sec: mixed statements}) a much wider class of probabilistic causal statements is in principle definable in $\PCO$, these are ultimately reducible to Boolean combinations of (the $\PCO$ translations of) the expressions considered by Pearl. More precisely, though, the theorem says that such formulas are Boolean combinations of (translations of) conditional probabilistic statements, \emph{un}conditional \emph{do} expressions and Pearl counterfactuals. This result is thus presenting a different hierarchy, in which the second level is narrower than what suggested by Pearl.\footnote{The fact the \emph{conditional} \emph{do} expression are unnecessary from the point of view of expressivity is surprising, given e.g. their vital role in the \emph{do} calculus.} The distinction between the three levels of our hierarchy is then corroborated by the fact that level one involves only the conditional $\supset$, the second level involves only $\cf$, and the third level both $\supset$ and $\cf$. This classification does not seem to be an artifact of our choice of language, since it persists when we consider an extended (infinitary) language $\PCO^\omega$ which provably can express  probabilistic causal statements with maximum generality (see the discussion 
   in section \ref{sec: infinitary language}). 

\section{Conclusions}

We have seen that the notion of causal multiteam that we have introduced here is a way of modeling causal system which is, under many respects, equivalent to the use of semi-deterministic SEMs. We have briefly observed that this equivalence holds when the causal graph under discussion is acyclic; the two kinds of model can diverge otherwise. A precise treatment of the cyclic case is postponed to future research.

Some interesting facts emerged from our analysis of a causal-probabilistic language supported by causal multiteam semantics; these observations probably underly the usual analysis of SEMs, but, as far as we know, they are not usually explicitated in such studies. First of all, we noticed the different roles played by two distinct kind of disjunction: the tensor disjunction $\lor$, which allows the formation of disjunctive events; and the global disjunction $\sqcup$, which is used in disjunctive statements \emph{about} probabilities. \corrf{A similar distinction emerged for other connectives (selective implication $\supset$ versus material implication $\rightarrow$, dual negation $\neg$ versus weak contradictory negation $^C$).} Secondly, we noticed that conditional probabilities can be reduced to marginal probabilities by means of a logical operator $\supset$ that we call selective implication; and, more generally, mixed causal-conditional expressions can be decomposed and better analyzed in terms of $\supset$ and the counterfactual $\cf$. This kind of analysis also reveals that more complex kinds of conditioning are possible than those usually considered in the literature on causation; the simplest case being that of probabilities of a second experiment conditioned upon the outcome of a first experiment. On the other hand, we have shown that, by simple logical manipulations, even the more complex of such expressions can be reduced to Boolean combinations of conditional probability statements, $\emph{do}$ expressions and Pearl counterfactuals. We noticed that, surprisingly, the \emph{conditional} \emph{do} expressions can be eliminated. 

From the point of view of definability theory, we have identified a fully general, infinitary language $\PCO^\omega$ for probabilistic causal reasoning, and pointed out that the above classification persists also in this context. We remarked instead on some limitations of the basic language $\PCO$, such as the inexpressibility of certain conditional comparison atoms. The proof of this result, and a precise characterization of the expressive power of $\PCO$, require methods that go beyond the scope of the present paper, and will be addressed in the forthcoming \cite{BarVir2023}. These methods seem not solve the issue whether probabilistic conditional independence is expressible in $\PCO$. Another issue that is not settled here is the proof theory of languages like $\PCO$ and $\PCO^\omega$. 
 \corrf{For the moment we only know that $\PCO$ admits an infinitary axiomatization (\cite{BarVir2023b}).} 
 

An issue that deserves further study is the degree of applicability of  causal multiteam semantics in realistic situations, when the multiteam is obtained as a set of empirical observations. Consider the following table:


\begin{center}
\begin{tabular}{|c|c|c|}
\hline
\multicolumn{3}{|c|}{ \hspace{-4pt}Key \hspace{2pt} $X$ \hspace{8pt} $Y$} \\
\hline
 \phantom{a}0\phantom{a} & 0 & 1.02 \\
\hline
 1 & 1 & 2.01 \\
\hline
 2 & 1 & 2.02 \\
\hline
 3 & 2 & 3.01 \\ 
\hline
 4 & 2 & 3.02 \\ 
\hline
 5 & 2 & 3.02 \\ 
\hline
\end{tabular}
\end{center}

\noindent This multiteam does not satisfy the dependency $\dep{X}{Y}$, since, for example, the records $1$ and $2$ have the same value for $X$, but different values for $Y$; therefore, there is no function which generates $Y$ in terms of $X$. It is impossible, thus, to add to this multiteam the assumption that ``$X$ is a cause of $Y$'': the pair $(X,Y)$ cannot be an edge of a causal graph associated with this multiteam. However, this prohibition might be too restrictive. If the table comes from empirical observations, it might be the case that the difference in the values of $Y$ for records $1$ and $2$ is simply due to measuring error, while in the real world there is indeed a law that determines $Y$ as a function of $X$. 
It would then be reasonable to considere a less restrictive notion of causal multiteam. Assuming, as is typically the case for empirical data, that the set of values for the variables in the system has a metric (call it $dist$), for any error threshold $\Delta\in\mathbb{R}^+$ we may define a ``$\Delta$-tolerant causal multiteam'' to be a pair $(T^-,\F)$ satisfying the conditions 1. and 2. of the definition of causal multiteam, plus the constraint:
\begin{itemize}
\item For all $s\in T^-$, and all $Y\in \SET V$, \hspace{3pt}  $dist(s(Y),f_Y(s(PA_Y))) \leq \Delta$.
\end{itemize}
In the presence of a metric, then, the causal multiteams coincide with the special case of 0-tolerant causal multiteams. The sharp notions of (causal or contingent) dependence can also be approximated by error-tolerant versions; in the simpler context of team semantics, this kind of generalization has been investigated in \cite{GraHoe2018}.
We leave it for future work to see to what extent the considerations put forward in the present paper can be extended to error-tolerant causal multiteams.


\bibliographystyle{acm}
\bibliography{iilogics}

\begin{thebibliography}{10}

\bibitem{AbiHulVia1995}
{\sc Abiteboul, S., Hull, R., and Vianu, V.}
\newblock {\em Foundations of databases}, vol.~8.
\newblock Addison-Wesley Reading, 1995.

\bibitem{BarSan2017}
{\sc Barbero, F., and Sandu, G.}
\newblock Team semantics for interventionist counterfactuals and causal
  dependence.
\newblock {\em pre-print, arXiv:1610.03406\/} (2017).

\bibitem{BarSan2018}
{\sc Barbero, F., and Sandu, G.}
\newblock Interventionist counterfactuals on causal teams.
\newblock In {\em CREST 2018 Proceedings -- Electronic Proceedings in
  Theoretical Computer Science\/} (Jan 2019), vol.~286, Open Publishing
  Association, pp.~16--30.

\bibitem{BarSan2020}
{\sc Barbero, F., and Sandu, G.}
\newblock Team semantics for interventionist counterfactuals: observations vs.
  interventions.
\newblock {\em Journal of Philosophical Logic 50\/} (2021), 471–521.

\bibitem{BarSchVelXie2021}
{\sc Barbero, F., Schulz, K., Velázquez-Quesada, F.~R., and Xie, K.}
\newblock {Observing interventions: a logic for thinking about experiments}.
\newblock {\em Journal of Logic and Computation\/} (April 2022).

\bibitem{BarVir2023}
{\sc Barbero, F., and Virtema, J.}
\newblock Expressivity landscape for logics with probabilistic interventionist
  counterfactuals.
\newblock pre-print, arXiv:2303.11993.

\bibitem{BarVir2023b}
{\sc Barbero, F., and Virtema, J.}
\newblock Strongly complete axiomatization for a logic with probabilistic
  interventionist counterfactuals.
\newblock pre-print, arXiv:2304.02964.

\bibitem{BarYan2022}
{\sc Barbero, F., and Yang, F.}
\newblock Characterizing counterfactuals and dependencies over (generalized)
  causal teams.
\newblock {\em Notre Dame Journal of Formal Logic 63}, 3 (2022).

\bibitem{Bri2012}
{\sc Briggs, R.}
\newblock Interventionist counterfactuals.
\newblock {\em Philosophical Studies: An International Journal for Philosophy
  in the Analytic Tradition 160}, 1 (2012), 139--166.

\bibitem{CiaBar2019}
{\sc Ciardelli, I., and Barbero, F.}
\newblock Undefinability in inquisitive logic with tensor.
\newblock In {\em Logic, Rationality, and Interaction: 7th International
  Workshop, LORI 2019, Chongqing, China, October 18–21, 2019, Proceedings\/}
  (Berlin, Heidelberg, 2019), Springer-Verlag, p.~29–42.

\bibitem{DurHanKonMeiVir2016}
{\sc Durand, A., Hannula, M., Kontinen, J., Meier, A., and Virtema, J.}
\newblock Approximation and dependence via multiteam semantics.
\newblock In {\em Proceedings of the 9th International Symposium on Foundations
  of Information and Knowledge Systems\/} (2016), vol.~LNCS 9616, Springer,
  pp.~271--291.

\bibitem{DurHanKonMeiVir2018}
{\sc Durand, A., Hannula, M., Kontinen, J., Meier, A., and Virtema, J.}
\newblock Probabilistic team semantics.
\newblock In {\em Foundations of Information and Knowledge Systems\/} (Cham,
  2018), F.~Ferrarotti and S.~Woltran, Eds., Springer International Publishing,
  pp.~186--206.

\bibitem{GalPea1998}
{\sc Galles, D., and Pearl, J.}
\newblock An axiomatic characterization of causal counterfactuals.
\newblock {\em Foundations of Science 3}, 1 (Jan 1998), 151--182.

\bibitem{GraHoe2018}
{\sc Gr{\"a}del, E., and Hoelzel, M.}
\newblock {Dependency Concepts up to Equivalence}.
\newblock In {\em 27th EACSL Annual Conference on Computer Science Logic (CSL
  2018)\/} (Dagstuhl, Germany, 2018), D.~Ghica and A.~Jung, Eds., vol.~119 of
  {\em Leibniz International Proceedings in Informatics (LIPIcs)}, Schloss
  Dagstuhl--Leibniz-Zentrum fuer Informatik, pp.~25:1--25:21.

\bibitem{Hal2000}
{\sc Halpern, J.~Y.}
\newblock Axiomatizing causal reasoning.
\newblock {\em J. Artif. Int. Res. 12}, 1 (May 2000), 317--337.

\bibitem{HanVir2022}
{\sc Hannula, M., and Virtema, J.}
\newblock Tractability frontiers in probabilistic team semantics and
  existential second-order logic over the reals.
\newblock {\em Ann. Pure Appl. Log. 173}, 10 (2022), 103--108.

\bibitem{HauWoo1999}
{\sc Hausman, D., and Woodward, J.}
\newblock Independence, invariance and the causal {M}arkov condition.
\newblock {\em British Journal for the Philosophy of Science 50}, 4 (1999),
  521--583.

\bibitem{Hit2001}
{\sc Hitchcock, C.}
\newblock The intransitivity of causation revealed in equations and graphs.
\newblock {\em The Journal of Philosophy 98\/} (06 2001), 273--299.

\bibitem{HuaVal2006}
{\sc Huang, Y., and Valtorta, M.}
\newblock Pearl's calculus of intervention is complete.
\newblock In {\em Proceedings of the Twenty-Second Conference on Uncertainty in
  Artificial Intelligence\/} (Arlington, Virginia, USA, 2006), UAI'06, AUAI
  Press, p.~217–224.

\bibitem{HytPaoVaa2017}
{\sc Hyttinen, T., Paolini, G., and V\"a\"an\"anen, J.}
\newblock A logic for arguing about probabilities in measure teams.
\newblock {\em Archive for Mathematical Logic 56}, 5-6 (2017), 475--489.

\bibitem{Mau2019}
{\sc Maudlin, T.}
\newblock The why of the world.
\newblock
  https://www.bostonreview.net/articles/tim-maudlin-you-are-smarter-your-data/,
  2019.

\bibitem{Ney1923}
{\sc Neyman, J.}
\newblock Sur les applications de la th{\'e}orie des probabilit{\'e}s aux
  experiences agricoles: Essai des principes.
\newblock {\em Roczniki Nauk Rolniczych 10\/} (1923), 1--51.

\bibitem{Pea2000}
{\sc Pearl, J.}
\newblock {\em Causality: Models, Reasoning, and Inference}.
\newblock Cambridge University Press, New York, NY, USA, 2000.

\bibitem{PeaGlyJew2016}
{\sc Pearl, J., Glymour, M., and Jewell, N.~P.}
\newblock {\em Causal inference in statistics: a primer}.
\newblock John Wiley \& Sons, 2016.

\bibitem{PeaMac2018}
{\sc Pearl, J., and Mackenzie, D.}
\newblock {\em The book of why: the new science of cause and effect}.
\newblock Basic books, 2018.

\bibitem{RasOgnMar2004}
{\sc Ra{\v{s}}kovi{\'{c}}, M., Ognjanovi{\'{c}}, Z., and Markovi{\'{c}}, Z.}
\newblock A logic with conditional probabilities.
\newblock In {\em Logics in Artificial Intelligence\/} (Berlin, Heidelberg,
  2004), J.~J. Alferes and J.~Leite, Eds., Springer Berlin Heidelberg,
  pp.~226--238.

\bibitem{Rub1974}
{\sc Rubin, D.~B.}
\newblock Estimating causal effects of treatments in randomized and
  nonrandomized studies.
\newblock {\em Journal of educational Psychology 66}, 5 (1974), 688--701.

\bibitem{ShpPea2006}
{\sc Shpitser, I., and Pearl, J.}
\newblock Identification of conditional interventional distributions.
\newblock In {\em Proceedings of the Twenty-Second Conference on Uncertainty in
  Artificial Intelligence\/} (Arlington, Virginia, USA, 2006), UAI'06, AUAI
  Press, p.~437–444.

\bibitem{ShpPea2007}
{\sc Shpitser, I., and Pearl, J.}
\newblock What counterfactuals can be tested.
\newblock In {\em Proceedings of the Twenty-Third Conference on Uncertainty in
  Artificial Intelligence\/} (Arlington, Virginia, USA, 2007), UAI'07, AUAI
  Press, p.~352–359.

\bibitem{ShpPea2008}
{\sc Shpitser, I., and Pearl, J.}
\newblock Complete identification methods for the causal hierarchy.
\newblock {\em Journal of Machine Learning Research 9}, 64 (2008), 1941--1979.

\bibitem{SpiGlySch1993}
{\sc Spirtes, P., Glymour, C., and Scheines, R.~N.}
\newblock {\em Causation, Prediction, and Search}, vol.~81 of {\em Lecture
  Notes in Statistics}.
\newblock Springer New York, 1993.

\bibitem{Vaa2007}
{\sc V{\"a}{\"a}n{\"a}nen, J.}
\newblock {\em {D}ependence {L}ogic: A New Approach to {I}ndependence
  {F}riendly Logic}, vol.~70 of {\em London Mathematical Society Student
  Texts}.
\newblock Cambridge University Press, 2007.

\bibitem{Vaa2017}
{\sc V{\"a}{\"a}n{\"a}nen, J.}
\newblock {\em The Logic of Approximate Dependence}.
\newblock Springer International Publishing, Cham, 2017, pp.~227--234.

\bibitem{Woo2003}
{\sc Woodward, J.}
\newblock {\em Making Things Happen}, vol.~114 of {\em Oxford Studies in the
  Philosophy of Science}.
\newblock Oxford University Press, 2003.

\end{thebibliography}

\appendix

\begin{center}

\vspace{30pt}

\LARGE \textbf{APPENDIX}
\end{center}

\section{Proof of the characterization theorem for $\PCO^\omega$} \label{Appendix: infinitary completeness}

\normalsize

We present here a proof of theorem \ref{theorem: expressivity of PCOinf} from the main text, i.e. we show that $\PCO^\omega$ is characterized by closure under rescaling (plus the empty multiteam property).

We begin by remarking that the formulas without occurrences of probabilistic atoms are in a sense determined just by the support of the causal multiteam, i.e. the set of assignments that are assigned nonzero probability by the multiteam component. Let us make things precise.
We write $\models$ for the satisfaction relation over causal multiteams, and $\models^{ct}$ for the satisfaction relation over causal teams (when there is a need to make the distinction).
The satisfaction clauses of the causal team semantics for language $\CO$ (and for the operator $\sqcup$), as given in previous literature, are formally identical to those of causal multiteam semantics. 





Given a multiteam $T^-$ of signature $\sigma=(Dom,Ran)$, there is a corresponding team $Team(T^-)\dfn \{s_{\upharpoonright Dom} \mid s\in T^-\}$ of signature $\sigma$ (its \textbf{support}). More generally, given a causal multiteam  $T = (T^-,\F)$ of signature $\sigma$, there is a corresponding causal team $Team(T) = (Team(T^-), \F)$.

It is then tedious but easy to prove the following lemma, which allows to translate results between causal team semantics and causal \emph{multi}team semantics. 

\begin{lem}\label{lemma: transfer}
Let $T$ be a causal multiteam of signature $\sigma$, and $\varphi$ a formula of $\CO$, or a formula of $\PCO$ without occurrences of probabilistic atoms. 
Then:
\[
T\models\varphi \iff Team(T)\models^{ct} \varphi
\]
\end{lem}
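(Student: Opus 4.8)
The plan is to reduce everything to the behaviour of the two semantics on single assignments, where they coincide, and then to propagate this coincidence through the connectives. The engine is flatness. For a $\CO$ formula $\alpha$, the Flatness theorem gives $T\models\alpha$ iff $(\{s\},\F)\models\alpha$ for every $s\in T^-$, while its causal-team analogue (\cite{BarSan2020}, theorem 2.10) gives $Team(T)\models^{ct}\alpha$ iff $(\{s'\},\F)\models^{ct}\alpha$ for every $s'\in Team(T^-)$. The first observation I would record is that single-assignment satisfaction of a $\CO$ formula is intrinsic: since $Key$ is never mentioned in the object language, $(\{s\},\F)\models\alpha$ holds iff $(\{s_{\upharpoonright Dom}\},\F)\models^{ct}\alpha$ holds (on a singleton the clauses of the two semantics — including the recursive intervention step and the conditioning step — compute literally the same value). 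Because $T^-$ and $Team(T^-)=\{s_{\upharpoonright Dom}\mid s\in T^-\}$ have exactly the same underlying assignments, the two flat characterizations quantify over the very same family of single-assignment conditions, and the equivalence for $\CO$ formulas follows.

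For a $\PCO$ formula $\varphi$ without probabilistic atoms I would then proceed by induction on its structure. The literals $Y=y,Y\neq y$ are $\CO$ formulas and are handled by the previous paragraph. For $\varphi=\psi\land\chi$ and $\varphi=\psi\sqcup\chi$ the satisfaction clauses are formally identical in the two semantics, so the claim is immediate from the induction hypothesis applied to $\psi$ and $\chi$. The two genuinely interesting cases are $\supset$ and $\cf$, and each rests on a single commutation identity: that $Team(\cdot)$ commutes with the conditioning operator $(\cdot)^\alpha$ and with the intervention operator $(\cdot)_{\SET X=\SET x}$, i.e.
\[
Team(T^\alpha)=Team(T)^\alpha, \qquad Team(T_{\SET X=\SET x})=Team(T)_{\SET X=\SET x}.
\]

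Both identities hold because the relevant operations act assignment-by-assignment and ignore $Key$. For the first, $(T^\alpha)^-=\{s\in T^-\mid(\{s\},\F)\models\alpha\}$ with $\alpha\in\CO$; since the single-assignment test is insensitive to $Key$, taking the support commutes with this selection. Given this, for $\varphi=\alpha\supset\psi$ I compute $T\models\alpha\supset\psi$ iff $T^\alpha\models\psi$ iff (induction hypothesis on $\psi$, applied to the causal sub-multiteam $T^\alpha$) $Team(T^\alpha)\models^{ct}\psi$ iff $Team(T)^\alpha\models^{ct}\psi$ iff $Team(T)\models^{ct}\alpha\supset\psi$. For the second identity, the intervention map $s\mapsto s^\F_{\SET X=\SET x}$ defined recursively in the text depends only on $s_{\upharpoonright Dom}$ and on $\F$, so the support of the intervened multiteam is exactly the team-intervention applied to the support (the fact that the team intervention may collapse coincident images is harmless at the level of supports). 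For $\varphi=\SET X=\SET x\cf\psi$ this gives $T\models\SET X=\SET x\cf\psi$ iff $T_{\SET X=\SET x}\models\psi$ iff (induction hypothesis) $Team(T_{\SET X=\SET x})\models^{ct}\psi$ iff $Team(T)_{\SET X=\SET x}\models^{ct}\psi$ iff $Team(T)\models^{ct}\SET X=\SET x\cf\psi$; the inconsistent-antecedent case is trivially true on both sides.

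The only real content, and hence the step I expect to require the most care, is the verification of these two commutation identities — in particular the claim that the intervention is genuinely pointwise and $Key$-insensitive, which is where the recursive definition of $s^\F_{\SET X=\SET x}$, the handling of consistent antecedents, and the collapsing of duplicate assignments under the team intervention all have to be lined up. Everything else is a mechanical matching of identical satisfaction clauses, which is exactly why the lemma is ``tedious but easy''.
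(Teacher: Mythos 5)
Your proof is correct, and it follows exactly the route the paper gestures at but does not write out (the paper only remarks that the lemma is ``tedious but easy'', having noted that the satisfaction clauses of the two semantics are formally identical): flatness plus $Key$-insensitivity of single-assignment satisfaction for the $\CO$ case, and an induction whose only real content is the commutation of $Team(\cdot)$ with $(\cdot)^\alpha$ and $(\cdot)_{\SET X=\SET x}$, both of which you verify correctly. Nothing is missing.
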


\noindent It's also easy to prove that $\PCO^\omega$ has the empty team property.

\begin{lem}\label{lemma: empty multiteam property}
Let $S$ be an empty causal multiteam of signature $\sigma$ and $\varphi\in\PCO^\omega_{\sigma}$. 
Then $T\models \varphi$. 
\end{lem}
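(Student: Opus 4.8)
The plan is to prove the statement by induction on the structure of the $\PCO^\omega_\sigma$ formula $\varphi$, establishing simultaneously for every empty causal multiteam $S$ of signature $\sigma$ that $S\models\varphi$. This parallels the proof of Theorem \ref{thm: empty multiteam}, but must additionally cover the probabilistic atoms and the infinitary disjunction. For the base cases, if $\varphi$ is a literal $Y=y$ or $Y\neq y$, the defining clause quantifies universally over $s\in S^-$; since $S^-=\emptyset$, the condition is vacuously satisfied (this is just the relevant instance of Theorem \ref{thm: empty multiteam}). If $\varphi$ is a probabilistic atom, then $S\models\varphi$ holds directly, because the emptiness convention is built into the semantic clauses for probabilistic atoms (``$T$ is empty or \dots'').

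For the inductive step, the case $\varphi=\psi\land\chi$ is immediate: by the inductive hypothesis $S\models\psi$ and $S\models\chi$, so $S\models\psi\land\chi$. For the global disjunction $\psi\sqcup\chi$, and more generally for $\bigsqcup_{i\in I}\psi_i$ with $I$ a (nonempty) at most countable index set, the inductive hypothesis gives $S\models\psi_i$ for \emph{every} $i\in I$; picking any such $i$ witnesses $S\models\bigsqcup_{i\in I}\psi_i$.

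The two remaining cases turn on the observation that both operations associated with $\supset$ and $\cf$ preserve emptiness. For $\varphi=\alpha\supset\psi$, the causal sub-multiteam $S^\alpha$ satisfies $(S^\alpha)^-\subseteq S^-=\emptyset$, so $S^\alpha$ is again empty; by the inductive hypothesis $S^\alpha\models\psi$, and hence $S\models\alpha\supset\psi$. For $\varphi=\SET X=\SET x\cf\psi$, if $\SET X=\SET x$ is inconsistent the clause is satisfied outright; otherwise, since interventions never change the cardinality of the multiteam component, $S_{\SET X=\SET x}$ is again empty, whence $S_{\SET X=\SET x}\models\psi$ by the inductive hypothesis and therefore $S\models\SET X=\SET x\cf\psi$.

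I do not expect a genuine obstacle here: the entire content is the structural induction, and the only points demanding care are (i) that the atomic case for probabilistic atoms is settled by the emptiness convention rather than by any computation, and (ii) that the derived sub-multiteams $(S)^\alpha$ and the intervened multiteams $S_{\SET X=\SET x}$ remain empty, which is precisely what licenses the application of the inductive hypothesis at the $\supset$ and $\cf$ steps.
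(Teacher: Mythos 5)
Your proof is correct and is precisely the structural induction the paper leaves implicit (the lemma is stated without proof, with only the remark that it is ``easy''): the probabilistic atoms are handled by the built-in emptiness convention, and the $\supset$ and $\cf$ cases go through because $(S^\alpha)^-\subseteq S^-$ and interventions preserve cardinality, so emptiness propagates to the sub-multiteams on which the inductive hypothesis is applied. Your parenthetical restriction to \emph{nonempty} index sets in the infinitary disjunction case is a genuinely necessary caveat (an empty disjunction would falsify the lemma), and is consistent with how the paper uses $\bigsqcup_{i\in I}$ throughout.
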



\noindent Let us now consider the condition of closure under rescaling. It is worth to spell out the following alternative characterization of rescaling, whose proof is straightforward from the definition.

\begin{lem}\label{lemma: characterization of rescaling}
Let $S=(S^-,\F)$, $T=(T^-,\F)$ be nonempty causal multiteams of the same signature and $q= \frac{|T^-|}{|S^-|}$. Then $S \sim T \iff \#(s,T) = q \cdot \#(s,S)$ for each $s\in \B_\sigma$.
\end{lem}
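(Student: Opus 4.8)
The plan is to prove both implications by simply unwinding the definition of $\sim$ and clearing denominators, exploiting the fact that both multiteams are nonempty (so $|S^-|,|T^-|>0$) and already share the same function component $\F$. Since $\F=\F$ holds trivially and neither $S$ nor $T$ is empty, the definition of rescaling reduces in this setting to the single condition $\epsilon_t^S = \epsilon_t^T$ for every $t\in\B_\sigma$. So the whole lemma is a restatement of this equality of probabilities in terms of raw counts.

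First I would expand the probabilities according to their definition, writing $\epsilon_t^S \dfn \frac{\#(t,S)}{|S^-|}$ and $\epsilon_t^T \dfn \frac{\#(t,T)}{|T^-|}$. For the forward direction, assuming $S\sim T$, I fix an arbitrary $t\in\B_\sigma$ and use $\frac{\#(t,S)}{|S^-|} = \frac{\#(t,T)}{|T^-|}$; multiplying both sides by $|T^-|$ gives
\[
\#(t,T) = \frac{|T^-|}{|S^-|}\cdot \#(t,S) = q\cdot \#(t,S),
\]
which is exactly the desired conclusion. Since $t$ was arbitrary, this holds for all assignments.

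For the converse, assuming $\#(t,T) = q\cdot \#(t,S)$ for each $t$, I would divide both sides by $|T^-|$ and substitute $q=\frac{|T^-|}{|S^-|}$ to recover $\frac{\#(t,T)}{|T^-|} = \frac{\#(t,S)}{|S^-|}$, i.e. $\epsilon_t^T = \epsilon_t^S$; together with $\F=\F$ and nonemptiness this is precisely $S\sim T$. There is no real obstacle here: the only point requiring a word of care is that the nonemptiness hypothesis guarantees the denominators $|S^-|$ and $|T^-|$ are strictly positive, so that the cross-multiplication steps are reversible and $q$ is well defined; this is why the lemma is stated only for nonempty causal multiteams, the empty case being absorbed directly into the definition of $\sim$.
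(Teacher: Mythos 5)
Your proof is correct and is exactly the direct unwinding of the definition that the paper has in mind (the paper simply declares the lemma ``straightforward from the definition'' and gives no further detail). The cross-multiplication argument and the remark that nonemptiness makes the denominators positive and the steps reversible cover everything that needs to be said.
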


\noindent It is obvious that, for any nonempty causal multiteam $S$ and any $n\in \mathbb{N^+}$, there is a unique causal multiteam $T$ of the same signature such that $S \sim T$ and $\frac{|T^-|}{|S^-|} = n$. We also write $T$ as $nS$, and we say that 
 $nS$ is a \textbf{multiple} of $S$. If $T = mR$ and $T=nS$, where $m,n\in \mathbb{N}^+$, we say $T$ is a \textbf{common multiple} of $R$ and $S$.

\begin{lem}[Existence of common multiples]\label{lemma: common multiple}
Let $R,S$ be nonempty causal multiteams of the same signature. If $R \sim S$, then they have a common multiple.
\end{lem}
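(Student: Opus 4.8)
The plan is to construct the common multiple by a least-common-multiple argument on cardinalities. Write $a = |R^-|$ and $b = |S^-|$; both are positive integers, since $R$ and $S$ are nonempty. Set $L = \mathrm{lcm}(a,b)$ and let $m = L/a$ and $n = L/b$, so that $m,n\in\mathbb{N}^+$ and $ma = nb = L$. The candidate common multiple is then $mR$, which I will show coincides with $nS$; both causal multiteams exist by the remark preceding the lemma, which guarantees, for each positive integer, a unique rescaling of the prescribed cardinality. By definition, showing $mR = nS$ exhibits a causal multiteam that is simultaneously a multiple of $R$ and a multiple of $S$, which is exactly what a common multiple is.

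First I would record that $mR$ and $nS$ share the same function component: $mR$ inherits $\F$ from $R$, while $nS$ inherits $\G$ from $S$, and $\G = \F$ because $R \sim S$ by hypothesis. It then remains only to show that their multiteam components coincide, i.e. that $\#(t, mR) = \#(t, nS)$ for every $t \in \B_\sigma$; since a causal multiteam is determined by its multiteam component together with its function component, this yields $mR = nS$. To verify the counts I would use the characterization in Lemma \ref{lemma: characterization of rescaling} (equivalently, the defining property of multiples): $\#(t, mR) = m\,\#(t,R) = \tfrac{L}{a}\,\#(t,R)$ and $\#(t, nS) = n\,\#(t,S) = \tfrac{L}{b}\,\#(t,S)$. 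These two expressions are equal precisely when $\tfrac{\#(t,R)}{a} = \tfrac{\#(t,S)}{b}$, that is, when $\epsilon_t^R = \epsilon_t^S$; and this equality holds for every $t$ exactly because $R \sim S$. Hence the per-assignment counts agree, so $mR = nS =: T$ is the desired common multiple.

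There is no serious obstacle here; the argument is essentially the observation that two multiteams lying in the same rescaling class and having equal total cardinality must be identical. The only genuine bookkeeping is choosing the two multipliers so that the cardinalities match, which is handled cleanly by passing to $L = \mathrm{lcm}(a,b)$; once the cardinalities are equalized, the equality of rescaling ratios $\epsilon_t^R = \epsilon_t^S$ forces the per-assignment counts to coincide, which is precisely the content of Lemma \ref{lemma: characterization of rescaling}.
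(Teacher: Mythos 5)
Your proof is correct and follows essentially the same route as the paper's: take $L=\mathrm{lcm}(|R^-|,|S^-|)$, form the two rescalings of cardinality $L$, and check the per-assignment counts agree using $\epsilon_t^R=\epsilon_t^S$ (the paper phrases this step via Lemma \ref{lemma: characterization of rescaling}, but it is the same computation). Nothing is missing.
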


\begin{proof}
Let $m$ be the least common multiple of $|R^-|$ and $|S^-|$. Thus $\frac{m}{|R^-|}\in \mathbb{N}^+$. Then $T\dfn\frac{m}{|R^-|} R$ is a multiple of $R^-$. We will show that $T= \frac{m}{|S^-|} S$ and is thus a common multiple of $R$ and $S$, since $\frac{m}{|S^-|}\in \mathbb N$. This amounts to showing that, for each $s \in \B_\sigma$, $\#(s,T) =
\frac{m}{|S^-|} \cdot\#(s,S)$.
Now: 
\begin{align*}
\#(s,T) & = \frac{m}{|R^-|}\#(s,R)  \tag{by $T=\frac{m}{|R^-|}R$} \\
 & =  \frac{m}{|R^-|}\cdot\frac{|R^-|}{|S^-|}\cdot \#(s,S)  \tag{by $R^-\sim S^-$ and lemma \ref{lemma: characterization of rescaling}}\\
 & = \frac{m}{|S^-|} \cdot\#(s,S). \\ 
\end{align*}

\vspace{-20pt}

\end{proof}

\noindent We need two further lemmas showing that rescaling is preserved by interventions and observations.

\begin{lem}\label{lemma: rescaling preserved by interventions}
Let $S=(S^-,\F), T=(T^-,\F)$ be causal multiteams of the same signature, and $S\sim T$. Then $S_{\SET X = \SET x} \sim T_{\SET X = \SET x}$.
\end{lem}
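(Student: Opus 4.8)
The plan is to reduce the statement to the assignment-wise description of interventions, together with Lemma \ref{lemma: characterization of rescaling} and the fact (already noted in the text) that interventions preserve the cardinality of a causal multiteam. First I would dispose of the degenerate case: if $S^- = T^- = \emptyset$, then since interventions preserve cardinality we get $S_{\SET X=\SET x}^- = T_{\SET X=\SET x}^- = \emptyset$, and both intervened multiteams carry the function component $\F_{\SET X=\SET x}$, so they are rescalings of each other directly by the definition of $\sim$. Hence I may assume $S$ and $T$ nonempty for the rest of the argument.

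The key observation, and the only point requiring genuine care, is that the intervention acts on assignments through a map $s \mapsto s^\F_{\SET X=\SET x}$ that depends only on $\F$ and not on the multiteam in which $s$ sits. Since $S$ and $T$ share the same function component by hypothesis, this is literally the same map for both. Consequently $T_{\SET X=\SET x}^-$ is obtained by applying this fixed map to every element of $T^-$ with multiplicity, so for each $t\in\B_\sigma$ the number of copies of $t$ after the intervention equals the total number of copies in $T^-$ of the assignments sent to $t$:
\[
\#(t, T_{\SET X=\SET x}) = \sum_{\substack{s\in\B_\sigma \\ s^\F_{\SET X=\SET x}=t}} \#(s,T),
\]
and symmetrically for $S$.

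Next I would invoke Lemma \ref{lemma: characterization of rescaling}. Setting $q = \frac{|T^-|}{|S^-|}$, the hypothesis $S\sim T$ gives $\#(s,T) = q\cdot\#(s,S)$ for every $s\in\B_\sigma$. Substituting into the preimage sum and factoring $q$ out yields
\[
\#(t,T_{\SET X=\SET x}) = \sum_{\substack{s\in\B_\sigma \\ s^\F_{\SET X=\SET x}=t}} q\cdot\#(s,S) = q\cdot\#(t,S_{\SET X=\SET x}).
\]
Finally, because interventions preserve cardinality we have $|T_{\SET X=\SET x}^-| = |T^-|$ and $|S_{\SET X=\SET x}^-| = |S^-|$, so the constant $q$ is exactly $\frac{|T_{\SET X=\SET x}^-|}{|S_{\SET X=\SET x}^-|}$; and since $\F_{\SET X=\SET x}$ is common to both intervened multiteams, the converse direction of Lemma \ref{lemma: characterization of rescaling} delivers $S_{\SET X=\SET x}\sim T_{\SET X=\SET x}$, as desired.

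I expect the main (indeed only) obstacle to be justifying the preimage-sum formula cleanly, i.e.\ making explicit that the intervention is induced by a fixed function on assignments insensitive to the ambient multiteam; once that is granted, the remainder is routine bookkeeping through the characterization lemma and the cardinality-preservation property.
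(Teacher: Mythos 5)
Your proof is correct, and its heart is the same computation as the paper's: the intervention acts assignment-wise through the fixed map $s\mapsto s^{\F}_{\SET X=\SET x}$ (common to $S$ and $T$ because they share $\F$), so the post-intervention multiplicity of each $t$ is the preimage sum $\sum_{s^{\F}_{\SET X=\SET x}=t}\#(s,\cdot)$, through which the scaling factor passes; cardinality preservation then identifies that factor as the right ratio for the intervened multiteams. The one genuine difference is in how the scaling factor is handled. The paper first invokes Lemma \ref{lemma: common multiple} to reduce to the case $T=nS$ with $n$ a positive \emph{integer} (and implicitly concludes via transitivity of $\sim$ through the common multiple), and only then runs the preimage-sum computation with the integer $n$. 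You instead work directly with the rational $q=|T^-|/|S^-|$ supplied by Lemma \ref{lemma: characterization of rescaling}, using that lemma in both directions; nothing in the factoring step requires $q$ to be an integer, so this works and lets you bypass the common-multiple machinery entirely. Your version is marginally more self-contained; the paper's reduction buys a slightly cleaner statement to induct on ("$T=nS$ implies $T_{\SET X=\SET x}=nS_{\SET X=\SET x}$") and reuses a lemma it needs anyway. You also handle the empty case explicitly, which the paper leaves tacit; that is a small point in your favour, since Lemma \ref{lemma: characterization of rescaling} is stated only for nonempty multiteams.
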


\begin{proof}
By Lemma \ref{lemma: common multiple}, since $S\sim T$, they have a common multiple. Thus, it suffices to prove the simpler statement: if $T = nS$, then $T_{\SET X = \SET x} = nS_{\SET X = \SET x}$.

Let $s\in (T_{\SET X = \SET x})^-$. Write $I_s$ for the set $\{t\in Team(T^-) \mid t^\F_{\SET X = \SET x}=s\}$. Then
\[
\#(s,T_{\SET X = \SET x})  = \sum_{t\in I_s} \#(t,T)
  = \sum_{t\in I_s} n \cdot \#(t,S) 
  = n\cdot \sum_{t\in I_s} \#(t,S) 
  = n\cdot\#(s,S_{\SET X = \SET x}). 
\]
 The first and fourth equalities are justified by the fact that interventions preserve the sizes of multiteams.
\end{proof}

\begin{lem}\label{lemma: rescaling preserved by selections}
Let $S=(S^-,\F), T=(T^-,\F)$ be causal multiteams of the same signature $\sigma$, $\alpha\in\CO_\sigma$, and $S\sim T$. Then $S^\alpha \sim T^\alpha$.
\end{lem}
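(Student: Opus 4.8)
The plan is to mirror the proof of Lemma~\ref{lemma: rescaling preserved by interventions}, reducing everything to a statement about the counting function $\#(\cdot,\cdot)$ and then invoking Lemma~\ref{lemma: characterization of rescaling}. First I would dispose of the degenerate case: if $S^- = T^- = \emptyset$, then by definition $(S^\alpha)^- = (T^\alpha)^- = \emptyset$, and since $S^\alpha$ and $T^\alpha$ both retain the common function component $\F$, they are rescalings of each other. So I may assume $S,T$ are nonempty and set $q = \frac{|T^-|}{|S^-|} > 0$; by Lemma~\ref{lemma: characterization of rescaling}, the hypothesis $S\sim T$ is equivalent to $\#(t,T) = q\cdot\#(t,S)$ for every $t\in\B_\sigma$.

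The crucial observation --- and the only point that requires a moment of care --- is that the selection $T^\alpha$ acts uniformly on copies of a given assignment. Indeed, for $t\in\B_\sigma$, whether a copy of $t$ belongs to $(T^\alpha)^-$ is determined by the condition $(\{t\},\F)\models\alpha$, which depends only on the $Dom$-values of $t$ and on the function component $\F$, not on the value of $Key$ nor on how many copies of $t$ occur. Since $S$ and $T$ share the same $\F$, this condition is the same whether we test it inside $S$ or inside $T$. Consequently, for each $t\in\B_\sigma$,
\[
\#(t,T^\alpha) = \begin{cases} \#(t,T) & \text{if } (\{t\},\F)\models\alpha\\ 0 & \text{otherwise,}\end{cases}
\]
and likewise for $S$; combining this with $\#(t,T)=q\cdot\#(t,S)$ yields $\#(t,T^\alpha) = q\cdot\#(t,S^\alpha)$ for every $t\in\B_\sigma$.

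It then remains to convert this uniform scaling of counts back into the rescaling relation. If $(S^\alpha)^-$ is empty, then $\#(t,S^\alpha)=0$ for all $t$, hence $\#(t,T^\alpha)=0$ for all $t$, so $(T^\alpha)^-$ is empty as well and we are done. Otherwise both $(S^\alpha)^-$ and $(T^\alpha)^-$ are nonempty (as $q>0$), and summing $\#(t,T^\alpha)=q\cdot\#(t,S^\alpha)$ over $t$ gives $|(T^\alpha)^-| = q\cdot|(S^\alpha)^-|$; thus $q = \frac{|(T^\alpha)^-|}{|(S^\alpha)^-|}$ and the relation $\#(t,T^\alpha) = q\cdot\#(t,S^\alpha)$ is exactly the criterion of Lemma~\ref{lemma: characterization of rescaling} for $S^\alpha\sim T^\alpha$.

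I expect the main obstacle to be purely conceptual rather than computational: making precise that membership in the selected sub-multiteam is insensitive to multiplicity and to $Key$, so that the selection operator commutes with the uniform scaling of copies. Once this is isolated, the rest is bookkeeping with the counting function. A stylistically parallel alternative would be to reduce, via Lemma~\ref{lemma: common multiple}, to the special case $T=nS$ and show $T^\alpha = nS^\alpha$ directly, exactly as in Lemma~\ref{lemma: rescaling preserved by interventions}; the uniformity observation above is what powers that step as well.
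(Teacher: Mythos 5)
Your proposal is correct and rests on exactly the same key observation as the paper's proof: that the selection $T^\alpha$ keeps either all copies of an assignment $t\in\B_\sigma$ or none, since $(\{t\},\F)\models\alpha$ depends only on the $Dom$-values of $t$ and on $\F$. If anything, your explicit tracking of the scaling factor $q$ (and the verification that the same $q$ works for $S^\alpha$ and $T^\alpha$) is more careful than the paper's version, which passes directly from $S\sim T$ to $\#(s,S)=\#(s,T)$ and thus implicitly relies on the reduction to a common multiple that it spells out only in the preceding lemma on interventions.
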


\begin{proof}
Let $s\in \B_\sigma$. If $\#(s,S^\alpha)=0$, then either $\#(s,S)=0$ or, for every $t\in S^-$ such that $t_{\upharpoonright Dom}=s$, $(t,\F)\not\models \alpha$. In the former case, since $S\sim T$, $\#(s,T)=0$, and thus $\#(s,T^\alpha)=0$. In the latter case, 
we immediately get $\#(s,T^\alpha)=0$.

If instead $\#(s,S^\alpha)>0$, then $\#(s,S^\alpha)= \#(s,S)$ (since $\CO$ cannot tell apart distinct copies of the same assignment, a consequence of lemma \ref{lemma: transfer}). 
Since $S\sim T$, $\#(s,S)=\#(s,T)$. Finally, again by lemma \ref{lemma: transfer}, 
 this equals $\#(s,T^\alpha)$.
\end{proof}

\begin{lem}[Closure under rescaling]\label{lemma: closure under rescaling}
Let $S=(S^-,\F), T=(T^-,\F)$ be causal multiteams of signature $\sigma$, $S \sim T$, $\varphi\in\PCO^{\omega}_{\sigma}$, and $S\models \varphi$. Then $T\models \varphi$.
\end{lem}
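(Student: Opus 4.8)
The plan is to prove the statement by structural induction on $\varphi$, with almost all of the real work already offloaded onto the preparatory lemmas guaranteeing that rescaling is preserved by selection and by intervention. First I would dispose of the degenerate case: if $S$ is empty, then by the definition of $\sim$ so is $T$, and both satisfy $\varphi$ by the empty multiteam property (Lemma \ref{lemma: empty multiteam property}); hence I may assume throughout that $S$ and $T$ are both nonempty, and set $q = |T^-|/|S^-|$.

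For the base cases I distinguish literals from probabilistic atoms. A literal $Y=y$ or $Y\neq y$ contains no probabilistic atom, so by Lemma \ref{lemma: transfer} its satisfaction depends only on the support; since $S\sim T$ forces $\#(s,S)>0 \iff \#(s,T)>0$ for every $s\in\B_\sigma$ (via Lemma \ref{lemma: characterization of rescaling}), we have $Team(S)=Team(T)$ and the literal transfers. For a probabilistic atom the crucial observation is that rescaling leaves all team-probabilities unchanged: Lemma \ref{lemma: characterization of rescaling} gives $\#(s,T)=q\cdot\#(s,S)$ for all $s$, whence $|(T^\alpha)^-|=q\cdot|(S^\alpha)^-|$ and $|T^-|=q\cdot|S^-|$, so that $P_T(\alpha)=P_S(\alpha)$ (and likewise $P_T(\beta)=P_S(\beta)$). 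Consequently every evaluation atom $\Pr(\alpha)\vartriangleright\epsilon$ and every comparison atom $\Pr(\alpha)\vartriangleright\Pr(\beta)$ holds in $T$ exactly when it holds in $S$.

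The inductive steps are then routine once the right lemma is invoked. The cases $\land$ and $\bigsqcup_{i\in I}$ (which subsumes binary $\sqcup$) are immediate: conjunction transfers componentwise, and for the disjunction a single witness index $i$ with $S\models\psi_i$ yields $T\models\psi_i$ by the induction hypothesis, hence $T\models\bigsqcup_{i\in I}\psi_i$. For $\varphi=\alpha\supset\psi$ I would use that $S\models\varphi$ means $S^\alpha\models\psi$; Lemma \ref{lemma: rescaling preserved by selections} gives $S^\alpha\sim T^\alpha$, so the induction hypothesis applied to $\psi$ yields $T^\alpha\models\psi$, i.e. $T\models\varphi$. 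For $\varphi=\SET X=\SET x\cf\psi$, if $\SET X=\SET x$ is inconsistent both sides hold trivially; otherwise $S\models\varphi$ means $S_{\SET X=\SET x}\models\psi$, and Lemma \ref{lemma: rescaling preserved by interventions} gives $S_{\SET X=\SET x}\sim T_{\SET X=\SET x}$, so the induction hypothesis delivers $T_{\SET X=\SET x}\models\psi$, i.e. $T\models\varphi$.

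I do not expect a genuine obstacle, since the substantive content lives in the earlier lemmas. The only points requiring care are the bookkeeping for the empty case (applying the convention on probabilistic atoms and the empty multiteam property consistently) and the verification that $P_T(\alpha)$ is literally rescaling-invariant, which is precisely what tames the probabilistic atoms — the one place where $\PCO^\omega$ can ``see'' cardinalities rather than mere supports. The infinitary disjunction costs nothing beyond the finite case, as satisfaction of $\bigsqcup_{i\in I}\psi_i$ is purely existential in $i$ and the induction hypothesis is available for each disjunct.
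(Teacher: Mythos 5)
Your proposal is correct and follows essentially the same route as the paper: structural induction on $\varphi$, with the atoms handled by rescaling-invariance of the counting probabilities and the conditionals handled by the preservation lemmas for selection and intervention (the paper phrases the atomic case via the equal weights $\epsilon^S_s=\epsilon^T_s$ and reduces literals to $\Pr(X=x)\geq 1$, while you use the multiplier $q$ and the transfer lemma, but these are trivially equivalent). No gaps.
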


\begin{proof}
We prove the statement (simultaneously for all pairs $S,T$ of causal multiteams) by induction on $\varphi$.

\begin{itemize}






\item Case $\varphi$ is $\Pr(\alpha)\geq \epsilon$. (The case for $>$ is analogous.) If $S\models \Pr(\alpha)\geq \epsilon$, this means 
 that   $\frac{|(S^\alpha)|^-}{|S^-|} \geq \epsilon$, i.e that $\sum_{s\in Team((S^\alpha)^-)} \epsilon_s^S \geq \epsilon$. By $S\sim T$, we have $\epsilon_s^S = \epsilon_s^T$ for each such $s$, and thus $\sum_{s\in Team((S^\alpha)^-)} \epsilon_s^T \geq \epsilon$. But since $Team(S^-)=Team(T^-)$ (by $S^-\sim T^-$), we also have   $Team((S^\alpha)^-) = Team((T^\alpha)^-)$. 
  So, $\sum_{s\in Team((T^\alpha)^-)} \epsilon_s^T \geq \epsilon$, from which $\frac{|(T^\alpha)^-|}{|T^-|}\geq \epsilon$, i.e. $T\models\Pr(\alpha)\geq \epsilon$, follows.

\vspace{5pt} 

\item If $\varphi$ is $X=x$ (resp. $X\neq x$) then it is equivalent to $\Pr(X=x)\geq 1$ (resp. $\Pr(X\neq x)\geq 1$), so this case is reduced to the previous one.

\vspace{5pt} 

\item Case $\varphi$ is $\Pr(\alpha)\geq \Pr(\beta)$. (The case for $>$ is analogous.) As in the first case, one can prove that $S\models\Pr(\alpha)\geq \Pr(\beta)$ iff $\sum_{s\in Team((S^\alpha)^-)} \epsilon_s^S \geq \sum_{s\in Team((S^\beta)^-)} \epsilon_s^S$. But since $S\sim T$, $\epsilon_s^S = \epsilon_s^T$ for all $s\in\B_\sigma$; and, as before, $Team((S^\alpha)^-) = Team((T^\alpha)^-)$ and $Team((S^\beta)^-) = Team((T^\beta)^-)$. Thus, the  condition is equivalent to $\sum_{s\in  Team((T^\alpha)^-)} \epsilon_s^T \geq \sum_{s\in  Team((T^\beta)^-)} \epsilon_s^T$, and finally to $T\models\Pr(\alpha)\geq \Pr(\beta)$.

\vspace{5pt}

\item The cases for $\varphi$ of the forms $\psi\land \chi$, $\psi\sqcup \chi$ or $\bigsqcup_{i\in I}\psi_i$ are straightforward.

\vspace{5pt} 

\item Case $\varphi$ is $\SET X = \SET x \cf \psi$. Let $S\models \SET X = \SET x \cf \psi$. Then $S_{\SET X = \SET x}\models \psi$. Since $S\sim T$, by lemma \ref{lemma: rescaling preserved by interventions} we have $S_{\SET X = \SET x} \sim T_{\SET X = \SET x}$. Then, by the inductive hypothesis, $T_{\SET X = \SET x}\models \psi$. Thus  $T\models \SET X = \SET x \cf \psi$.

\vspace{5pt}

\item Case $\varphi$ is $\alpha\supset \chi$. Suppose $S\models \varphi$. Then $S^\alpha\models \chi$. Since $S\sim T$, by lemma \ref{lemma: rescaling preserved by selections} $S^\alpha\sim T^\alpha$. But then by the inductive hypothesis $T^\alpha\models \chi$, and thus $T\models\alpha\supset \chi$.
\end{itemize}
\end{proof}

As was shown in \cite{BarYan2022}, one can associate to every function component $\F$ (of signature $\sigma$) a $\CO_{\sigma}$ formula $\Phi^\F$ that is satisfied precisely by those causal teams that have a function component similar to $\F$ ($\G$ is similar to $\F$ if they have the same set of endogenous variables, and for all endogenous $V$, $\F_V$ and $\G_V$ differ at most for having a different set of dummy arguments). This result lifts to causal \emph{multi}teams, and it gets simpler, since in our semantics a function component $\F$ is similar only to itself. Write $End(\F)$ for the set of endogenous variables of $\F$. Then, with the notations of this paper\footnote{The original formula mentioned the set of constant causal functions, which are not allowed here. Furthermore, it had to refer explicitly to the parent set of causal functions.} the formula reads:
\[
\Phi^\F: \bigwedge_{V\in End(\F)} \eta_\sigma(V) \land \bigwedge_{V\notin End(\mathcal F)  
} \xi_\sigma(V)
\]
where
\[
\eta_\sigma(V): \bigwedge_{\SET w\in Ran(\SET W_V)}(\SET W_V = \SET w \cf V = \F_V(\SET w))
\]
and
\[
\xi_\sigma(V): \bigwedge_{\substack{\SET w\in Ran(\SET W_V) \\ v \in Ran(V)}} V=v \supset (\SET W_V=\SET w \cf V=v).
\]
Notice that $\Phi^\F$ has no occurrences of $\lor$, and so it is also a $\PCO$ formula.

\begin{lem}[Capturing causal laws]\label{lemma: capturing causal laws}
Let $T = (T^-,\G)$ be a causal multiteam of signature $\sigma$, with $T^-\neq \emptyset$. Then
\(
T\models \Phi^\F  \iff \G = \F.
\)
\end{lem}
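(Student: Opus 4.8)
The plan is to prove both implications at once by decomposing $\Phi^\F$ into its two conjunctive blocks and analysing each conjunct separately: I will show that every $\eta_\sigma(V)$ (for $V\in End(\F)$) holds exactly when $V$ is endogenous in $\G$ with $\G_V=\F_V$, and that every $\xi_\sigma(V)$ (for $V\notin End(\F)$) holds exactly when $V$ is exogenous in $\G$. Granting these two facts the equivalence is immediate. If $\G=\F$, then $End(\G)=End(\F)$ and the functions agree on the endogenous variables, so every conjunct holds and $T\models\Phi^\F$. Conversely, the $\eta$-block yields $End(\F)\subseteq End(\G)$ together with $\G_V=\F_V$ there, while the $\xi$-block yields that no variable outside $End(\F)$ is endogenous in $\G$, hence $End(\G)\subseteq End(\F)$; combining these gives $End(\G)=End(\F)$ with matching functions, and since a function component is determined by its endogenous variables and their functions, $\G=\F$.

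The computational core is a single observation about interventions that fix every variable but one. Fix $V$ and a tuple $\SET w\in Ran(\SET W_V)$. In $T_{\SET W_V=\SET w}$ all variables of $\SET W_V$ are frozen to $\SET w$; hence if $V$ is endogenous in $\G$ its recomputed value is the constant $\G_V(\SET w)$ across all assignments, whereas if $V$ is exogenous the $V$-column is untouched (being a non-descendant of everything). As interventions preserve cardinality and $T^-\neq\emptyset$, both intervened multiteams are nonempty, which is what lets me read the embedded counterfactuals as plain statements about $\G_V$. For $\eta_\sigma(V)$: in the endogenous case $T\models \SET W_V=\SET w\cf V=\F_V(\SET w)$ holds iff $\G_V(\SET w)=\F_V(\SET w)$, so ranging over all $\SET w$ this block is exactly $\G_V=\F_V$; in the exogenous case the counterfactual would force the (fixed, nonempty) original $V$-column to equal the single value $\F_V(\SET w)$ simultaneously for every $\SET w$, which is impossible because $\F_V$ is non-constant, so $\eta_\sigma(V)$ fails.

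For $\xi_\sigma(V)$, unwinding $V=v\supset(\SET W_V=\SET w\cf V=v)$ gives the condition $(T^{V=v})_{\SET W_V=\SET w}\models V=v$. If $V$ is exogenous in $\G$, the intervention leaves the $V$-column fixed, so this holds for every $v$ (vacuously when $T^{V=v}$ is empty, by the empty multiteam property), and $\xi_\sigma(V)$ holds. If instead $V$ is endogenous, choose any $s\in T^-$ and set $v_0=s(V)$, so $T^{V=v_0}$ is nonempty; by non-constancy of $\G_V$ there is a $\SET w'$ with $\G_V(\SET w')\neq v_0$, and then the nonempty intervened team carries the constant value $\G_V(\SET w')\neq v_0$, falsifying the corresponding conjunct. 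Hence $\xi_\sigma(V)$ holds iff $V$ is exogenous in $\G$.

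The point requiring the most care — and the only genuine obstacle — is excluding the \emph{wrong} endogenous/exogenous classification in each block. Both refutations rest on the non-constancy of causal functions built into the definition of a causal multiteam, combined with the hypothesis $T^-\neq\emptyset$ and the fact that an intervention fixing all but one variable collapses the remaining column to a constant; without non-constancy a variable could masquerade as the other type. I would also double-check the harmless edge case of empty conditioned subteams $T^{V=v}$ inside $\xi_\sigma$, where the empty multiteam property renders the conjunct vacuously true and so never spoils the forward direction.
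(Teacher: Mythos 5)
Your proof is correct, but it takes a genuinely different route from the paper's. The paper does not argue directly over multiteams at all: it observes that $\Phi^\F$ contains no probabilistic atoms (and no $\lor$), so by the transfer lemma (Lemma~\ref{lemma: transfer}) $T\models\Phi^\F$ iff $Team(T)\models^{ct}\Phi^\F$, and then it appeals to the analogous characterization for causal teams proved as Theorem~3.4 of \cite{BarYan2022}, noting only that the multiteam version is simpler because here a function component is ``similar'' only to itself (no dummy-argument ambiguity to quotient out). You instead verify the equivalence from scratch: the $\eta_\sigma(V)$ conjuncts pin down, via interventions $do(\SET W_V=\SET w)$ that freeze every variable except $V$, that $V$ is endogenous in $\G$ with $\G_V=\F_V$, while the $\xi_\sigma(V)$ conjuncts certify exogeneity; and in both refutation steps you correctly lean on the non-constancy of causal functions together with $T^-\neq\emptyset$ and the fact that interventions preserve cardinality, which is exactly what prevents an exogenous variable from satisfying an $\eta$-block or an endogenous one from satisfying a $\xi$-block. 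Your handling of the possibly empty conditioned subteams $T^{V=v}$ via the empty multiteam property is also the right way to close the forward direction. What your approach buys is a self-contained argument that makes visible why the non-constancy clause in the definition of causal multiteams is indispensable for this lemma; what the paper's reduction buys is brevity and reuse of prior work, at the cost of outsourcing the actual combinatorial content to an external reference.
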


\noindent This can be proved by first showing that this characterization holds for causal teams (despite the -- unsubstantial --  technical differences between our definitions and those in previous literature) by a similar proof as that given for theorem 3.4  in \cite{BarYan2022}; and then using the fact that $T\models \Phi^\F$ iff $Team(T)\models^{ct} \Phi^\F$ (lemma \ref{lemma: transfer}).


 Similarly, there is a family of $\PCO$ formulas that characterize multiteams up to rescaling. Some notations that we introduced earlier for causal multiteams also make sense for multiteams \emph{simpliciter}: if $T^-$ is a multiteam of signature $\sigma$ and $s\in\B_\sigma$, $\#(s,T^-)$ will stand for the number of copies of $s$ in $T^-$. We can write $\epsilon_s^{T^-}$ for $\frac{\#(s,T^-)}{|T^-|}$. Finally, we will write $S^-\sim T^-$ if $\epsilon_s^{S^-} = \epsilon_s^{T^-}$ for all $s\in \B_\sigma$ (or $S^-=T^-=\emptyset$).  Now, given a multiteam $T^- \neq \emptyset$ of signature $\sigma$, we define:
\[
\Theta_{T^-}\dfn \bigwedge_{s\in \B_\sigma} \Pr\left(\bigwedge_{V\in Dom} V = s(V)\right)= \epsilon_s^{T^-}.
\]
As a special case, we let $\Theta_\emptyset \dfn \bot$. 

\begin{lem}\label{lemma: capturing probability distribution}
Let $S = (S^-,\F)$ be a nonempty causal multiteam of signature $\sigma$, and let $T^-$ be a multiteam of signature $\sigma$. Then:
\[
S\models \Theta_{T^-} \iff S^- \sim T^-.
\]
Furthermore, \corrf{for any $S$,} $S\models \Theta_\emptyset \iff S^- = \emptyset$. 
\end{lem}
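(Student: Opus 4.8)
The plan is to unfold the definition of $\Theta_{T^-}$ and reduce the whole statement to the semantics of the evaluation atoms, exploiting the fact that each conjunct merely records the probabilistic weight the multiteam assigns to a single assignment. The first step I would carry out is a purely syntactic observation: for each $s\in\B_\sigma$ the $\CO$ formula $\delta_s\dfn\bigwedge_{V\in Dom}V=s(V)$ is satisfied by a singleton $(\{t\},\F)$ exactly when $t$ is a copy of $s$, i.e. when $t_{\upharpoonright Dom}=s$. Hence, whenever $S$ is nonempty, $P_S(\delta_s)=\frac{\#(s,S^-)}{|S^-|}=\epsilon_s^{S^-}$. This identity is the engine of the whole proof.

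Next I would establish the main equivalence under the assumptions that $S$ is nonempty and $T^-\neq\emptyset$, so that $\Theta_{T^-}$ really is the displayed conjunction and each $\epsilon_s^{T^-}$ is well defined. By the clause for $\land$, $S\models\Theta_{T^-}$ holds iff $S\models\Pr(\delta_s)=\epsilon_s^{T^-}$ for every $s\in\B_\sigma$. Since $S$ is nonempty, the empty-multiteam conventions are inert, and the abbreviation $\Pr(\delta_s)=\epsilon_s^{T^-}$ (namely $\Pr(\delta_s)\geq\epsilon_s^{T^-}\land\Pr(\delta_s)\leq\epsilon_s^{T^-}$, using that $\neg$ is classical on single assignments, so $P_S(\neg\delta_s)=1-P_S(\delta_s)$) holds precisely when $P_S(\delta_s)=\epsilon_s^{T^-}$. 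Combining this with the first step yields $\epsilon_s^{S^-}=\epsilon_s^{T^-}$ for all $s$, which is exactly the definition of $S^-\sim T^-$ in the case where both are nonempty. For the remaining degenerate case ($S$ nonempty, $T^-=\emptyset$) I would simply note that $\Theta_{T^-}=\Theta_\emptyset=\bot$, that $S\not\models\bot$ (see below), and that $S^-\sim\emptyset$ fails as $S^-\neq\emptyset$; so both sides are false and the equivalence still holds.

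Finally I would prove the \emph{furthermore} clause, which needs the auxiliary fact that $S\models\bot$ iff $S$ is empty, for arbitrary $S$. Since $\bot$ is $X=x\cf X\neq x$, we have $S\models\bot$ iff $S_{X=x}\models X\neq x$; because the intervention $do(X=x)$ sets $X$ to $x$ on every assignment and preserves cardinality, $S_{X=x}\models X\neq x$ holds iff $(S_{X=x})^-$ is empty iff $S^-$ is empty. This gives $S\models\Theta_\emptyset$ iff $S^-=\emptyset$ directly and for every $S$. I do not expect a genuine obstacle here: the only points that demand care are the bookkeeping distinction between $\B_\sigma$-assignments and the $Key$-carrying assignments of $S^-$ (so that ``number of copies of $s$'' is counted correctly), and the reconciliation of the various empty/degenerate conventions — the default truth of probabilistic atoms on empty multiteams, the defaults folded into the abbreviation $\Pr(\cdot)=\epsilon$, and the ``either both empty'' branch in the definition of $\sim$ — all of which must line up in the boundary cases.
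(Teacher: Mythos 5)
Your proposal is correct and follows essentially the same route as the paper's proof: unfold $\Theta_{T^-}$ conjunct by conjunct, observe that a singleton $(\{t\},\F)$ satisfies $\bigwedge_{V\in Dom}V=s(V)$ exactly when $t_{\upharpoonright Dom}=s$, so that $P_S$ of that conjunction equals $\epsilon_s^{S^-}$, and conclude $\epsilon_s^{S^-}=\epsilon_s^{T^-}$ for all $s$; the $T^-=\emptyset$ case is handled via $\Theta_\emptyset=\bot$. Your extra care with the $\Pr(\cdot)=\epsilon$ abbreviation, the empty-multiteam conventions, and the explicit verification that $S\models\bot$ iff $S^-=\emptyset$ only spells out steps the paper leaves implicit.
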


\begin{proof}
If $T^- =\emptyset$, then $S\models \Theta_\emptyset = \bot$ iff $S^- =\emptyset$. 

Suppose $T^- \neq\emptyset$. Then, given the assumption that $S^- \neq \emptyset$,
\begin{align*}
S\models \Theta_{T^-} & \iff \text{ for all } s\in\B_\sigma,  P_S\Big(\bigwedge_{V\in Dom} V = s(V)\Big)= \epsilon_s^{T^-}\\
& \iff \text{ for all } s\in\B_\sigma,  \frac{|\{t\in S^- \mid (\{t\},\F) \models \bigwedge_{V\in Dom} V = s(V)\}|}{|S^-|}= \epsilon_s^{T^-}\\
& \corrf{\iff \text{ for all } s\in\B_\sigma,  \frac{|\{t\in S^- \mid t_{\upharpoonright Dom} = s\}|}{|S^-|}= \epsilon_s^{T^-}}\\
& \iff \text{ for all } s\in\B_\sigma,  \frac{\#(s,S^-)}{|S^-|}= \epsilon_s^{T^-}\\
 & \iff   \text{ for all } s\in\B_\sigma, \quad  \epsilon_s^{S^-}= \epsilon_s^{T^-} \\
 & \iff S^- \sim T^-.
\end{align*}
\end{proof}



\begin{proof}[Proof of Theorem \ref{theorem: expressivity of PCOinf}]
$\Rightarrow$) Let $\Gamma\subseteq\PCO^\omega_{\sigma}$. Now $\K_\Gamma$ is 
closed under rescaling by lemma \ref{lemma: closure under rescaling}, and has the empty multiteam property by lemma \ref{lemma: empty multiteam property}.

$\Leftarrow$) Suppose $\K\neq\emptyset$ \corrf{contains all the empty causal multiteams of signature $\sigma$} 
 and is closed under 
 rescaling. 
Define the formula:
\[
\Psi_\K \dfn \bigsqcup_{T=(T^-,\F)\in\K}(\Theta_{T^-} \land \Phi^\F)
\]
where $\Theta_{T^-}$ is as in lemma \ref{lemma: capturing probability distribution}; and $\Phi^\F$ is as in lemma  \ref{lemma: capturing causal laws}, with the exception that, if $T=(\emptyset,\F)$, we take $\top$ 
 in stead of $\Phi^\F$. We prove that $\K = \K_{\Psi_\K}$.

$\subseteq$) Suppose $T = (T^-,\F)\in \K$. Now $T\models \Theta_{T^-}$ (by lemma \ref{lemma: capturing probability distribution}, since $T^-\sim T^-$). In case $T^- = \emptyset$, simply observe that, by the empty multiteam property,  we have $T\models \Theta_{T^-} \land \top$, and thus $T\models \Psi_\K$, (i.e. $T\in \K_{\Psi_\K}$). 
In case $T^-\neq  \emptyset$, we also have $T\models \Phi^\F$ (by lemma \ref{lemma: capturing causal laws}). 
 Thus again $T\models \Psi_\K$. 

$\supseteq$) Suppose $T\in \K_{\Psi_\K}$. Then there is either an $S=(S^-,\G)\in\K$ with $S^-\neq \emptyset$ such that $T=(T^-,\F)\models \Theta_{S^-} \land \Phi^\G$; or an $S=(S^-,\G)\in\K$ with $S^-=\emptyset$ such that $T\models \Theta_{S^-}$, i.e. $T\models\bot$. In the first case, by Lemma \ref{lemma: capturing probability distribution} we have $S^- \sim T^-$, and by Lemma \ref{lemma: capturing causal laws} we have $\F = \G$. 
 Since $\K$ is closed under rescaling, we conclude $T\in \K$.

In the second case, since $T\models \bot$, we have $T^-=\emptyset$; thus $T\in\K$ since $\K$ has the empty multiteam property.
\end{proof}

\end{document}